\pdfminorversion=4
\documentclass[12pt]{article}
%\usepackage[letterpaper, left=.9in, top=.8in, right=.9in, bottom=.4in,nohead,includefoot,verbose, ignoremp]{geomfommetry}
%\usepackage[letterpaper, left=1in, top=1in, right=1in, bottom=0.6in,nohead,includefoot,verbose, ignoremp]{geometry}

%% THE NEXT TWO LINES INSERT THE PACKAGES FOR JASA FORMAT:
%\usepackage[default]{jasa_harvard}    % 	for formatting c-itations in text

%% CHANGING THE 'AND' IN THE HARVARD BIBLIOGRAPHY PACKAGE TO WHAT IT OUGHT TO BE
%\renewcommand{\harvardand}{and}

\usepackage[square,sort,comma,numbers]{natbib}
\usepackage{times} % helvet} % arial} % courier} % palatino} %
\usepackage{latexsym,epsfig,amssymb,amsmath,amsfonts,graphicx,amsthm,ifthen,pifont,comment,enumerate,setspace,multirow,color}
\usepackage{mhequ}
\usepackage{csquotes}

\usepackage{JASA_manu}
\usepackage{bm}
\usepackage[ruled,vlined]{algorithm2e}
\usepackage{epstopdf}
\usepackage{xcolor}
\usepackage{tikz}
\usepackage{hyperref}
\usetikzlibrary{bayesnet}
\usepackage{epstopdf}
\usepackage{multicol,booktabs,colortbl,tabularx}
\usepackage{JASA_manu}

%%%%%%%%%%%%%%%%%%%%%%%%%%%%%%%%
\usepackage{listings}
\usepackage{xcolor}

\lstnewenvironment{R}{\lstset{
  language=R,columns=fixed,
%  backgroundcolor=\color{yellow!30!white},
  basicstyle={\footnotesize\ttfamily\color{black}},
  keywordstyle=\color{black},
  stringstyle=\color{black!50!black},
}}{}

\lstdefinestyle{Rstyle}
{
  language=R,columns=fixed,
%  backgroundcolor=\color{yellow!30!white},
  basicstyle={\footnotesize\ttfamily\color{black}},
  keywordstyle=\color{black},
  stringstyle=\color{black!50!black},
}
%%%%%%%%%%%%%%%%%%%%%%%%%%%%%%%%

%%%%%%%%%%%%%%%%%%%%%%%%%%%%%%%%

%\usepackage[numbers]{natbib}
%\usepackage{bm}
%\usepackage[numbers]{natbib}
%\bibliographystyle{plain}

\def\begmat{\left(\begin{array}}\def\endmat{\end{array}\right)}

\def\bi{\begin{itemize}\setlength{\itemsep}{0pt}} \def\ei{\end{itemize}}

\def\bl{\begin{list}{\labelitemi}{\leftmargin=1em}\setlength{\itemsep}{-2.5pt}}  \def\el{\end{list}}
\def\bn{\begin{enumerate}} \def\en{\end{enumerate}}
\def\bt{\begin{table}[h]} \def\et{\end{table}}
\def\bc{\begin{center}} \def\ec{\end{center}}

\newtheorem{theorem}{Theorem}[section]

\theoremstyle{plain}

\theoremstyle{plain}

\theoremstyle{remark}

\theoremstyle{plain}

\newcommand \be{\begin{equs}}
\newcommand \ee{\end{equs}}

% The following package calls minimizes spacings around headings ...
%\usepackage[compact]{titlesec}
%\titlespacing{\section}{0pt}{7pt}{5pt}
%\titlespacing{\subsection}{0pt}{7pt}{5pt}
%\titlespacing{\subsubsection}{0pt}{7pt}{5pt}
%\usepackage[compact]{titlesec}
%\titlespacing{\section}{0pt}{4.25pt}{1pt}
%\titlespacing{\subsection}{0pt}{4.25pt}{3pt}
%\titlespacing{\subsubsection}{0pt}{3.5pt}{2.75pt}

%\renewcommand{\baselinestretch}{1.1}  % double space for NSF etc

\begin{document}
%\begin{center}
%{\Large {\bf Bayesian Testing Procedure for Regulatory Approvals}}
%\end{center}
% I have a Freedome to write any paper what I want. 
% Let me use it and write as many as papers while I am still young.

\title{
\begin{Large}
\textbf{%\underline{\textbf{Tutorials in Biostatistics}} \\
Using Bayesian Statistics in Confirmatory Clinical Trials in the Regulatory Setting}
\end{Large}
}
\author{Se Yoon Lee\\
\begin{small}
seyoonlee.stat.math@gmail.com
\end{small}
\\
\begin{small}
Department of Statistics, Texas A\&M University, 3143 TAMU, College Station, TX, U.S.A.
\end{small}}
\date{}
\maketitle

% Possible Journal
% Therapeutic Innovation & Regulatory Science

\begin{abstract}
\noindent 
Bayesian statistics plays a pivotal role in advancing medical science by enabling healthcare companies, regulators, and stakeholders to assess the safety and efficacy of new treatments, interventions, and medical procedures. The Bayesian framework offers a unique advantage over the classical framework, especially when incorporating prior information into a new trial with quality external data, such as historical data or another source of co-data. In recent years, there has been a significant increase in regulatory submissions using Bayesian statistics due to its flexibility and ability to provide valuable insights for decision-making, addressing the modern complexity of clinical trials where frequentist trials are inadequate. For regulatory submissions, companies often need to consider the frequentist operating characteristics of the Bayesian analysis strategy, regardless of the design complexity. In particular, the focus is on the frequentist type I error rate and power for all realistic alternatives. This tutorial review aims to provide a comprehensive overview of the use of Bayesian statistics in sample size determination in the regulatory environment of clinical trials. Fundamental concepts of Bayesian sample size determination and illustrative examples are provided to serve as a valuable resource for researchers, clinicians, and statisticians seeking to develop more complex and innovative designs.
\baselineskip=17pt
\end{abstract}
\noindent\textsc{Keywords}: {Bayesian Hypothesis Testing; Sample Size Determination; Regulatory Environment; Frequentist Operating Characteristics}

%Reading source:
% p10, 70 Bayesian Adaptive Methods for Clinical Trials, Chow, Shein-Chung
% Introduction: Bayesian Approaches to Randomized Trials, Spiegelhalter, David J.
% Clinical Trials and Sample Size Considerations: Another Perspective Sandra J. Lee
% Use of historical control data for assessing treatment effects in clinical trials.
% In this review, the terms clinical trials, clinical studies, and clinical investigations are interchangeable. 
% https://www.nejm.org/clinical-trials-series
\section{Introduction}\label{sec:Introduction}
Clinical trials are a critical cornerstone of modern healthcare, serving as the crucible in which medical innovations are tested, validated, and ultimately brought to patients \citep{friedman2015fundamentals}. Traditionally, these trials adhered to frequentist statistical methods since the 1940s, providing valuable insights into treatment effects. However, they may fall short in addressing the increasing complexity of modern clinical trials, such as personalized medicine \citep{zhou2008bayesian, fountzilas2022clinical}, innovative study designs \citep{carlin2022bayesian, wilson2021bayesian}, and the integration of real-world data into randomized controlled trials \citep{yue2018leveraging, sherman2016real, wang2019propensity}, among many other challenges \citep{woodcock2017master,moscicki2017drug,bhatt2016adaptive}.

These new challenges commonly necessitate innovative solutions. The US 21st Century Cures Act and the US Prescription Drug User Fee Act VI include provisions to advance the use of complex innovative trial designs \citep{CID2020FDAGuidance}. Generally, complex innovative trial designs have been considered to refer to complex adaptive, Bayesian, and other novel clinical trial designs, but there is no fixed definition because what is considered innovative or novel can change over time \citep{berry2006bayesian,jack2012bayesian,landau2013sample,CID2020FDAGuidance}. A common feature of many of these designs is the need for simulations rather than mathematical formulae to estimate trial operating characteristics. This highlights the growing embrace of complex innovative trial designs in regulatory submissions.

In this paper, our particular focus is on Bayesian methods. Guidance from the U.S. Food and Drug Administration (FDA) \citep{Bayesian2010FDAGuidance} defines Bayesian statistics as an approach for learning from evidence as it accumulates. Bayesian methods offer a robust and coherent probabilistic framework for incorporating prior knowledge, continuously updating beliefs as new data emerge, and quantifying uncertainty in the parameters of interest or outcomes for future patients \citep{spiegelhalter2004bayesian}. The Bayesian approach aligns well with the iterative and adaptive nature of clinical decision-making, offering opportunities to maximize clinical trial efficiency, especially in cases where data are sparse or costly to collect.

The past ten years have seen notable demonstrations of Bayesian statistics addressing various types of modern complexities in clinical trial designs. For example, Bayesian group sequential designs are increasingly used for seamless modifications in trial design and sample size to expedite the development process of drugs or medical devices, while potentially leveraging external resources \citep{wilber2010comparison, gsponer2014practical, bohm2020efficacy, schmidli2020beyond,schmidli2007bayesian}. One recent example is the COVID-19 vaccine trial, which includes four Bayesian interim analyses with the option for early stopping to declare vaccine efficacy before the planned trial end \citep{polack2020safety}. Other instances where Bayesian approaches have demonstrated their promise are umbrella, basket, or platform trials under master protocols \citep{Master2022FDAGuidance}. In these cases, Bayesian adaptive approaches facilitate the evaluation of multiple therapies in a single disease, a single therapy in multiple diseases, or multiple therapies in multiple diseases \citep{berry2016response, chu2018bayesian, hirakawa2018master,hobbs2018bayesian, dodd2016design,quintana2023design,alexander2018adaptive,i2022clinical}. Moreover, Bayesian approaches provide an effective means to integrate multiple sources of evidence, a particularly valuable aspect in the development of pediatric drugs or medical devices where small sample sizes can impede traditional frequentist approaches \citep{wang2022bayesian,psioda2020bayesian,gamalo2017statistical}. In such cases, Bayesian borrowing techniques enable the integration of historical data from previously completed trials, real-world data from registries, and expert opinion from published resources. This integration provides a more comprehensive and probabilistic framework for information borrowing across different sub-populations \citep{ibrahim2000power,richeldi2022trial,muller2023bayesian,Pediatric2016FDAGuidance}.

The basic tenets of good trial design are consistent for both Bayesian and frequentist trials. It is crucial to note that, for regulatory submissions, sponsors (typically pharmaceutical or medical device companies) should adhere to the principles of good clinical trial design and execution, including minimizing bias, as outlined in regulatory guidances \citep{ADAPTIVEMD2016FDAGuidance, ADAPTIVEDRUG2019FDAGuidance, Bayesian2010FDAGuidance}.

In this paper, we provide a pedagogical understanding of Bayesian designs by elucidating key concepts and methodologies through illustrative examples. For the simplicity of explanation, we apply Bayesian methods to construct single-stage design, two-stage design, and multiple-arm parallel design for single-arm trials, but the illustrated key design principles can be generalized to multiple arms trials. Specifically, our focus in this tutorial is on Bayesian sample size determination, which is most useful in confirmatory clinical trials, including late phase II or III trials in the drug development process or pivotal trials in the medical device development process. We highlight the advantages of Bayesian designs, address potential challenges, examine their alignment with evolving regulatory science, and ultimately provide insights into the use of Bayesian statistics for regulatory submissions.

This paper is organized as follows. We explain a simulation-based approach to determine the sample size of a Bayesian design in Section \ref{sec:Sizing a Bayesian Trial}, which is consistently used throughout the paper. The specification of the prior distribution for Bayesian submission is discussed in Section \ref{sec:Specification of Prior Distributions}. Two important Bayesian decision rules, namely, the posterior probability approach and the predictive probability approach, are illustrated in Sections \ref{sec:Bayesian Decision Rule - Posterior Probability Approach} and \ref{sec:Bayesian Decision Rule - Predictive Probability Approach}, respectively. A Bayesian approach using Bayesian hierarchical modeling to address the multiplicity issue is illustrated in Section \ref{sec:Multiplicity Adjustments}. In Section \ref{sec:Historical Data Information Borrowing}, a Bayesian information borrowing technique is illustrated with some regulatory considerations. We conclude the paper with a discussion in Section \ref{sec:Discussion}.

%\textcolor{blue}{This paper is organized as follow.}

%Gelman, A., Carlin, J. B., Stern, H. S., Dunson, D. B., Vehtari, A., & Rubin, D. B. (2013). Bayesian Data Analysis (3rd ed.). CRC Press.
%Berry, D. A. (2004). Bayesian clinical trials. Nature Reviews Drug Discovery, 3(8), 731-737.
%Chow, S. C., Chang, M., & Pong, A. (2008). Statistical Considerations in the Design of Bioequivalence Studies. Biometrical Journal, 50(1), 8-28.
%Kim, H. J., & Bang, H. (2019). Bayesian Clinical Trials from a Regulatory Perspective. Journal of Biopharmaceutical Statistics, 29(2), 208-232.
%Statistical Reporting \citep{pocock1987statistical}

\section{Sizing a Bayesian Trial}\label{sec:Sizing a Bayesian Trial}
\subsection{A Simulation Principle of Bayesian Sample Size Determination}\label{subsec:Simulation-based Bayesian Sample Size Determination}
Although practical and ethical issues need to be considered, one's initial reasoning when determining the trial size should focus on the scientific requirements (Pocock, 2013). Scientific requirements refer to the specific criteria, conditions, and standards that must be met in the design, conduct, and reporting of scientific research to ensure the validity, reliability, and integrity of the findings. Much like frequentist approaches for determining the sample size of the study (Chow, 2017), its Bayesian counterpart also proceeds by first defining a success criterion to align with the primary objective of the trial. Subsequently, the number of subjects is determined to provide a reliable answer to the questions addressed within regulatory settings.

%Notably, a fundamental difference from frequentist approaches is the incorporation of prior distributions and test statistics tailored to Bayesian hypothesis testing.

%In this spirit, regulatory bodies recommend that the sponsor decides in advance on the minimum sample size based on safety or effectiveness endpoints, or a combination of both, as these endpoints play a crucial role in determining how a drug or medical device will be labeled after marketing approval.

In the literature, various studies have explored the sizing of Bayesian trials \cite{inoue2005relationship, katsis1999bayesian, joseph1995sample, rubin1998sample, joseph1995some, lindley1997choice, wang2002simulation}. Among these, the simulation-based method proposed by \cite{wang2002simulation} stands out as popular, and it was further explored by \citep{psioda2018bayesian, chen2011bayesian} for practical applications. This method is widely used by many healthcare practitioners, including design statisticians at companies or universities, for its practical applicability in a broad range of Bayesian designs. Furthermore, this method, with a particular prior setting, is well-suited for the regulatory environment, where the evaluation of the frequentist operating characteristics of the Bayesian design is critical. This will be discussed in Subsection \ref{subsec:Calibration of Bayesian Trial Design}.

%(probability of erroneously approving an ineffective or unsafe drug/device (frequentist type I error rate) and the probability of erroneously disapproving an effective or safe drug/device (frequentist type II error rate) 

In this section, we outline the framework of the authors' work \cite{wang2002simulation}. Similar to the notation in Reference \citep{lehmann1986testing} assume that the endpoint has probability density function $f(y|\theta)$, where the $\theta \in \Theta$ represents the parameter of main interest. The hypotheses to be investigated are the null and alternative hypotheses, 
\begin{align}
\label{eq:hypothesis_to_test}
\mathcal{H}_{0}: \theta \in \Theta_{0} \text{ versus } \mathcal{H}_{a}: \theta \in \Theta_{a},
\end{align}
where $\Theta_{0}$ and $\Theta_{a}$ represent the disjoint parameter spaces for the null and alternative hypotheses, respectively. $\Theta = \Theta_{0} \cup \Theta_{a}$ denotes the entire parameter space. Suppose that the objective of the study is to evaluate the efficacy of a new drug, achieved by rejecting the null hypothesis. Let $\textbf{y}_{N} = (y_{1},\cdots,y_{N})^{\top}$ denotes a set of $N$ outcomes such that $y_{i}$ ($i=1,\cdots,N$) is identically and independently distributed according a distribution $f(y|\theta)$. 

Throughout the paper, we assume that the parameter space $\Theta$ is a subset of real numbers. The range of the parameter space $\Theta$ is determined by the type of outcomes. For example, for continuous outcomes $y$, the distribution $f(y|\theta)$ may be a normal distribution, where the parameter space is the set of real numbers, $\Theta = \mathbb{R}$; and for binary outcomes, the distribution $f(y|\theta)$ is the Bernoulli distribution, where the parameter space is the set of fractional numbers, $\Theta = [0,1]$. In this formulation, typically, the hypotheses (\ref{eq:hypothesis_to_test}) are one-sided; for example, $\mathcal{H}_0: \theta \leq \theta_0$ versus $\mathcal{H}_a: \theta > \theta_0$ or $\mathcal{H}_0: \theta \geq \theta_0$ versus $\mathcal{H}_a: \theta < \theta_0$. Throughout the paper, when we denote hypotheses in the abstract form (\ref{eq:hypothesis_to_test}), it is considered a one-sided superiority test for the coherency of the paper. The logic explained in this paper can be generalized to a form of a two-sided test, non-inferiority test, or equivalence test in a similar manner, but discussion on these forms is out of scope for this paper.

The simulation-based approach incorporates two essential components: the `sampling prior' $\pi_{s}(\theta)$ and the `fitting prior' $\pi_{f}(\theta)$. The sampling prior is utilized to generate observations $\textbf{y}_{N}$ by considering the scenario of `what if the parameter $\theta$ is likely to be within a specified portion of the parameter space?' The fitting prior is employed to fit the model once the data $\textbf{y}_{N}$ has been obtained upon completion of the study. We note that the sampling prior should be a proper distribution, while the fitting prior does not need to be proper as long as the resulting posterior, $\pi(\theta|\textbf{y}_{N}) \propto f(\textbf{y}_{N}|\theta)\cdot \pi_{f}(\theta)$, is proper. We also note that the sampling prior is a unique Bayesian concept adopted in the simulation-based approach, whereas the fitting prior refers to the prior distributions used in the daily work of Bayesian data analyses \citep{gelman1995bayesian}, not confined to the context of sample size determination.

In the following, we illustrate how to calculate the Bayesian test statistic, denoted as $T(\textbf{y}_N)$, using the posterior probability approach by using a sampling prior and a fitting prior. (Detail of the posterior probability approach is illustrated in Section \ref{sec:Bayesian Decision Rule - Posterior Probability Approach}) First, one generates a value of parameter of interest $\theta$ from the sampling prior $\pi_{s}(\theta)$, and then generates the outcome vector $\textbf{y}_{N} = (y_{1},\cdots,y_{N})^{\top}$ based on that $\theta$. This process produces $N$ outcomes $\textbf{y}_{N}$ from its prior predictive distribution (also called, marginal likelihood function)
\begin{align}
\label{eq:prior_pred_dist}
\textbf{y}_{N} \sim f_{s}(\textbf{y}_{N}) &= \int f(\textbf{y}_{N}|\theta) \pi_{s}(\theta) d\theta.
\end{align}
After that, one calculates the posterior distribution of $\theta$ given the data $\textbf{y}_{N}$, which is 
\begin{align}
\label{eq:posterior_dist}
\pi_{f}(\theta|\textbf{y}_{N}) &= \frac{f(\textbf{y}_{N}|\theta) \pi_{f}(\theta)}{\int f(\textbf{y}_{N}|\theta) \pi_{f}(\theta) d\theta} .
\end{align}

Eventually, a measure of evidence to reject the null hypothesis is summarized by the Bayesian test statistics, the posterior probability of the alternative hypothesis being true given the observations $\textbf{y}_{N}$, which is $$T(\textbf{y}_N) = \mathbb{P}_f[\theta \in \Theta_a | \textbf{y}_N] = \int \textbf{1}\{\theta \in \Theta_a\} \pi_{f}(\theta|\textbf{y}_{N}) d \theta,$$ where the indicator function $\textbf{1}\{A\}$ is 1 if A is true and 0 otherwise. A typical success criterion takes the form of
\begin{align}
\label{eq:Bayesian_test_procedure}
\text{Study Sucess} = \textbf{1}\{ T(\textbf{y}_N) > \lambda \} =\textbf{1}\{\mathbb{P}_{f}[\theta \in \Theta_{a} | \textbf{y}_{N}] > \lambda \},
\end{align}
where $\lambda \in [0,1]$ is a pre-specified threshold value.
%Suppose that if $T(\textbf{y}_N)$ is greater than a pre-specified threshold value $\lambda \in [0,1]$, then we reject the null hypothesis $\mathcal{H}_{0}: \theta \in \Theta_{0}$.

At this point, we introduce a key quantity to measure the expected behavior of the Bayesian test statistics $T(\textbf{y}_N)$—the probability of study success based on the Bayesian testing procedure—by considering the idea of repeated sampling of the outcomes $\textbf{y}_{N} \sim f_{s}(\textbf{y}_{N})$:
\begin{align}
\label{eq:Bayesian_Power_Function}
\beta_{\Theta}^{(N)} &= \mathbb{P}_{s}[T(\textbf{y}_{N}) > \lambda | \textbf{y}_{N} \sim f_{s}(\textbf{y}_{N})]
%\nonumber
=\int \textbf{1}\{\mathbb{P}_{f}[\theta \in \Theta_{a} | \textbf{y}_{N}] > \lambda \} f_{s}(\textbf{y}_{N}) d\textbf{y}_{N}.
%\\
%\nonumber
%&= \mathbb{E}_{s}[\textbf{1}\{T(\textbf{y}_{N}) > \lambda \} |\textbf{y}_{N} \sim f_{s}(\textbf{y}_{N})]\\
%\nonumber
%&=\mathbb{E}_{s}[\textbf{1}\{\mathbb{P}_{f}[\theta \in \Theta_{a} | \textbf{y}_{N}] > \lambda \} | \textbf{y}_{N} \sim f_{s}(\textbf{y}_{N})].
\end{align}
In the notation $\beta_{\Theta}^{(N)}$ (\ref{eq:Bayesian_Power_Function}), the superscript `$N$' indicates the dependence on the sample size $N$, and the subscript `$\Theta$' represents the support of the sampling prior $\pi_{s}(\theta)$. Note that in the equation (\ref{eq:Bayesian_Power_Function}), the probability inside of $\textbf{1}\{A\}$ (that is, $\mathbb{P}_{f}[\cdot]$) is computed with respect to the posterior distribution $\pi_{f}(\theta|\textbf{y}_{N})$ (\ref{eq:posterior_dist}) under the fitting prior, while the probability outside (that is, $\mathbb{P}_{s}[\cdot]$) are taken with respect to the marginal distribution $f_{s}(\textbf{y}_{N})$ (\ref{eq:prior_pred_dist}) under the sampling prior. Note that the value $\beta_{\Theta}^{(N)}$ (\ref{eq:Bayesian_Power_Function}) also depends on the choice of the threshold ($\lambda$), the parameter spaces corresponding to the null and alternative hypothesis ($\Theta_{0}$ and $\Theta_{a}$), and the sampling and fitting priors ($\pi_{s}(\theta)$ and $\pi_{f}(\theta)$).

For many cases, Monte Carlo simulation is employed to approximate the value of $\beta_{\Theta}^{(N)}$ (\ref{eq:Bayesian_Power_Function}) because it is not expressed as a closed-form formula:
\begin{align*}
\hat{\beta}_{\Theta}^{(N)} \approx \frac{1}{R}\sum_{r=1}^{R}
\textbf{1}\{\mathbb{P}_{f}[\theta \in \Theta_{a} | \textbf{y}_{N}^{(r)}] > \lambda \}, \quad  \textbf{y}_{N}^{(r)}\sim f_{s}(\textbf{y}_{N}), \quad (r = 1,\cdots,R),
\end{align*}
where $R$ is the number of simulated datasets. When Monte Carlo simulation is used for regulatory submission in a Bayesian design to estimate the expected behavior of the Bayesian test statistics $T(\textbf{y}_N)$, typically, one uses $R=10,000$ or $100,000$ and also reports a 95\% confidence interval for $\beta_{\Theta}^{(N)}$ to describe the precision of the approximation. Often, for complex designs, computing the Bayesian test statistic $T(\textbf{y}_{N}) = \mathbb{P}_{f}[\theta \in \Theta_{a} | \textbf{y}_{N}]$ itself requires the use of Markov Chain Monte Carlo (MCMC) sampling techniques, such as the Gibbs sampler or Metropolis-Hastings algorithm \citep{gamerman2006markov, andrieu2003introduction, lee2022gibbs}. In such cases, a nested simulation technique is employed to approximate $\beta_{\Theta}^{(N)}$ (\ref{eq:Bayesian_Power_Function}) (Algorithm \ref{alg:nested simulation}). It is important to note that when MCMC techniques are used, regulators recommend sponsors check the convergence of the Markov chain to the posterior distribution \citep{Bayesian2010FDAGuidance}, using various techniques to diagnose nonconvergence \citep{gelman1995bayesian, gamerman2006markov}.

Now, we are ready to apply the above concept to Bayesian sample size determination. We consider two different populations from which the random sample of $N$ observations $\textbf{y}_{N}$ may have been drawn, with one population corresponding to the null parameter space $\Theta_{0}$ and another population corresponding to the alternative parameter space $\Theta_{a}$—similar to Neyman \& Pearson’s approach (based on hypothesis testing and type I and II error rates) \citep{neyman1933ix}. 

This can be achieved by separately considering two scenarios: `what if the parameter $\theta$ is likely to be within a specified portion of the null parameter space?' and `what if the parameter $\theta$ is likely to be within a specified portion of the alternative parameter space?' Following notations from \citep{chen2011bayesian}, let $\bar{\Theta}_{0}$ and $\bar{\Theta}_{a}$ denote the closures of $\Theta_{0}$ and $\Theta_{a}$, respectively. In this formulation, the null sampling prior $\pi_{s0}(\theta)$ is the distribution supported on the boundary $\Theta_{B} = \bar{\Theta}_{0} \cap \bar{\Theta}_{a}$, and the alternative sampling prior $\pi_{s1}(\theta)$ is the distribution supported on the set $\Theta_{a}^{*}\subset \Theta_{a}$. For a one-sided test, such as $\mathcal{H}_{0}: \theta \leq \theta_{0}$ versus $\mathcal{H}_{a}: \theta > \theta_{0}$, one may choose the null sampling prior $\pi_{s0}(\theta)$ as a point-mass distribution at $\theta_{0}$, and the alternative sampling prior $\pi_{s1}(\theta)$ as a distribution supported on $\Theta_{a}^{*}\subset(\theta_{0},\infty)$. %The null and alternative sampling priors for a two-sided test can be assigned in a similar manner.

Eventually, for a given $\alpha > 0$ and $\beta > 0$, the Bayesian sample size is the value
\begin{align}
\label{eq:optimal_Bayesian_Sample_Size}
N = \text{max}\left(\text{min}\{N : \beta_{\Theta_{B}}^{(N)} \leq \alpha \},
 \text{min}\{N : \beta_{\Theta_{a}^{*}}^{(N)} \geq 1 - \beta \} \right),
\end{align}
where $\beta_{\Theta_{B}}^{(N)}$ and $\beta_{\Theta_{a}^{*}}^{(N)}$ are given in (\ref{eq:Bayesian_Power_Function}) corresponding to $\pi_{s}(\theta) = \pi_{s0}(\theta)$ and $\pi_{s}(\theta) = \pi_{s1}(\theta)$, respectively. The values of $\beta_{\Theta_{B}}^{(N)}$ and $\beta_{\Theta_{a}^{*}}^{(N)}$ are referred to as the Bayesian type I error and power, while $1 - \beta_{\Theta_{a}^{*}}^{(N)}$ is referred to as the Bayesian type II error. The sample size $N$ satisfying the condition $\beta_{\Theta_{B}}^{(N)} \leq \alpha $ meets the Bayesian type I error requirement. Similarly, the sample size $N$ satisfying the condition $\beta_{\Theta_{a}^{*}}^{(N)} \geq 1 - \beta$ meets the Bayesian Power requirement. Eventually, the selected sample size $N$ (\ref{eq:optimal_Bayesian_Sample_Size}) is the minimum value that simultaneously satisfies the Bayesian type I error and power requirement. Typical values for $\alpha$ are 0.025 for a one-sided test and 0.05 for a two-sided test, and $\beta$ is typically set to 0.1 or 0.2 regardless of the direction of the alternative hypothesis \citep{Bayesian2010FDAGuidance}.

\begin{figure}[h!]
\centering
\includegraphics[width=\textwidth]{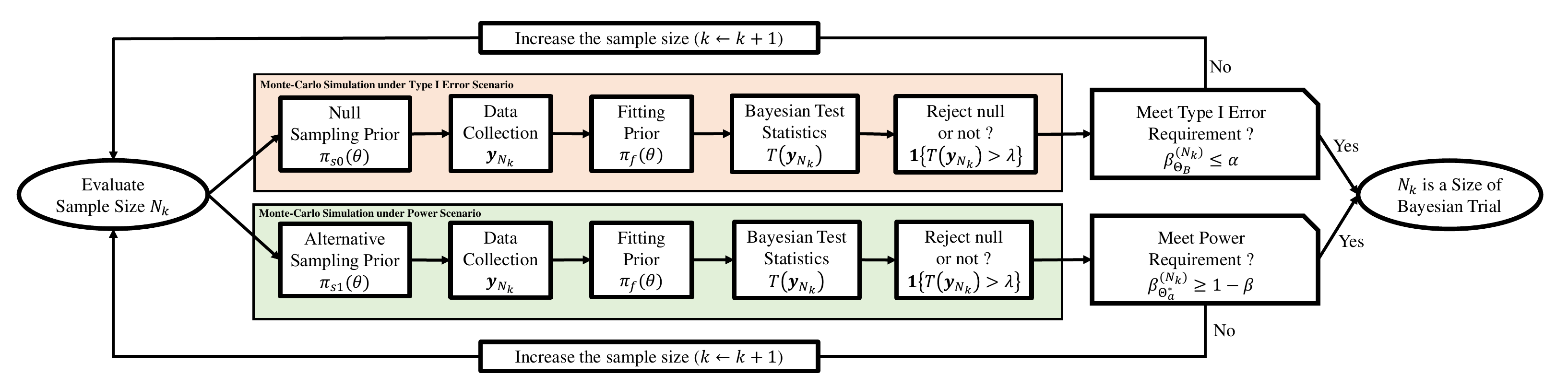}
\caption{\baselineskip=12pt Flow chart of Bayesian sample size determination within the collection of possible sizes of Bayesian trial $\mathcal{N} = \{N_{k}; k=1,\cdots,K, N_{k} < N_{k+1}\}$.}
\label{fig:BSSD_Gelfand}
\end{figure}

Figure \ref{fig:BSSD_Gelfand} provides a flowchart illustrating the process of Bayesian sample size determination. We explain the practical algorithm for selecting an optimal Bayesian sample size $N$ (\ref{eq:optimal_Bayesian_Sample_Size}), subject to the maximum sample size $N_{\text{max}}$—typically chosen under budgetary limits. To begin, we consider a set of $K$ candidate sample sizes, denoted as $\mathcal{N} = \{N_{k}; k=1,\ldots,K, N_{k} < N_{k+1}, , N_{K}=N_{\text{max}}\}$. Often, one may include the frequentist sample size as a reference.

The process commences with the evaluation of the smallest sample size, $N_{1}$, checking whether it meets the Bayesian type I error and power requirements, i.e., $\beta_{\Theta_{B}}^{(N_{1})} \leq \alpha$ and $\beta_{\Theta_{a}^{*}}^{(N_{1})} \geq 1 - \beta$. To that end, we independently generate $N_{1}$ outcomes, $\textbf{y}_{N_{1}}$, from the marginal distributions $f_{s0}(\textbf{y}_{N_{1}})$ and $f_{s1}(\textbf{y}_{N_{1}})$, which are based on the null and alternative sampling priors $\pi_{s0}(\theta)$ and $\pi_{s1}(\theta)$, respectively. The data drawn in this manner corresponds to the type I error and power scenarios, respectively. Subsequently, we independently compute the Bayesian test statistics, $T(\textbf{y}_{N_{1}})$, using the common fitting prior $\pi_{f}(\theta)$, and record the testing results, whether it rejects the null hypothesis or not, $\textbf{1}\{T(\textbf{y}_{N_{1}}) > \lambda \}$ (\ref{eq:Bayesian_test_procedure}) for each scenario. By repeating this procedure $R$ times (for example, $R = 10,000$), we can estimate the expected behaviors of the Bayesian test statistics $\beta_{\Theta_{B}}^{(N_{1})}$ and $\beta_{\Theta_{a}^{*}}^{(N_{1})}$ through Monte-Carlo approximation and evaluate whether the size $N_{1}$ meets both Bayesian type I error and power requirements. If these requirements are met, then $N_{1}$ is deemed the Bayesian sample size for the study. If not, we evaluate the next sample size, $N_{2}$, and reassess its suitability for meeting the requirements. This process continues until we identify the Bayesian sample size meeting the requirements within the set $\mathcal{N}$. If it cannot be found within this set $\mathcal{N}$, it may be necessary to explore a broader range of candidate sizes, adjust the values of $\alpha$ and $\beta$ under regulatory consideration, modify the threshold $\lambda$, or consider other potential modifications such as changing the hyper-parameters of the fitting prior.

It is evident that Bayesian sample size determination is computationally intensive. It becomes even more intense when the complexity of the design increases. For instance, one needs to consider factors like the number and timing of interim analyses for Bayesian group sequential design, as well as the number of sub-groups and ratios in Bayesian platform design. Moreover, the computational complexity increases when the Bayesian test statistic requires MCMC sampling, as the convergence of the Markov chain should be diagnosed for each iteration within the Monte Carlo simulation. In such scenarios, the use of parallel computation techniques or modern sampling schemes can significantly reduce computation time \citep{ma2019sampling, hoffman2014no}.
\subsection{Calibration of Bayesian Trial Design to Assess Frequentist Operating Characteristics}\label{subsec:Calibration of Bayesian Trial Design}
Scientifically sound clinical trial planning and rigorous trial conduct are important, regardless of whether trial sponsors use a Bayesian or frequentist design. Maintaining some degree of objectivity in the interpretation of testing results is key to achieving scientific soundness. The central question here is how much we can trust a testing result based on a Bayesian hypothesis testing procedure, which is driven by the Bayesian type I error and power in the planning phase. More specifically, suppose that such a Bayesian test, where the threshold of the decision rule was chosen to meet the Bayesian type I error rate of less than 0.025 and power greater than 0.8, yielded the rejection of the null hypothesis, while a frequentist test did not upon completion of the study. Then, can we still use the result of the Bayesian test for registration purposes? Perhaps, this can be best addressed by calculating the frequentist type I error and power of the Bayesian test during the planning phase so that the Bayesian test can be compared with some corresponding frequentist test in an apple-to-apple comparison, or as close as possible.

In most regulatory submissions, Bayesian trial designs are `calibrated' to possess good frequentist properties. In this spirit, and in adherence to regulatory practice, regulatory agencies typically recommend that sponsors provide the frequentist type I and II error rates for the sponsor’s proposed Bayesian analysis plan \citep{grieve2016idle,Bayesian2010FDAGuidance}.

The simulation-based approach for Bayesian sample size determination \citep{wang2002simulation}, as illustrated in Subsection \ref{subsec:Simulation-based Bayesian Sample Size Determination}, is calibrated to measure the frequentist operating characteristics of a Bayesian trial design if the null sampling prior is specified by a Dirac measure with the point-mass at the boundary value of the null parameter space $\Theta_{0}$ (i.e., $\pi_{s0}(\theta)=\delta(\theta_{0})$ for some $\theta_{0} \in \Theta_{B} = \bar{\Theta}_{0} \cap \bar{\Theta}_{a}$ where $\delta(x)$ is the Direc-Delta function), and the alternative sampling prior is specified by a Dirac measure with the point-mass at the value inducing the minimally detectable treatment effect, representing the smallest effect size (i.e., $\pi_{s1}(\theta)=\delta(\theta_{a})$ for some $\theta_{a} \in \Theta_{a}^{*}\subset \Theta_{a}$).

In this calibration, the expected behavior of the Bayesian test statistics $T(\textbf{y}_N) = \mathbb{P}_f[\theta \in \Theta_a | \textbf{y}_N]$ can be represented as the frequentist type I error and power of the design as follow:
\begin{align}
\label{eq:freq_type_I_error_rate}
&\text{Type I error}: \beta_{\theta_{0}}^{(N)} = \mathbb{P}[T(\textbf{y}_{N}) > \lambda | \textbf{y}_{N} \sim f(\textbf{y}_{N} | \theta_{0})] = \mathbb{P}_{\theta_{0}}[T(\textbf{y}_{N}) > \lambda ],\\
\label{eq:freq_power}
&\text{Power}: \beta_{\theta_{a}}^{(N)} = \mathbb{P}[T(\textbf{y}_{N}) > \lambda | \textbf{y}_{N} \sim f(\textbf{y}_{N} | \theta_{a})] = \mathbb{P}_{\theta_{a}}[T(\textbf{y}_{N}) > \lambda ].
\end{align}

Throughout the paper, we interchangeably use the notations $\mathbb{P}_{\theta}[\cdot]$ and $\mathbb{P}[\cdot|\textbf{y}_{N} \sim f(\textbf{y}_{N}|\theta)]$. The former notation is simpler, yet it omits specifying which values are being treated as random and which are not; hence, the latter notation is sometimes more convenient for Bayesian computation.

With the aforementioned calibration, the prior specification problem of the Bayesian design essentially boils down to the choice of the fitting prior $\pi_{f}(\theta)$. This is because the selection of the null and alternative sampling prior is essentially determined by the formulation of the null and alternative hypotheses, aligning with the frequentist framework. In other words, the fitting prior provides the unique advantage of Bayesian design by incorporating prior information about the parameter $\theta$, which is then updated by Bayes’ theorem, leading to the posterior distribution. The choice of the fitting prior will be discussed in Section \ref{sec:Specification of Prior Distributions}. In what follows, to avoid notation clutter, when we omit the subscript `$f$' in the notation of the fitting prior $\pi_{f}(\theta)$. 

\subsection{Example - Standard Single-stage Design Based on Beta-Binomial Model}\label{subsec:Example - Standard Single-stage Design Based on Beta-Binomial Model}
Suppose a medical device company aims to evaluate the primary safety endpoint of a new device in a pivotal trial. The safety endpoint is the primary adverse event rate through 30 days after a surgical procedure involving the device. The sponsor plans to conduct a single-arm study design in which patient data is accumulated throughout the trial. Only once the trial is complete, the data will be unblinded, and the pre-planned statistical analyses will be executed. Suppose that the null and alternative hypotheses are: \(\mathcal{H}_{0}: \theta \geq \theta_{0}\) versus \(\mathcal{H}_{a}: \theta < \theta_{0}\). Here, \(\theta_{0}\) represents the performance goal of the new device, a numerical value (point estimate) that is considered sufficient by a regulator for use as a comparison for the safety endpoint. It is recommended that the performance goal not originate from a particular sponsor or regulator. It is often helpful if it is recommended by a scientific or medical society. \citep{DesignConsiderations2013FDAGuidance}.

A fundamental regulatory question is ``when a device passes a safety performance goal, does that provide evidence that the device is safe?". To answer this question, the sponsor sets a performance goal by \(\theta_{0} = 0.12\), and anticipates that the safety rate of the new device is \(\theta_{a} = 0.05\). The objective of the study is, therefore, to detect a minimum treatment effect of \(7\% = 12\% - 5\%\) in reducing the adverse event rate of patients treated with the new medical device compared to the performance goal. The sponsor targeted to achieve a statistical power of $1-\beta = 0.8$ with the one-sided level \(\alpha = 0.025\) test of a proposed design. The trial is successful if the null hypothesis \(\mathcal{H}_{0}\) is rejected after observing the outcomes from \(N\) patients upon completion of the study.

The following Bayesian design is considered:
\begin{itemize}
\baselineskip=12pt
\item One-sided significance level: $\alpha = 0.025$,
\item Power: $1 - \beta= 0.8$,
\item Null sampling prior: $\pi_{s0}(\theta) = \delta(\theta_{0})$, where $\theta_{0}= 0.12$,
\item Alternative sampling prior: $\pi_{s1}(\theta) = \delta(\theta_{a})$, where $\theta_{a}= 0.05$,
\item Prior: $\theta \sim \pi(\theta) = \mathcal{B}eta(\theta|a,b)$,
\item Hyper-parameters: $a>0$ and $b>0$,
\item Likelihood: $y_{i}\sim f(y|\theta) = \mathcal{B}ernoulli(y|\theta),\, (i=1,\cdots,N)$,
\item Decision rule: Reject null hypothesis if $T(\textbf{y}_{N}) = \mathbb{P}[\theta < \theta_{0} | \textbf{y}_{N}] > 0.975$.
\end{itemize}

Under the setting, (frequentist) type I error and power of the Bayesian design can be expressed as:
\begin{align*}
\beta_{\theta_{0}}^{(N)} &= \mathbb{P}_{\theta_{0}}[ \mathbb{P}[\theta < \theta_{0} | \textbf{y}_{N}] > 0.975 ] = \int \textbf{1}( \mathbb{P}[\theta < \theta_{0} | \textbf{y}_{N}] > 0.975 ) \cdot \prod_{i=1}^{N} \theta_{0}^{y_{i}} (1-\theta_{0})^{1-y_{i}}d\textbf{y}_{N},\\
\beta_{\theta_{a}}^{(N)} &= \mathbb{P}_{\theta_{0}}[ \mathbb{P}[\theta < \theta_{0} | \textbf{y}_{N}] > 0.975 ] = \int \textbf{1}( \mathbb{P}[\theta < \theta_{0} | \textbf{y}_{N}] > 0.975 ) \cdot \prod_{i=1}^{N} \theta_{a}^{y_{i}} (1-\theta_{a})^{1-y_{i}}d\textbf{y}_{N}.
\end{align*}
The Bayesian sample size satisfying the type I \& II error requirements are then $$N = \text{max}(\text{min}\{N : \beta_{\theta_{0}}^{(N)} \leq 0.025\},
 \text{min}\{N : \beta_{\theta_{a}}^{(N)} \geq 0.8 \} ).$$

Due the conjugate relationship between the binomial distribution and beta prior, the posterior distribution is the beta distribution, $\pi(\theta|\textbf{y}_{N})= \mathcal{B}eta(x + a, N - x + b)$ such that $x = \sum_{i=1}^{N}y_{i}$. Therefore, the Bayesian test statistics $T(\textbf{y}_{N})=\mathbb{P}[\theta < \theta_{0} | \textbf{y}_{N}]$ can be represented as a closed-form in this case.

We consider $N= 100, 150,$ and 200 as the possible sizes for the Bayesian trial. We evaluate three prior options: (1) a non-informative prior with $a = b = 1$ (prior mean is 50\%), (2) an optimistic prior with $a = 0.8$ and $b = 16$ (prior mean is 4.76\%), and (3) a pessimistic prior with $a = 3.5$ and $b = 20$ (prior mean is 14.89\%). An optimistic prior assigns a probability mass that is favorable for rejecting the null hypothesis before observing any new outcomes, while a pessimistic prior assigns a probability mass that is favorable for accepting the null hypothesis before observing any new outcomes. As a reference, we consider a frequentist design in which the decision criterion is determined by the p-value associated with the z-test statistic, $Z = (x/N - \theta_{0})/\sqrt{\theta_{0}(1 - \theta_{0})/N}$, being less than the one-sided significance level of $\alpha=0.025$ to reject the null hypothesis.

\begin{table}[h]
\caption{Frequentist operating characteristics of Bayesian designs with different prior options.}
%\begin{adjustwidth}{-\extralength}{0cm}
		\newcolumntype{C}{>{\centering\arraybackslash}X}
\begin{footnotesize}
\begin{tabular}{ccccccccc}
\toprule
\multirow{2}{*}{Sample Size ($N$)} & \multicolumn{2}{c}{\begin{tabular}[c]{@{}c@{}}Bayesian Design\\ (Non-informative prior)\end{tabular}} & \multicolumn{2}{c}{\begin{tabular}[c]{@{}c@{}}Bayesian Design\\ (Optimistic prior)\end{tabular}} & \multicolumn{2}{c}{\begin{tabular}[c]{@{}c@{}}Bayesian Design\\ (Pessimistic prior)\end{tabular}} & \multicolumn{2}{c}{\begin{tabular}[c]{@{}c@{}}Frequentist Design\\ (Z-test statistics)\end{tabular}} \\ \cline{2-9} 
                             & Type I Error                                         & Power                                          & Type I Error                                       & Power                                       & Type I Error                                       & Power                                        & Type I Error                                         & Power                                         \\ \midrule
100                        & 0.0148                                               & 0.6181                                        & 0.0755                                             & 0.8767                                      & 0.0148                                             & 0.6181                                       & 0.0155                                               & 0.6214                                        \\
150                        & \textbf{0.0231                                             }   & \textbf{0.8690}                                         & 0.0448 & 0.9268 & 0.0114                                             & 0.7838                                       & \textbf{0.0242}                                               & \textbf{0.8690}                                        \\
200                        & \textbf{0.0164                                           } & \textbf{0.9184}                                         & 0.0467	 & 0.9767 & 0.0164                                            & 0.9184                                       & \textbf{0.0158}                                               & \textbf{0.9231}                                        \\ 
\bottomrule
\end{tabular}
\\
\label{tab:Beta-binomial_single_arm_design}
\baselineskip=12pt
Note: Bayesian designs are based on the beta-binomial models with prior options: (1) a non-informative prior with $a = b = 1$, (2) an optimistic prior with $a = 0.8$ and $b = 16$, and (3) a pessimistic prior with $a = 3.5$ and $b = 20$.
\end{footnotesize}
%\end{adjustwidth}
\end{table}

Table \ref{tab:Beta-binomial_single_arm_design} shows the results of the power analysis obtained by simulation. Designs satisfying the requirement of type I error $\leq$ 2.5\% and power $\geq$ 80\%, are highlighted in bold in the table. The results indicate that the operating characteristics of the Bayesian design based on a non-informative prior are very similar to those obtained using the frequentist design. This similarity is typically expected because a non-informative prior has minimal impact on the posterior distribution, allowing the data to play a significant role in determining the results.

The results show that the Bayesian design based on an optimistic prior tends to increase power at the expense of inflating the type I error. Technically, the inflation is expected because, by definition, the type I error is evaluated by assuming the true treatment effect is null (i.e. $\theta = \theta_{0}$), then it is calculated under a scenario where the prior is in conflict with the null treatment effect, resulting in the inflation of the type I error. In contrast, the Bayesian design based on a pessimistic prior tends to decrease the type I error at the cost of deflating the power. The deflation is expected because, by definition, the power is evaluated by assuming the true treatment effect is alternative (i.e. $\theta = \theta_{a}$), then it is calculated under a scenario where the prior is in conflict with the alternative treatment effect, resulting in the deflation of the power.

Considering the trade-off between power and type I error, which is primarily influenced by the prior specification, thorough pre-planning is essential for selecting the most suitable Bayesian design on a case-by-case basis for regulatory submission. Particularly, when historical data is incorporated into the hyper-parameter of the prior as an optimistic prior, there may be inflation of the type I error rate, even after appropriately discounting the historical data \citep{burger2021use}. In such cases, it may be appropriate to relax the type I error control to a less stringent level compared to situations where no prior information is used. This is because the power gains from using external prior information in clinical trials are typically not achievable when strict type I error control is required \citep{best2023beyond,kopp2020power}. Refer to Section 2.4.3 in \citep{lesaffre2020bayesian} for relevant discussion. The extent to which type I error control can be relaxed is a case-by-case decision for regulators, depending on various factors, primarily the confidence in the prior information \citep{Bayesian2010FDAGuidance}. We discuss this in more detail by taking the Bayesian borrowing design based on a power prior \citep{ibrahim2000power} as an example in Section \ref{sec:Historical Data Information Borrowing}.

\subsection{Numerical Approximation of Power Function}\label{subsec:Numerical Approximation of Power Function}
In this subsection, we illustrate a numerical method to approximate the power function of a Bayesian hypothesis testing procedure. The power function of a test procedure is the probability of rejecting the null hypothesis, with the true parameter value as the input. The power function plays a crucial role in assessing the ability of a statistical test to detect a true effect or relationship between the design parameters. Visualizing the power function over the parameter space, as provided by many statistical software (SAS, PASS, etc), is helpful for trial sizing because it displays the full spectrum of the behavior of the testing procedure. Understanding such behaviors is crucial for regulatory submission, as regulators often recommend simulating several likely scenarios and providing the expected sample size and estimated type I error for each case.

Consider the null and alternative hypotheses, $\mathcal{H}_{0}: \theta \in \Theta_{0}$ versus $\mathcal{H}_{a}: \theta \in \Theta_{a}$, where $\Theta = \Theta_{0} \cup \Theta_{a}$, and $\Theta_{0}$ and $\Theta_{a}$ are disjoint. Let outcomes $y_{i}$ ($i=1,\cdots,N$) be identically and independently distributed according to a density $f(y|\theta)$. Given a Bayesian test statistics $T(\textbf{y}_{N})$, suppose that a higher value of $T(\textbf{y}_{N})$ raises more doubt about the null hypothesis being true. We reject the null hypothesis if $T(\textbf{y}_{N})>\lambda$, where $\lambda$ is a pre-specified threshold. Then, the power function $\psi: \Theta \rightarrow [0,1]$ is defined as follows:
\begin{align}
\nonumber
\psi(\theta) &= \mathbb{P}_{\theta}[T(\textbf{y}_{N}) > \lambda]\\ 
\nonumber
&=
\mathbb{P}[T(\textbf{y}_{N}) > \lambda |\textbf{y}_{N} \sim f(y|\theta) ]
\\
\label{eq:power_function}
&=\int \textbf{1}\{T(\textbf{y}_{N}) > \lambda\} \prod_{i=1}^{n} f(y_{i}|\theta) d\textbf{y}_{N}.
\end{align}

Eventually, one needs to calculate $\psi(\theta)$ over the entire parameter space $\Theta$ to explore the behavior of the testing procedure. However, the value of $\psi(\theta)$ is often not expressed as a closed-form formula, mainly due to two reasons: no explicit formula for the outside integral $\mathbb{P}_{\theta}[\cdot]$ or the Bayesian test statistics $T(\textbf{y}_{N})$. Thus, it is often usual that the value of $\psi(\theta)$ is approximated through a nested simulation strategy. See Algorithm \ref{alg:nested simulation} that prints out $\widetilde{\psi}(\theta)$ to approximate $\psi(\theta)$ at the parameter $\theta \in \Theta$. The idea is that the outside integral is approximated by a Monte-Carlo simulation (with $R$ number of replicated studies), and the test statistics is approximated by Monte-Carlo or Markov Chain Monte-Carlo simulation (with $S$ number of posterior samples) when the test statistics are not expressed in closed form. It is important to note that this approximation is exact in the sense that if $R$ and $S$ go to infinity, then $\widetilde{\psi}(\theta)$ converges to the truth $\psi(\theta)$. This contrasts with the formulation of the power functions of many frequentist tests, which are derived based on some large sample theory \citep{hall1990large}, to induce a closed-form formula.

%%%SYL%%%

\begin{algorithm}[h!]
\baselineskip=5pt
\caption{A nested simulation to approximate the power function $\psi(\theta)$ }\label{alg:nested simulation}
\SetAlgoLined
\textbf{Goal : } Approximating the power function $\psi(\theta)$ (\ref{eq:power_function}) evaluated at $\theta\in \Theta$.
\\
\textbf{Input : } A prior $\pi(\theta)$, data generating distribution $f(y|\theta)$, threshold value $\lambda\in [0,1]$, Bayesian test statistics $T(\textbf{y}_{N})$, true data generating parameter $\theta$, number of repetitions of trials $R$, and number of posterior samples $S$.\\
\textbf{Output : } An approximated value $\widetilde{\psi}(\theta)$.\\
\begin{itemize}
\item[] \textbf{for}  \textbf{(} $r = 1,\cdots,R$  \textbf{)} $\textbf{\{}$ 
\\
%\indent $\quad \quad \quad$ \textbf{Generate a synthetic trial with sample size} $n$\\
\indent $\quad \quad $ Generate the synthetic responses of $N$ patients assuming that $\theta$ is true \\
\indent $\quad \quad $ \indent $\textbf{y}_{N}^{(r)} = (y_{1}^{(r)},\cdots,y_{N}^{(r)})\sim f(y|\theta)$
\\
\indent $\quad \quad \quad$  \textbf{for}  \textbf{(} $s = 1,\cdots,S$  \textbf{)} $\textbf{\{}$ 
\\
\indent $\quad \quad \quad \quad \quad $ Sample from posterior distribution given $\textbf{y}_{N}^{(r)}$
\\
\indent $\quad \quad \quad \quad \quad $  $\theta^{(s)} \sim \pi(\theta|\textbf{y}_{N}^{(r)})$ 
\\
\indent $\quad \quad \quad \quad \quad \quad$  
\\
\indent  $\quad \quad \quad\quad\quad$ $\textbf{\}}$ 
\\
\indent  $\quad \quad $ Approximate Bayesian test statistics $T(\textbf{y}_{N})$ by $\widehat{T}(\textbf{y}_{N}) = g(\theta^{(1)},\cdots,\theta^{(S)})$
\\
\indent  $\quad $  $\textbf{\}}$ 
\\
Approximate power function $\psi(\theta)$ by 
$\widetilde{\psi}(\theta) =
\frac{1}{R}
\sum_{r=1}^{R}
1\left(
\widehat{T}\{\textbf{y}_{N}\}
 > \lambda
\right)$
\end{itemize}
\end{algorithm}

\subsection{Example - Bayesian Hypothesis Testing for the Degree of Freedom of Student t-distribution}\label{subsec:Example - Bayesian Hypothesis Testing for the Degree of Freedom of Student t-distribution}
We apply Algorithm \ref{alg:nested simulation} to an instructive example where the objective is to test concerning the number of degrees of freedom of Student t-distribution. Suppose that the outcomes $y_{i}$ ($i=1,\cdots,N$) are identically and independently distributed according to the t-distribution with the degrees of freedom $\theta$: $$ y_{i}\sim f(y|\theta) = \frac{\Gamma((\theta+1)/2)}{\sqrt{\theta \pi} \Gamma(\theta/2)} \cdot \bigg(1 + \frac{y^{2}}{\theta} \bigg)^{- \frac{\theta +1}{2}}.$$ It is widely known that frequentist estimation of the parameter $\theta$ is difficult due to the leptokurtic nature of the Student t-distribution, but Bayesian estimation may be more robust than frequentist estimation \citep{villa2018objective}. We consider the null and alternative hypotheses, $\mathcal{H}_{0}: \theta \geq 5$ versus $\mathcal{H}_{a}: \theta < 5$. To execute the Bayesian hypothesis test, we assume a log-normal prior for the degrees of freedom, given as $\pi(\theta)= \log\mathcal{N}(\theta|1,1) = \{1/(\theta\sqrt{2\pi})\} \exp [-\{\log (\theta) -1\}^{2}/2 ], \, \theta \in (0,\infty)$, studied by \citep{lee2022use}. We use the Bayesian test statistic $T(\textbf{y}_{N}) = \mathbb{P}[\theta < 5 | \textbf{y}_{N}]$, and reject the null if $T(\textbf{y}_{N})>\lambda$, where $\lambda = 0.985$. The threshold $\lambda$ was chosen to protect the type I error rate less than 2.5\%. We repeat the trials $R=50,000$ times and use the elliptical slice sampler \citep{murray2010elliptical} to sample from the posterior distribution $\theta^{(s)} \sim \pi(\theta|\textbf{y}_{N})$, $(s=1,\cdots,S)$ where $S=3,000$, in order to approximate the value of the power function. The MCMC technique is necessary because the posterior distribution does not have a closed-form expression. 

\begin{figure}[h!]
\centering
\includegraphics[width=\textwidth]{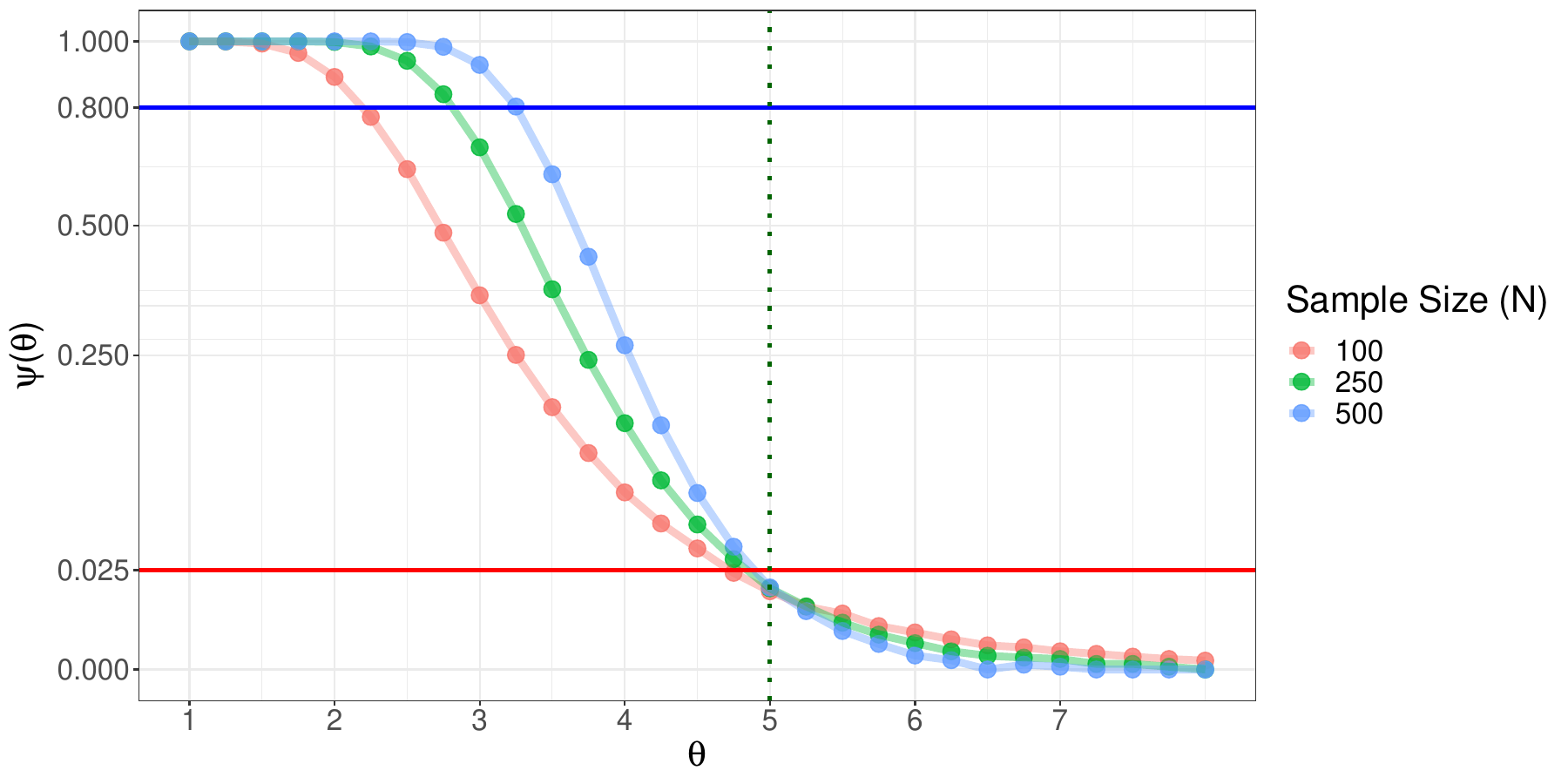}
\caption{\baselineskip=12pt Power functions of Bayesian hypothesis testing for null and alternative hypotheses, $\mathcal{H}_{0}: \theta \geq 5$ versus $\mathcal{H}_{a}: \theta < 5$ with sample size of $N=100, 250, $ and $500$. Here, $\theta $ denotes the number of degrees of freedom of Student t-distribution.}
\label{fig:power_ft_student_t}
\end{figure}

Figure \ref{fig:power_ft_student_t} displays the power functions for a sample size of $N=100, 250,$ and $500$. The type I error rates are $0.0155$ ($N=100$), $0.0167$ ($N=250$), and $0.0171$ ($N=500$), all kept below 2.5\%. This implies that all test procedures maintain a one-sided level-$\alpha$ test, where $\alpha=2.5\%$. It is observed that a larger sample size leads to a more powerful test in detecting the minimum effect size, which is typically shown in the frequetist testing procedures. 

\section{Specification of Prior Distributions}\label{sec:Specification of Prior Distributions}
\subsection{Classes of Prior Distributions}\label{subsec:Classes of Prior Distributions}
The prior distributions for regulatory submissions can be broadly classified into non-informative priors and informative priors. A non-informative prior is a prior distribution with no preference for any specific parameter value. A Bayesian design based on a non-informative prior leads to objective statistical inference, resembling frequentist inference, and is therefore the least controversial. It is important to note that choosing a non-informative prior distribution can sometimes be challenging, either because there may be more than one way to parameterize the problem or because there is no clear mathematical justification for defining non-informativeness. \citep{kass1996selection} reviews the relevant literature but emphasizes the continuing difficulties in defining what is meant by `non-informative' and the lack of agreed reference priors in all but simple situations.

For example, in the case of a beta-binomial model (as illustrated in Subsection \ref{subsec:Calibration of Bayesian Trial Design}), choices such as $\mathcal{B}eta(1,1)$, $\mathcal{B}eta(0.5,0.5)$, or $\mathcal{B}eta(0.001,0.001)$ could all be used as non-informative priors. See Subsection 5.5.1 from \citep{spiegelhalter2004bayesian} for a relevant discussion. In Bayesian hierarchical models, the mathematical meaning of a non-informative prior distribution is not obvious due to the complexity of the model. In those cases, we typically set the relevant hyper-parameters to diffuse the prior evenly over the parameter space and minimize the prior information as much as possible, leading to a nearly non-informative prior.

On the other hand, an informative prior is a prior distribution that expresses a preference for a particular parameter value, enabling the incorporation of prior information. Informative priors can be further categorized into two types: prior distributions based on empirical evidence from previous trials and prior distributions based on personal opinions, often obtained through expert elicitation. The former class of informative priors is less controversial when the current and previous trials are similar to each other. Possible sources of prior information include: clinical trials conducted overseas, patient registries, clinical data on very similar products, and pilot studies. Recently, there has been breakthrough development of informative prior distribution that enables incorporating the information from previous trials, and eventually reducing sample size of a new trial, while providing appropriate mechanism of discounting \citep{ibrahim2015power,ibrahim2003optimality,
thall2003hierarchical,lee2022bayesian}. We provide details on the formulation of an informative prior and relevant regulatory considerations in Section \ref{sec:Historical Data Information Borrowing}. Typically, informative prior distribution based on personal opinions is not recommended for Bayesian submissions due to subjectivity and controversy \citep{irony2001choosing}. 

Incorporating prior information formally into the statistical analysis is a unique feature of the Bayesian approach but is also often criticized by non-Bayesians. To mitigate any conflict and skepticism regarding prior information, it is crucial that sponsors and regulators meet early in the process to discuss and agree upon the prior information to be used for Bayesian clinical trials. 
\subsection{Prior probability of the study claim}\label{subsec:Prior probability of the study claim}
The prior predictive distribution plays a key role in pre-planning a Bayesian trial to measure the prior probability of the study claim—the probability of the study claim before observing any new data. Regulators recommend that this probability should not be excessively high, and what constitutes `too high' is a case-by-case decision \citep{Bayesian2010FDAGuidance}. Measuring this probability is typically recommended when an informative prior distribution is used for the Bayesian submission. Regulatory agencies make this recommendation to ensure that prior information does not overwhelm the data of a new trial, potentially creating a situation where unfavorable results from the proposed study get masked by a favorable prior distribution. In an evaluation of the prior probability of the claim, regulators will balance the informativeness of the prior against the efficiency gain from using prior information, as opposed to using noninformative priors.

To calculate the prior probability of the study claim, we simulate multiple hypothetical trial data using the prior predictive distribution (\ref{eq:prior_pred_dist}) by setting the sampling prior as the fitting prior, and then calculate the probability of rejecting the null hypothesis based on the simulated data. We illustrate the procedure for calculating this probability using the beta-binomial model illustrated in Subsection \ref{subsec:Calibration of Bayesian Trial Design} as an example. First, we generate the data $(\textbf{y}_{N})^{(r)} \sim f(\textbf{y}_{N}) = \int f(\textbf{y}_{N}|\theta) \pi(\theta) d\theta$ ($r=1,\cdots,R$), where $R$ represents the number of simulations. Here, $f$ is the Bernoulli likelihood, and $\pi$ is the beta prior with hyper-parameters $a$ and $b$. In this particular example, $a$ and $b$ represent the number of hypothetical patients showing adverse events and not showing adverse events \emph{a priori}, hence $a+b$ is the prior effective sample size. The number of patients showing adverse events out of $N$ patients, $X^{(r)} = \sum_{i=1}^{N}y_{i}^{(r)}$, is distributed according to a beta-binomial distribution \citep{griffiths1973maximum}, denoted as $X^{(r)} \sim \mathcal{B}eta$-$\mathcal{B}inom(N,a,b)$. One can use a built-in function $\mathsf{rbetabinom.ab(\cdot)}$ within the $\mathsf{R}$ package $\mathsf{VGAM}$ to generate the $r$-th outcome $X^{(r)}$. Second, we compute the posterior probability and make a decision whether to reject the null or not, i.e., $d(r)=\textbf{1}\{\mathbb{P}[\theta < \theta_{0} | \textbf{y}_{N}^{(r)}] > 0.975 \} = 1$ if $\mathcal{H}_{0}$ is rejected and $0$ otherwise. Finally, the value of $\sum_{r=1}^{R}d(r)/R$ is the prior probability of the study claim based on the prior choice of $\theta \sim \pi(\theta) = \mathcal{B}eta(\theta|a,b)$.

We consider four prior options where the hyperparameters have been set to induce progressively stronger prior information to reject the null \emph{a priori}. Table \ref{tab:Prior probability of the study claim based on beta-binomial model} shows the results of the calculations of this probability. For the non-informative prior, the prior probability of the study claim is only 5.8\%, implying that the outcome from a new trial will most likely dominate the final decision. However, the third and fourth options provide probabilities greater than 50\%, indicating overly strong prior information; hence, appropriate discounting on the prior effective sample size is recommended. 

\begin{table}[h]
\centering
\caption{Prior probability of the study claim based on beta-binomial model.}
%\begin{adjustwidth}{-\extralength}{0cm}
		\newcolumntype{C}{>{\centering\arraybackslash}X}
		\begin{scriptsize}
\begin{tabular}{ccccc}
\toprule
Prior Distribution     & \begin{tabular}[c]{@{}c@{}}Number of hypothetical patients \\ showing adverse events\end{tabular} & \begin{tabular}[c]{@{}c@{}}Number of hypothetical patients \\ not showing adverse events\end{tabular} & \begin{tabular}[c]{@{}c@{}}Prior Mean \\ (Standard Deviation)\end{tabular} & \begin{tabular}[c]{@{}c@{}}Prior Probability \\ of a Study Claim\end{tabular} \\
\midrule
$\mathcal{B}eta(1,1)$  & 1                                                                                              & 1                                                                                                     & 50\% (5.8\%)                                                                 & 5.8\%                                                                          \\
$\mathcal{B}eta(1,9)$  & 1                                                                                              & 9                                                                                                     & 10\% (9\%)                                                                   & 47.1\%                                                                         \\
$\mathcal{B}eta(1,19)$ & 1                                                                                              & 19                                                                                                    & 5\% (4.9\%)                                                                  & 77.3\%                                                                         \\
$\mathcal{B}eta(1,49)$ & 1                                                                                              & 49                                                                                                    & 2\% (2\%)                                                                    & 99.1\% \\          
\bottomrule                                                             
\end{tabular}
		\end{scriptsize}
%\end{adjustwidth}
\label{tab:Prior probability of the study claim based on beta-binomial model}
\end{table}

\section{Decision Rule - Posterior Probability Approach}\label{sec:Bayesian Decision Rule - Posterior Probability Approach}

\subsection{Posterior Probability Approach}\label{subsec:Posterior Probability Approach}
The central motivation for utilizing the posterior probability approach in decision-making is to quantify the evidence to address the question, ``Does the current data provide convincing evidence in favor of the alternative hypothesis?" The key quantity here is the posterior probability of the alternative hypothesis being true based on the data observed up to the point of analysis. This Bayesian tail probability can be used as the test statistic in a single-stage Bayesian design upon completion of the study, similar to the role of the p-value in a single-stage frequentist design \citep{lesaffre2020bayesian}. Furthermore, one can measure it in both interim and final analyses within the context of Bayesian group sequential designs \citep{stallard2020comparison, gsponer2014practical}, akin to a z-score in a frequentist group sequential design \citep{fleming1984designs, jennison1999group}.

It is important to note that if the posterior probability approach is used in decision-making at the interim analysis, it does not involve predicting outcomes of the future remaining patients. This distinguishes it from the predictive probability approach, where the remaining time and statistical information to be gathered play a crucial role in decision-making at the interim analysis (as discussed in Section \ref{sec:Bayesian Decision Rule - Predictive Probability Approach}). Consequently, the posterior probability approach is considered conservative, as it prohibits imputation for incomplete data or partial outcomes. For this reason, the posterior probability approach is standardly employed in interim analyses to declare early success or in the final analysis to declare the trial's success to support marketing approval of medical devices or drugs in the regulatory submissions \citep{polack2020safety,bohm2021re}.

Suppose that $\textbf{y}$ denotes an analysis dataset, and $\theta$ is the parameter of main interest. A sponsor wants to test $\mathcal{H}_{0}: \theta \in \Theta_{0}$ versus $\mathcal{H}_{a}: \theta \in \Theta_{a}$, where $\Theta = \Theta_{0} \cup \Theta_{a}$, and $\Theta_{0}$ and $\Theta_{a}$ are disjoint. Bayesian test statistics following the posterior probability approach can be represented as a functional $\mathcal{F}\{\cdot\}: \mathcal{Q}_{\theta|\textbf{y}} \rightarrow [0,1]$, such that:

\begin{align}
\label{eq:Bayesian_test_statistics_Post_Prob_Approach}
\mathcal{F}\{\pi(\theta|\textbf{y})\} = T(\textbf{y}) & = \mathbb{P}[\theta \in \Theta_{a} | \textbf{y}] = \int \textbf{1}(\theta \in \Theta_{a}) \cdot \pi(\theta|\textbf{y}) d\theta,
\end{align}

where $\mathcal{Q}_{\theta|\textbf{y}}$ represents the collection of posterior distributions. Finally, to induce a dichotomous decision, we need to pre-specify the threshold $\lambda\in [0,1]$. By introducing an indicator function $\varphi$ (referred as a `critical function' in \citep{lehmann1986testing}), the testing result is determined as follow:
\begin{align*}
\varphi(\textbf{y}) = \begin{cases}
  1  & \text{ if } \mathcal{F}\{\pi(\theta|\textbf{y})\} = \mathbb{P}[\theta \in \Theta_{a} | \textbf{y}] > \lambda\\
  0   & \text{ if } \mathcal{F}\{\pi(\theta|\textbf{y})\} = \mathbb{P}[\theta \in \Theta_{a} | \textbf{y}] \leq \lambda,
\end{cases}
\end{align*}
where $1$ and $0$ indicate the rejection and acceptance of the null hypothesis, respectively. 

In the interim analysis, rejecting the null can be interpreted as claiming the early success of the trial, and in the final analysis, rejecting the null can be interpreted as claiming the final success of the trial. Figure \ref{fig:Post_Prob_Approach} displays a pictorial description of the decision procedure.

\begin{figure}[h!]
\centering
\includegraphics[scale=0.35]{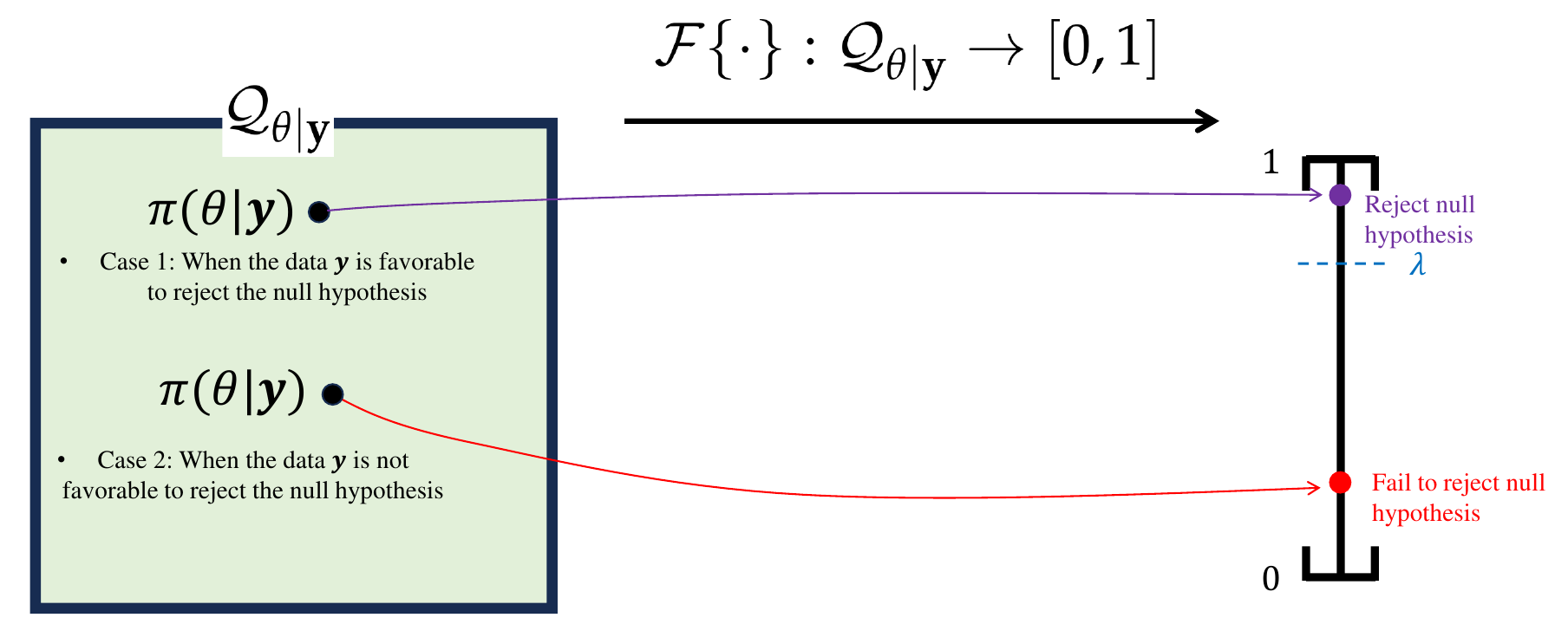}
\caption{\baselineskip=12pt Pictorial illustration of the decision rule based on the posterior probability approach: If the data $\textbf{y}$ were generated from the alternative (or null) density $f(\textbf{y}|\theta)$ where $\theta \in \Theta_{a}$ (or $\theta \in \Theta_{0}$), then the posterior distribution would be more concentrated on the alternative space $\Theta_{a}$ (or null parameter $\Theta_{0}$), resulting in a higher (or lower) value of the test statistic $\mathcal{F}\{ \pi(\theta|\textbf{y})\} = \mathbb{P}[\theta \in \Theta_{a} | \textbf{y}]$. The pre-specified threshold $\lambda$ is used to make the dichotomous decision based on the test statistic.}
\label{fig:Post_Prob_Approach}
\end{figure}

The formulation of Bayesian test statistics is universal regardless of the hypothesis being tested (e.g., mean comparison, proportion comparison, association), and it does not rely on asymptotic theory. The derivation procedure for Bayesian test statistics based on the posterior probability approach is intuitive, considering the backward process of the Bayesian theorem. A higher value of $T(\textbf{y})=\mathbb{P}[\theta \in \Theta_{a} | \textbf{y}]$ implies that more mass has been concentrated on the alternative parameter space $\Theta_{a}$ \emph{a posteriori}. Consequently, there is a higher probability that the data were originally generated from the density indexed with parameters belonging to $\Theta_{a}$, that is, $\textbf{y}\sim f(\textbf{y}|\theta)$, $\theta \in \Theta_{a}$. The prior distribution in this backward process acts as a moderator by appropriately allocating even more or less mass on the parameter space $\Theta$ before seeing any data $\textbf{y}$. If there is no prior information, the prior distribution plays a minimal role in this process.

This contrasts with the derivation procedure for frequentist test statistics, which involves formulating a point estimator such as sufficient statistics from the sample data to make a decision about a specific hypothesis. The derivation may vary depending on the type of test (e.g., t-test, chi-squared test, z-test) and the hypothesis being tested. Furthermore, asymptotic theory is often used if the test statistics based on exact calculation are difficult to obtain \citep{chow2017sample}.

For a single-stage design with the targeted one-sided significance level of $\alpha$, the threshold $\lambda$ is normally set to $1-\alpha$, provided that the test is a one-sided test and the prior distribution is a non-informative prior. This setting is frequently chosen, particularly when there is no past historical data to be incorporated into the prior; see the example of the beta-binomial model in Subsection \ref{subsec:Calibration of Bayesian Trial Design}. If an informative prior is used, this practice (that is, $\lambda = 1-\alpha$) should be carefully used  because the type I error rate can be inflated or deflated based on the direction of the informativeness (see Table \ref{tab:Beta-binomial_single_arm_design}).

\subsection{Asymptotic Property of Posterior Probability Approach}\label{subsec:Asymptotic Property of Posterior Probability Approach}
Bernstein-Von Mises theorem \citep{johnstone2010high,walker1969asymptotic}, also called Bayesian central limit theorem, states that if the sample size $N$ is sufficiently large, the influence of the prior $\pi(\theta)$ diminishes, and the posterior distribution $\pi(\theta|\textbf{y}_{N})$ closely resembles the likelihood $f(\textbf{y}_{N}|\theta)$ under suitable regularity conditions (for e.g., conditions stated in \citep{walker1969asymptotic} or Section 4.1.2 of \citep{ghosh2006introduction}). Consequently, it simplifies the complex posterior distribution into a more manageable normal distribution, independent of the form of prior, as long as the prior distribution is continuous and positive on the parameter space. 

By using Bernstein-Von Mises theorem, we can show that if the sample size $N$ is sufficiently large, the posterior probability approach asymptotically behaves similarly to the frequentist testing procedure based on the p-value approach \citep{fisher1936design} under the regularity conditions. For the ease of exposition, we consider a one-sided testing problem. In this specific case, we further establish an asymptotic equation between the Bayesian tail probability (\ref{eq:Bayesian_test_statistics_Post_Prob_Approach}) and p-value. 

%Suppose that the boundary value $\theta_{*}$ is the true data generating parameter for the $n$ observations $(y_{1},\cdots,y_{n})$. 
\begin{theorem}\label{thm:Bayes_Freq_same}
Let a random sample of size $N$, \(y_i,\ (i=1,\ldots,N)\), be independently and identically taken from a distribution \(f(y|\theta)\) depending on the real parameter \(\theta \in \Theta \subset \mathbb{R}\). Consider a one-sided testing problem \(\mathcal{H}_{0}: \theta \leq \theta_{0}\) versus \(\mathcal{H}_{A}: \theta > \theta_{0},\) where \(\theta_{0}\) denotes the performance goal. Consider testing procedures with two paradigms:
\begin{align*}
\text{Frequentist testing procedure} & : T_{1}(\mathbf{y}_{N}) > \lambda_{1} \Longleftrightarrow \text{Reject } \mathcal{H}_{0}; \\
\text{Bayesian testing procedure} & : T_{2}(\mathbf{y}_{N}) > \lambda_{2} \Longleftrightarrow \text{Reject } \mathcal{H}_{0},
\end{align*}
where \(T_{1}(\mathbf{y}_{N})\) is the maximum likelihood estimator and \(T_{2}(\mathbf{y}_{N})\) is the Bayesian test statistics based on posterior probability approach, that is, $T_{2}(\mathbf{y}_{N})=\mathbb{P}[\theta > \theta_{0} | \mathbf{y}_{N}]$. $\lambda_{1}$ and $\lambda_{2}$ denote threshold values for the testing procedures. For frequentist testing procedure, we assume that $T_{1}(\mathbf{y}_{N})$ itself serves as the frequentist test statistics of which higher values cast doubt against the null hypothesis \(\mathcal{H}_{0}\), and \(p(\mathbf{y}_{N})\) denotes the p-value. For Bayesian testing procedure, assume that the prior density $\pi(\theta)$ is continuous and positive on the parameter space $\Theta$. 

Under the regularity conditions necessary for the validity of normal asymptotic theory of the maximum likelihood estimator and posterior distribution, and assuming the null hypothesis to be true, it holds that
\begin{align}
\label{eq:asympotoric_eq_Bayesian_Frequentist}
%T_{2}(\mathbf{y}_{N}) \approx 1 - p(\mathbf{y}_{N})\quad \text{for large } N,
\mathbb{P}[\theta > \theta_{0} | \mathbf{y}_{N}]\approx 1 - p(\mathbf{y}_{N})\quad \text{for large } N,
\end{align}
independently of the form of $\pi(\theta)$.

% ($P(y_{1:n}) = P[\phi_{\text{MLE}}( y_{1:n}^{R}) > \phi_{\text{MLE}}( y_{1:n}) |  y_{1:n}^{R}\sim p(y|\theta_{*})]$.)  
% and the Bayesian test statistics $\phi_{\pi}( y_{1:n})$  is approximately $1- P(y_{1:n})$ for large sample $n$, where $P(y_{1:n}) =  1 - \Phi(\sqrt{I(\theta_{*})} (\theta_{*} - \phi_{\text{MLE}}( y_{1:n})))$ is the asymptotic p-value of the test procedure $A_{F}(y_{1:n},\lambda_{F})$ and $I(\theta)=E[(\partial^{2}/\partial \theta^{2}) \log\ p(y_{1:n} | \theta)]$ is the Fisher information. $\Phi(x)$ is the cumulative distribution function of standard normal distribution.
%\begin{align}
%A_{FB}(y_{1:n},\lambda_{FB}) : ``\phi_{\text{FB}}( y_{1:n}) > \lambda_{FB} \Longleftrightarrow \text{Reject } H_{0}",
%\end{align}
%where $\phi_{\text{FB}}( y_{1:n}) = 1 - \Phi(\sqrt{I(\theta_{0})} (\theta_{*} - \phi_{\text{MLE}}( y_{1:n})))$
\end{theorem}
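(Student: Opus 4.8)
The plan is to exploit the Bernstein--Von Mises (BvM) theorem on the Bayesian side and the standard asymptotic normality of the MLE on the frequentist side, and then observe that both tail probabilities are, to leading order, the same Gaussian tail evaluated at the same standardized point. First I would set up notation: let $\hat\theta_N = T_1(\mathbf{y}_N)$ be the MLE, let $I(\theta)$ denote the Fisher information for a single observation, and work under $\mathcal{H}_0$ at the boundary value $\theta_0$ (this is the least-favorable configuration and the natural point at which ``assuming the null to be true'' is pinned down for a one-sided test). By classical regularity conditions, $\sqrt{N}\,(\hat\theta_N - \theta_0) \xrightarrow{d} \mathcal{N}(0, I(\theta_0)^{-1})$, so the one-sided p-value is $p(\mathbf{y}_N) = \mathbb{P}_{\theta_0}\big[T_1 \ge \hat\theta_N^{\mathrm{obs}}\big] \approx 1 - \Phi\big(\sqrt{N I(\theta_0)}\,(\hat\theta_N - \theta_0)\big)$, hence $1 - p(\mathbf{y}_N) \approx \Phi\big(\sqrt{N I(\theta_0)}\,(\hat\theta_N - \theta_0)\big)$.

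Next I would apply BvM to the fitting posterior: since $\pi(\theta)$ is continuous and positive at $\theta_0$, the posterior $\pi(\theta\mid\mathbf{y}_N)$ is asymptotically $\mathcal{N}\big(\hat\theta_N,\, (N I(\hat\theta_N))^{-1}\big)$ in total variation, with the prior washing out. Therefore the Bayesian test statistic is
\begin{align*}
T_2(\mathbf{y}_N) = \mathbb{P}[\theta > \theta_0 \mid \mathbf{y}_N] \approx 1 - \Phi\!\left(\sqrt{N I(\hat\theta_N)}\,(\theta_0 - \hat\theta_N)\right) = \Phi\!\left(\sqrt{N I(\hat\theta_N)}\,(\hat\theta_N - \theta_0)\right).
\end{align*}
Comparing the two displays, the only discrepancy is $I(\hat\theta_N)$ versus $I(\theta_0)$ inside the square root; by consistency of $\hat\theta_N$ and continuity of $I(\cdot)$, $I(\hat\theta_N) \to I(\theta_0)$, and since $\sqrt{N}(\hat\theta_N-\theta_0)=O_p(1)$ the difference between the two standardized arguments is $o_p(1)$. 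Continuity of $\Phi$ then gives $T_2(\mathbf{y}_N) - (1 - p(\mathbf{y}_N)) \xrightarrow{p} 0$ under $\theta_0$, which is the claimed asymptotic equivalence, manifestly independent of the form of $\pi$.

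The main obstacle — and the place where I would be most careful — is making the ``$\approx$'' in the statement precise and uniform enough that the two separate approximations can be chained. BvM gives convergence in total variation of the posterior to a \emph{random} normal centered at $\hat\theta_N$, so the error in approximating $\mathbb{P}[\theta>\theta_0\mid\mathbf{y}_N]$ by the Gaussian tail is itself a random quantity that must be shown to vanish in probability; likewise the Edgeworth-type error in the p-value approximation must be controlled on the same event. The cleanest route is to phrase everything as convergence in probability under $\mathbb{P}_{\theta_0}$ of the difference $T_2(\mathbf{y}_N) - (1-p(\mathbf{y}_N))$ to zero, rather than claiming a deterministic identity; handling the $I(\hat\theta_N)$ vs.\ $I(\theta_0)$ substitution inside the already-approximated tail is the one genuinely nontrivial bookkeeping step, and invoking a local asymptotic normality / Slutsky argument there is what ties the two paradigms together. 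A secondary subtlety worth a remark is that the equivalence is local to the boundary $\theta_0$: deep in the interior of $\Theta_0$ both quantities degenerate to $0$, so the interesting regime is $\hat\theta_N = \theta_0 + O(N^{-1/2})$, which is exactly where the Gaussian approximations are sharp.
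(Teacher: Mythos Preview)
Your proposal is correct and follows essentially the same route as the paper's proof: asymptotic normality of the MLE to approximate the p-value as a Gaussian tail, Bernstein--Von Mises to approximate the posterior tail probability by the same Gaussian tail, and then matching the two expressions at $\theta_t=\theta_0$. Your treatment is in fact slightly more careful than the paper's---you explicitly address the $I(\hat\theta_N)$ versus $I(\theta_0)$ discrepancy via consistency and a Slutsky-type argument, whereas the paper states the BvM variance directly as $1/I(\theta_t)$ and simply substitutes $\theta_t=\theta_0$---but the underlying argument is the same.
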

\begin{proof}
Let \(\theta_{t}\) denote the true data generating parameter. Due to the asymptotic normality of the maximum likelihood estimator, if the sample size $N$ is sufficiently large, then the maximum likelihood estimator $T_{1}(\mathbf{y}_{N})$ is normal with mean $\theta_{t}$ and variance $1/I(\theta_{t})$, where \(I(\theta) = \mathbb{E}[(\partial^{2}/\partial \theta^{2}) \log\ f(\mathbf{y}_{N} | \theta)]\) denotes the Fisher information. By setting the performance goal $\theta_{0}$ as the truth \(\theta_{t}\), we can express the p-value of the frequentist test procedure as follow:
\begin{align}
\nonumber
p(\mathbf{y}_{N}) &= \mathbb{P}\left[T_{1}(\mathbf{y}_{N}^{rep}) > T_{1}(\mathbf{y}_{N}) | 
 \mathbf{y}_{N}^{rep}\sim f(y|\theta_{0})\right]\\
 \nonumber
 &=
 \mathbb{P}\left[\sqrt{I(\theta_{0})} \cdot ( T_{1}(\mathbf{y}_{N}^{rep}) - \theta_{0}) >
 \sqrt{I(\theta_{0})} \cdot (  T_{1}(\mathbf{y}_{N}) - \theta_{0}) | 
  \mathbf{y}_{N}^{rep} \sim f(y|\theta_{0})\right]\\
  \nonumber
 &\approx
  \mathbb{P}\left[Z>
 \sqrt{I(\theta_{0})} \cdot (  T_{1}(\mathbf{y}_{N}) - \theta_{0}) | 
  \mathbf{y}_{N}^{rep} \sim f(y|\theta_{0})\right] \quad \text{for large } N\\
  \nonumber
 &=
    \mathbb{P}\left[Z<
 \sqrt{I(\theta_{0})} \cdot ( \theta_{0} -   T_{1}(\mathbf{y}_{N})) | 
 \mathbf{y}_{N}^{rep} \sim f(y|\theta_{0})\right]\\
 \label{eq:p_value}
&= \Phi\left(\sqrt{I(\theta_{0})} \cdot ( \theta_{0} -   T_{1}(\mathbf{y}_{N}))\right),
\end{align}
where $Z$ and $\Phi(x)$ represent the random variable and cumulative distribution function of standard normal distribution, respectively. $\mathbf{y}_{N}^{rep}$ represents a hypothetical representation of data $\mathbf{y}$ under the null distribution.

On the other hand, the Bernstein-Von Mises theorem \citep{johnstone2010high,walker1969asymptotic} states that, if the sample size \(N\) is sufficiently large, the posterior distribution \(\pi(\theta|\mathbf{y}_{N})\) is approximately normally distributed with the mean same as \(T_{1}(\mathbf{y}_{N})\) (the maximum likelihood estimator) and the variance same as the reciprocal of the Fisher information evaluated at the truth, i.e., \(1/I(\theta_{t})\), independently of the form of prior $\pi(\theta)$.

Therefore, the following equation holds 
\begin{align}
\nonumber
T_{2}(\mathbf{y}_{N}) &= 
\mathbb{P}\left[\theta > \theta_{0} | \mathbf{y}_{N}\right]  \\
\nonumber
&=
\mathbb{P}\left[\sqrt{I(\theta_{t})} \cdot (\theta  -  T_{1}(\mathbf{y}_{N}))>
\sqrt{I(\theta_{t})} \cdot ( \theta_0   -  T_{1}(\mathbf{y}_{N}))| \mathbf{y}_{N}\right]  \\
\nonumber
&\approx
\mathbb{P}\left[Z >
\sqrt{I(\theta_{t})} \cdot ( \theta_0   -  T_{1}(\mathbf{y}_{N}))| \mathbf{y}_{N}\right] \quad \text{for large } N \\
\nonumber
&=1 - \Phi\left(\sqrt{I(\theta_{t})} \cdot ( \theta_0   -  T_{1}(\mathbf{y}_{N}))\right),
\end{align}
Therefore, if the true data generating parameter $\theta_{t}$ is the performance goal $\theta_{0}$, it holds
\begin{align}
\label{eq:thruth_Bayesian_test_statistics}
T_{2}(\mathbf{y}_{N}) \approx 1 - \Phi\left(\sqrt{I(\theta_{0})} \cdot ( \theta_0   -  T_{1}(\mathbf{y}_{N}))\right)\quad \text{for large } N.
\end{align}
By the asymptotic equations (\ref{eq:p_value}) and (\ref{eq:thruth_Bayesian_test_statistics}), it holds $T_{2}(\mathbf{y}_{N}) \approx 
1 - p(\mathbf{y}_{N})$ if $N$ is sufficiently large.
\end{proof}

Typically, for regulatory submissions, the significance level of the one-sided superiority test (e.g., $\mathcal{H}_{0}: \theta \leq \theta_{0}$ versus $\mathcal{H}_{A}: \theta > \theta_{0}$, with the performance goal $\theta_{0}$) is $2.5\%$. To achieve a one-sided significance level of $\alpha = 0.025$ for a frequentist design, one would use the decision rule $p(\mathbf{y}_{N}) < 0.025$ to reject the null hypothesis, where $p(\mathbf{y}_{N})$ denotes the p-value. The p-value is often called the `observed significance level' because the value by itself represents the evidence against a null hypothesis based on the observed data $\mathbf{y}_{N}$ \citep{cox2020statistical}.

Theorem \ref{thm:Bayes_Freq_same} states that the value of the Bayesian tail probability (\ref{eq:Bayesian_test_statistics_Post_Prob_Approach}) itself also serves as the evidence for the statistical significance. Furthermore, a Bayesian decision rule of $\mathbb{P}[\theta > \theta_{0} | \mathbf{y}_{N}] > 0.975$ will lead to the one-sided significance level of $0.025$, regardless of the choice of prior, whether it is informative or non-informative, under regularity conditions, if the sample size $N$ is sufficiently large.

\begin{figure}[h!]
\centering
\includegraphics[width=\textwidth]{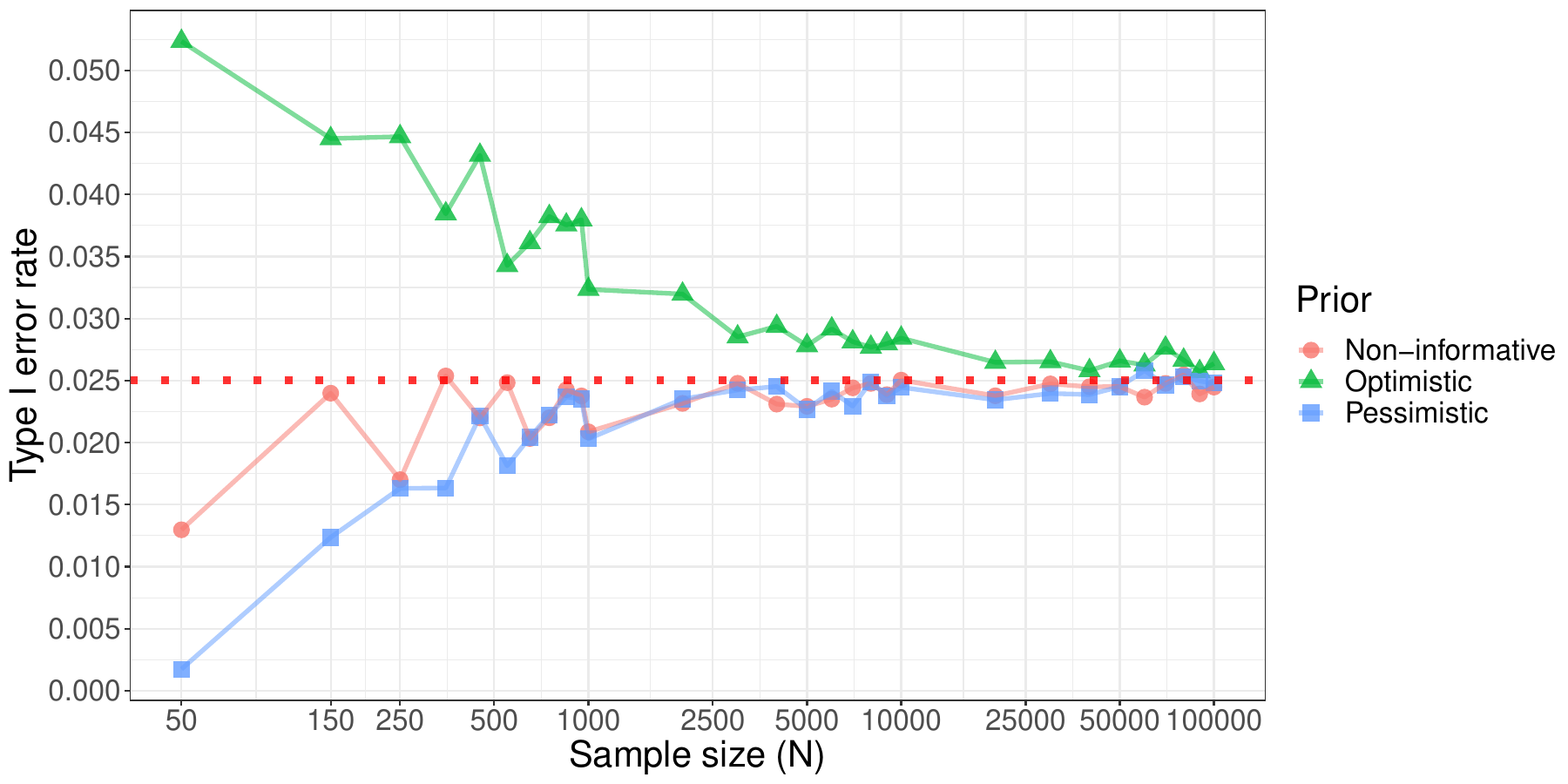}
\caption{\baselineskip=12pt Type I error rates of Bayesian designs based on the beta-binomial model with three prior options for testing $\mathcal{H}_{0}: \theta \geq \theta_{0}$ versus $\mathcal{H}_{a}: \theta < \theta_{0}$, where $\theta_{0} = 0.12$. Prior options are (1) a non-informative prior with $a = b = 1$, (2) an optimistic prior with $a = 0.8$ and $b = 16$, and (3) a pessimistic prior with $a = 3.5$ and $b = 20$.}
\label{fig:Bernstein_Von_Mise_Post_Prob_Approach}
\end{figure}

We illustrate Theorem~\ref{thm:Bayes_Freq_same} by using the beta-binomial model described in Subsection~\ref{subsec:Calibration of Bayesian Trial Design} as an example. Recall that, under sample sizes of $N=100$, $N=150$, and $N=250$, Bayesian designs with non-informative priors meet the type I error requirement, while Bayesian designs with optimistic and pessimistic priors inflate and deflate the type I error, respectively (see Table~\ref{tab:Beta-binomial_single_arm_design}). Under the same settings, we now increase the sample size $N$ up to 100,000 to explore the asymptotic behavior of the Bayesian designs. Figure~\ref{fig:Bernstein_Von_Mise_Post_Prob_Approach} shows the results, where the inflation and deflation induced by the choice of the prior are getting washed out as $N$ increases. When $N$ is as large as 25,000 or more, the type I errors of all the Bayesian designs approximately achieve the type I error rate of 2.5\%, implying that the asymptotic equation~(\ref{eq:Bayesian_test_statistics_Post_Prob_Approach}) holds.

In practice, the sample size ($N$) for pivotal trials in medical device development and phase II trials in drug development often leads to a modest sample size, and there are practical challenges limiting the feasibility of conducting larger studies \citep{faris2017fda}. Consequently, the asymptotic equation (\ref{eq:Bayesian_test_statistics_Post_Prob_Approach}) may not hold in such limited sample sizes. Therefore, sponsors need to conduct extensive simulation experiments in the pre-planning of Bayesian clinical trials to best leverage existing prior information while controlling the type I error rate.
\subsection{Bayesian Group Sequential Design}\label{subsec:Bayesian Group Sequential Design}
An adaptive design is defined as a clinical study design that allows for prospectively planned modifications based on accumulating study data without undermining the study's integrity and validity \citep{Bayesian2010FDAGuidance,ADAPTIVEDRUG2019FDAGuidance,ADAPTIVEMD2016FDAGuidance}. In nearly all situations, to preserve the integrity and validity of a study, modifications should be prospectively planned and described in the clinical study protocol prior to initiation of the study \citep{Bayesian2010FDAGuidance}. Particularly, for Bayesian adaptive designs, including Bayesian group sequential designs, clinical trial simulation is a fundamental tool to explore, compare, and understand the operating characteristics, statistical properties, and adaptive decisions to answer the given research questions \citep{mayer2019simulation}. 

Posterior probability approach is widely adopted as a decision rule for complex innovative designs. In such designs, the choice of the threshold value(s) often depends on several factors, including the complexity of trial design, specific objectives, the presence of interim analyses, ethical considerations, statistical methodology, prior information, and type I \& II error requirements.

Consider a multi-stage design where the sponsor wants to use the posterior probability approach as an early stopping option for the trial success at interim analyses as well as the success at the final analysis. Let $\mathbf{y}^{(k)}$ ($k=1,\ldots,K$) denote the analysis dataset at the $k$-th interim analysis (thus, the $K$-th interim analysis is the final analysis), and $\theta$ is the parameter of main interest. The sponsor wants to test $\mathcal{H}_{0}: \theta \in \Theta_{0}$ versus $\mathcal{H}_{a}: \theta \in \Theta_{a}$, where $\Theta = \Theta_{0} \cup \Theta_{a}$, and $\Theta_{0}$ and $\Theta_{a}$ are disjoint. One can use the following sequential decision criterion:
\begin{align*}
\text{1-st interim analysis} & : T(\textbf{y}^{(1)}) = \mathbb{P}[\theta \in \Theta_{a} | \textbf{y}^{(1)}] > \lambda_{1} \Longleftrightarrow \text{Reject } \mathcal{H}_{0},\\
\text{2-nd interim analysis} & : T(\textbf{y}^{(2)}) = \mathbb{P}[\theta \in \Theta_{a} | \textbf{y}^{(2)}] > \lambda_{2} \Longleftrightarrow \text{Reject } \mathcal{H}_{0},\\
&\quad\quad\quad\quad\quad \vdots\\
\text{K-1-th interim analysis} & : T(\textbf{y}^{(K-1)}) = \mathbb{P}[\theta \in \Theta_{a} | \textbf{y}^{(K-1)}] > \lambda_{K-1} \Longleftrightarrow \text{Reject } \mathcal{H}_{0},\\
\text{K-th interim analysis} & : T(\textbf{y}^{(K)}) = \mathbb{P}[\theta \in \Theta_{a} | \textbf{y}^{(K)}] > \lambda_{K} \Longleftrightarrow \text{Reject } \mathcal{H}_{0},
\end{align*}
Figure \ref{fig:BGSD_general} displays the processes of decision rules based on single-stage design and $K$-stage group sequential design. In practice, a general rule suggests that planning for a maximum of five interim analyses ($K=5$) is often sufficient \citep{pocock2013clinical}. In single-stage design, there is only one opportunity to declare the trial a success. In contrast, sequential design offers $K$ chances to declare success at interim analyses and the final analysis. However, having $K$ opportunities to declare success implies that there are $K$ ways the trial can be falsely considered successful when it's not truly successful. These are the $K$ false positive scenarios, and controlling the overall type I error rate is crucial to maintain scientific integrity for regulatory submission \citep{Bayesian2010FDAGuidance}.

\begin{figure}[h!]
\centering
\includegraphics[width=\textwidth]{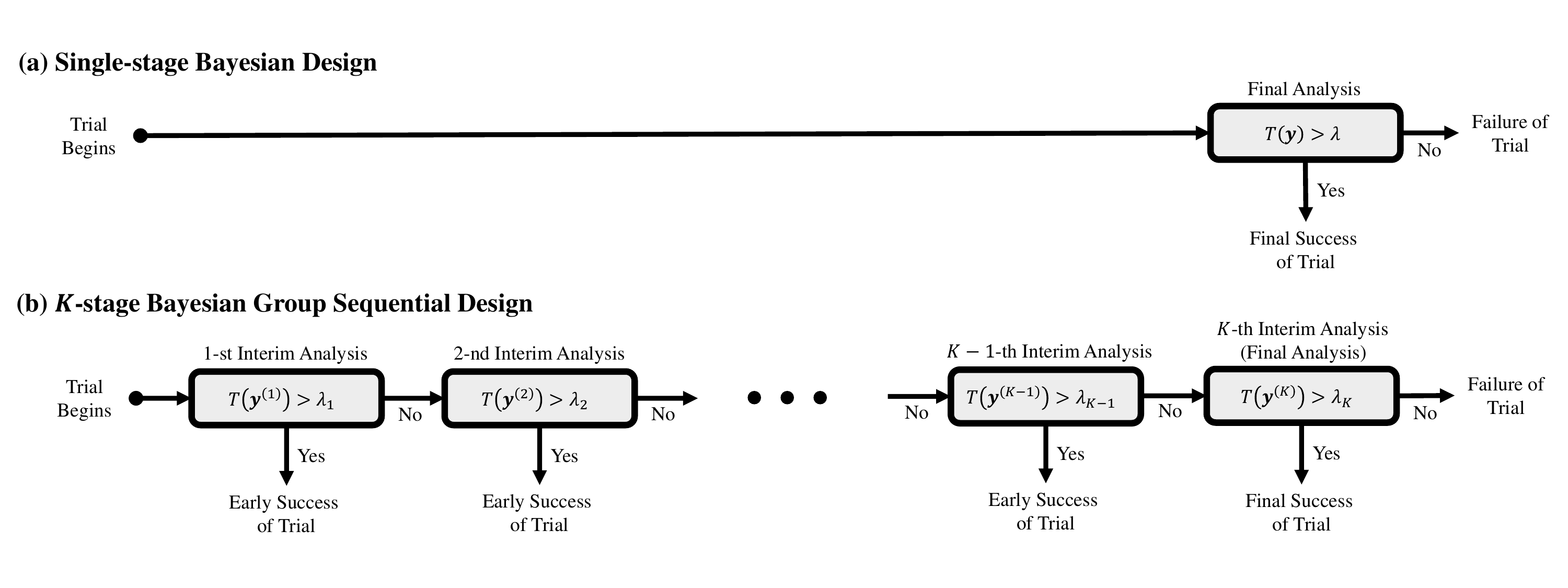}
\caption{\baselineskip=12pt Processes of fixed design (Panel (a)) and sequential design (Panel (b)). The former allows only a single chance to declare success for the trial, while the latter allows $K$ chances to declare success. The test statistic for the former design is denoted as $T(\textbf{y}) = \mathbb{P}[\theta \in \Theta_{a} | \textbf{y}]$, and for the latter design, they are $T(\textbf{y}^{(k)}) = \mathbb{P}[\theta \in \Theta_{a} | \textbf{y}^{(k)}]$, where $(k=1, \cdots, K)$. In both designs, threshold values ($\lambda$ and $\lambda_{k}$, $k=1, \cdots, K$) should be pre-specified before the trial begins to control the type I error rate.}
\label{fig:BGSD_general}
\end{figure}

Similar to frequentist group sequential designs, our primary concern here is to control the overall type I error rate of the sequential testing procedure. The overall type I error rate refers to the probability of falsely rejecting the null hypothesis $\mathcal{H}_{0}$ at any analysis, given that $\mathcal{H}_{0}$ is true. In this example, the overall type I error rate is given by:
\begin{align}
\label{eq:overall_type_I_error_BGSD}
&\mathbb{P}[T(\textbf{y}^{(1)}) > \lambda_{1} \text{ or } \cdots \text{ or } T(\textbf{y}^{(K)}) > \lambda_{K}|\textbf{y}^{(l)}\sim f(y|\theta_{0}), \, (l=1,\cdots,K)]\\
\nonumber
&=\mathbb{P}[T(\textbf{y}^{(1)}) > \lambda_{1} | \textbf{y}^{(1)}\sim f(y|\theta_{0})]\\
\nonumber
&+
\mathbb{P}[T(\textbf{y}^{(1)}) \leq \lambda_{1} \text{ and } T(\textbf{y}^{(2)}) > \lambda_{2}| \textbf{y}^{(l)}\sim f(y|\theta_{0}),\, (l=1,2)]\\
\nonumber
&+
\mathbb{P}[T(\textbf{y}^{(1)}) \leq \lambda_{1} \text{ and } T(\textbf{y}^{(2)}) \leq \lambda_{2} \text{ and } T(\textbf{y}^{(3)}) > \lambda_{3}| \textbf{y}^{(l)}\sim f(y|\theta_{0}),\, (l=1,2,3)]\\
\nonumber
&+\cdots \\
\nonumber
&+
\mathbb{P}[T(\textbf{y}^{(l)}) \leq \lambda_{l}, (l=1,\cdots,K-1) \text{ and }T(\textbf{y}^{(K)}) > \lambda_{K}| \textbf{y}^{(l)}\sim f(y|\theta_{0}), \, (l=1,\cdots,K)],
\end{align}
%\begin{align}
%\nonumber &\mathbb{P}[T(\textbf{y}^{(1)}) > \lambda_{1} \text{ or } T(\textbf{y}^{(2)}) > \lambda_{2} \text{ or } \cdots \text{ or } T(\textbf{y}^{(K)}) > \lambda_{K}|\textbf{y}^{(l)}\sim f(y|\theta_{0}), \, (l=1,2,\cdots,K)]\\
%\label{eq:overall_type_I_error_BGSD}
%&=\sum_{k=1}^{\textit{K}} \mathbb{P}[T(\textbf{y}^{(l)}) \leq \lambda_{l}, (l=1,\cdots,k-1) \text{ and }T(\textbf{y}^{(k)}) > \lambda_{k}| \textbf{y}^{(l)}\sim f(y|\theta_{0}), \, (l=1,\cdots,k)] ,
%\end{align}
where $\theta_{0} \in \Theta_{0}$ denotes the null value which leads to the maximum type I error rate (for e.g., $\theta_{0}$ is the performance goal for a single-arm superiority design). Noting from Equation (\ref{eq:overall_type_I_error_BGSD}), the overall type I error rate is a summation of the error rates at each interim analysis. For the relevant calculations corresponding to the frequentist group sequential design, refer to page 10 of \citep{wassmer2016group}, where Bayesian test statistics $T(\mathbf{y}^{(l)})$ and thresholds $\lambda_{l}$ ($l=1,\ldots,K$) are replaced by Z-test statistics based on interim data $\mathbf{y}^{(k)}$ and pre-specified critical values, respectively.

The crucial design objective in the development of a Bayesian group sequential design is to control the overall type I error rate to be less than a significance level of $\alpha$ (typically, 0.025 for a one-sided test and 0.05 for a two-sided test). This objective is similar to what is typically achieved in its frequentist counterparts, such as O’Brien-Fleming \citep{o1979multiple} or Pocock plans \citep{pocock1977group}, or through the alpha-spending approach \citep{demets1994interim}. To achieve this objective, adjustments to the Bayesian thresholds $(\lambda_{1}, \ldots, \lambda_{K})$ are crucial, and this adjustment necessitates extensive simulation work. Failing to make these adjustments may result in an inflation of the overall type I error. For example, if one were to use the same thresholds of $\lambda_{l}=1-\alpha$ ($l=1, \ldots, K$) for all the interim analyses, then the overall type I error would lead to the value greater than $\alpha$ regardless of the maximum number of interim analyses. Furthermore, the overall type I error may eventually converge to $1$ as the number of interim analyses $K$ goes to infinity, similar to the behavior observed in a frequentist group sequential design \citep{armitage1969repeated}. Additionally, compared to single stage designs, group sequential designs may require a larger sample size to achieve the same power all else being equal, as there is an inevitable statistical cost for repeated analyses.

\subsection{Example - Two-stage Group Sequential Design based on Beta-Binomial Model}\label{subsec:Example - Two-stage Group Sequential Design based on Beta-Binomial Model}
We illustrate the advantage of using a Bayesian group sequential design compared to the single-stage Bayesian design described in Subsection \ref{subsec:Calibration of Bayesian Trial Design}. Similar research using frequentist designs can be found in \citep{pocock1982interim}. Recall that the previous design based on a non-informative prior led to a power of 86.90\% and a type I error of 2.31\% with a sample size of \(150\) and a threshold of \(\lambda = 0.975\) (Table \ref{tab:Beta-binomial_single_arm_design}). Our goal here is to convert the fixed-design into a two-stage design that is more powerful, while controlling the overall the type I error rate to be $\alpha \leq 0.025$. For fair comparison, we aim for the expected sample size \(E(N)\) of the two-stage design to be as close to \(150\) as possible. Having a smaller value of \(E(N)\) than \(150\) is even more desirable in our setting because it means that two-stage design can shorten the length of the trial of the fixed design. To compensate for the inevitable statistical cost of repeated analyses, the total sample size of the two-stage design is set to \(N=162\), representing an 8\% increase in the final sample size of the single-stage design. The stage 1 sample size \(N_{1}\) and stage 2 sample size \(N_{2}\) are divided in the ratios of \(3:7\), \(5:5\), or \(7:3\) to see the pattern of probability of early termination with different timing of interim analysis. Finally, we choose \(\lambda_{1} = 0.996\) and \(\lambda_{2} = 0.978\) as the thresholds for the interim analysis and the final analysis, respectively. Note that a more stringent stopping rule has been applied for early interim analyses than for the final analysis, similar to the proposed design of O'Brien and Fleming \citep{o1979multiple}. The same adaptation procedure will be taken to the single-stage designs with final sample sizes of \(100\) and \(200\) as reference.

\begin{table}[h]
\caption{Operating characteristics of two-stage designs based on beta-binomial model.}
%\begin{adjustwidth}{-\extralength}{0cm}
		\newcolumntype{C}{>{\centering\arraybackslash}X}
\begin{scriptsize}
\begin{tabular}{cccccccc}
\hline
\begin{tabular}[c]{@{}c@{}}Total \\ Sample Size\\ ($N$)\end{tabular} & \begin{tabular}[c]{@{}c@{}}Stage 1 \\ Sample Size\\ ($N_1$)\end{tabular} & \begin{tabular}[c]{@{}c@{}}Stage 2 \\ Sample Size\\ ($N_2$)\end{tabular} & \begin{tabular}[c]{@{}c@{}}Expected \\ Sample Size\\ ($E(N)$)\end{tabular} & \begin{tabular}[c]{@{}c@{}}Probability of \\ Early Termination\\ ($PET$)\end{tabular} & \begin{tabular}[c]{@{}c@{}}Type I Error\\ ($\alpha$)\end{tabular} & \begin{tabular}[c]{@{}c@{}}Power\\ ($1-\beta$)\end{tabular} & \begin{tabular}[c]{@{}c@{}}\% Change in Power \\ Compared with\\ Single-stage Design\end{tabular} \\ \hline
{\color[HTML]{222222} }                                              & {\color[HTML]{222222} 32}                                                & 76                                                                       & 108                                                                        & {\color[HTML]{222222} 0.0000}                                                         & 0.0199                                                            & 0.7053                                                      & +14.10                                                                                          \\
{\color[HTML]{222222} }                                              & 54                                                                       & 54                                                                       & 105                                                                        & 0.0603                                                                                & 0.0220                                                            & 0.6945                                                      & +12.36                                                                                          \\
\multirow{-3}{*}{{\color[HTML]{222222} 108}}                         & 76                                                                       & 32                                                                       & 100                                                                        & {\color[HTML]{222222} 0.2632}                                                         & {\color[HTML]{222222} 0.0219}                                     & 0.7094                                                      & +14.77                                                                                          \\ \hline
{\color[HTML]{222222} }                                              & 49                                                                       & 113                                                                      & 153                                                                        & {\color[HTML]{222222} 0.0819}                                                         & {\color[HTML]{222222} 0.0200}                                     & 0.8865                                                      & +2.01                                                                                           \\
{\color[HTML]{222222} }                                              & 81                                                                       & 81                                                                       & 145                                                                        & {\color[HTML]{222222} 0.2202}                                                         & {\color[HTML]{222222} 0.0228}                                     & 0.8862                                                      & +1.97                                                                                           \\
\multirow{-3}{*}{{\color[HTML]{222222} 162}}                         & 113                                                                      & 49                                                                       & 146                                                                        & {\color[HTML]{222222} 0.3348}                                                         & {\color[HTML]{222222} 0.0208}                                     & 0.8860                                                      & +1.95                                                                                           \\ \hline
{\color[HTML]{222222} }                                              & 65                                                                       & 151                                                                      & 191                                                                        & 0.1659                                                                                & {\color[HTML]{222222} 0.0219}                                     & 0.9598                                                      & +4.50                                                                                           \\
{\color[HTML]{222222} }                                              & 108                                                                      & 108                                                                      & 177                                                                        & {\color[HTML]{222222} 0.3642}                                                         & {\color[HTML]{222222} 0.0205}                                     & 0.9570                                                      & +4.20                                                                                           \\
\multirow{-3}{*}{{\color[HTML]{222222} 216}}                         & 151                                                                      & 65                                                                       & 183                                                                        & 0.5120                                                                                & {\color[HTML]{222222} 0.0197}                                     & 0.9568                                                      & +4.18                                                                                           \\ \hline
\end{tabular}
\\
\label{tab:Beta-binomial_GSD}
\baselineskip=12pt
Note: All two-stage designs are based on the beta-binomial model with a non-informative prior. The expected sample size (\(E(N)\)) and the probability of early termination (PET) have been calculated under \(\mathcal{H}_{a}\). \(E(N) = N_{1} + (1 - \text{PET}) \cdot N_{2}\), where \(N_{1}\) and \(N_{2}\) denote the sample sizes for stages 1 and 2, respectively. Thresholds for stage 1 and stage 2 are \(\lambda_{1} = 0.996\) and \(\lambda_{2}  = 0.978\), respectively, for all designs. The percentage change in power has been calculated by comparing with the power obtained by the single-stage design (non-informative) in Table \ref{tab:Beta-binomial_single_arm_design}.
\end{scriptsize}
%\end{adjustwidth}
\end{table}

Table \ref{tab:Beta-binomial_GSD} shows the results of the power analysis. It is observed that the overall type I error rates have been protected at 2.5\% for all the considered designs. The expected sample sizes of the two-stage designs using a total sample size of \(N=162\) are \(E(N)=153\) (\(N_{1}:N_{2} = 3:7\)), \(E(N)=145\) (\(N_{1}:N_{2} = 5:5\)), and \(E(N)=146\) (\(N_{1}:N_{2} = 7:3\)), with the power improved from 86.9\% (single-stage design, see Table \ref{tab:Beta-binomial_single_arm_design}) to approximately 88.6\% for all three cases. The power gain is even greater for the two-stage designs using a total sample size of \(N=216\), where the expected sample sizes are smaller than \(N=200\), which is advantageous for using a group-sequential design. Power gains occur for the two-stage designs using a total sample size of \(N=108\) as well, but the expected sample sizes are larger than \(N=100\); therefore, the single-stage design would be preferable in terms of expected sample sizes.

To summarize, the results show that, with an 8\% increase in the final sample size of the single-stage design, we can construct a two-stage design in which the expected sample size is smaller or equal to the final sample size of the single-stage design. This is while still protecting the type I error rate below 2.5\% and benefiting from an increase in the overall power of the designs by as much as 14\% (N=108), 2\% (N=162), and 4\% (N=216), assuming the alternative hypothesis is true. In other words, a group sequential design allowing the claim of early success at interim analysis can help save costs by possibly reducing length of a trial when there is strong evidence of a treatment effect for the new medical device. Even if the evidence is not as strong (null hypothesis seems more likely to be true), the potential risk for the sponsor would be the additional cost spent on enrolling 8\% more patients than with the single-stage design.

\section{Decision Rule - Predictive Probability Approach}\label{sec:Bayesian Decision Rule - Predictive Probability Approach}
\subsection{Predictive Probability Approach}\label{subsec:Predictive Probability Approach}
The primary motivation for employing the predictive probability approach in decision-making is to answer the question at an interim analysis: ``Is the trial likely to present compelling evidence in favor of the alternative hypothesis if we gather additional data, potentially up to the maximum sample size?" This question fundamentally involves predicting the future behavior of patients in the remainder of the study, where the prediction should be based on the interim data observed thus far. Consequently, its idea is akin to measuring conditional power given interim data in the stochastic curtailment method \citep{lachin2005review, gordon1982stochastically}. The key quantity here is the predictive probability of observing a statistically significant treatment effect if the trial were to proceed to its predefined maximum sample size, calculated in a fully Bayesian way. 

One of the most standard applications of predictive probability approach for regulatory submission is the interim analysis for futility stopping (i.e., early stopping the trial in favor of the null hypothesis) \citep{freidlin2002comment, saville2014utility, polack2020safety,snapinn2006assessment}. This is motivated primarily by an ethical imperative; the goal here is to assess whether the trial, based on interim data, is unlikely to demonstrate a significant treatment effect even if it continues to its planned completion. This information can then be utilized by the monitoring committee to assess whether the trial is still viable midway through the trial \citep{demets2016data}. The study will stop for lack of benefit if the predictive probability of success at the final analysis is too small. Other areas where this approach are useful include the early termination for success with consideration of the current sample size (i.e., early stopping the trial in favor of the alternative hypothesis) \citep{wilber2010comparison, lee2008predictive, herson1979predictive}, or sample size re-estimation to evaluate whether the planned sample size is sufficiently large to detect the true treatment effect \citep{broglio2014not}.

We focus on illustrating the use of the predictive probability approach for futility interim analysis. To simplify the discussion, we consider the two-stage futility design where only one interim futility analysis exists. The idea illustrated here can be extended to a multi-stage design by implementing the following testing procedure at each of the interim analyses in the multi-stage design. The logic explained here can be extended to the applications of early success claims and sample size re-estimation after a few modifications.  %to the sample size considered for the relevant calculation in the predictive probability. 

Suppose that $\mathbf{y}^{(1)}$ and $\mathbf{y}^{(2)}$ denote the datasets at the interim and final analyses, respectively, and $\theta$ is the main parameter of interest. We distinguish all incremental quantities from cumulative ones using the notation ``tilde." Therefore, $\tilde{\mathbf{y}}^{(2)}$ and $\mathbf{y}^{(2)} = \{\mathbf{y}^{(1)},\tilde{\mathbf{y}}^{(2)} \}$ represent the incremental stage 2 data and the final data, respectively.

At the final analysis, a sponsor plans to test the null hypothesis $\mathcal{H}_{0}: \theta \in \Theta_{0}$ versus the alternative hypothesis $\mathcal{H}_{a}: \theta \in \Theta_{a}$, where $\Theta = \Theta_{0} \cup \Theta_{a}$, and $\Theta_{0}$ and $\Theta_{a}$ are disjoint sets. Suppose that $H(\mathbf{y}^{(2)})$ is the final test statistic to be used, and a higher value casts doubt that the null hypothesis is true. Therefore, the sponsor will claim the success of the trial if it is demonstrated that $H(\mathbf{y}^{(2)}) > \lambda_{2}$ with a predetermined threshold $\lambda_{2}$, where the threshold is chosen to satisfy the type I \& II error requirement of the futility design. It is at the sponsor's discretion whether to use frequentist or Bayesian statistics to construct the final test statistic $H(\mathbf{y}^{(2)})$. This is because the purpose of using the predictive probability approach is to make a decision at the interim analysis, not at the final analysis.

At the interim analysis, the outcomes from stage 1 patients $\mathbf{y}^{(1)}$ are observed. We measure the predictive probability of success at the final analysis, which is the Bayesian test statistics of the predictive probability approach represented as a functional $\mathcal{G}(\cdot): \mathcal{Q}_{\tilde{\mathbf{y}}^{(2)}|\mathbf{y}^{(1)}} \rightarrow [0,1]$, such that:
\begin{align}
\nonumber
\mathcal{G}\{f(\tilde{\mathbf{y}}^{(2)}|\mathbf{y}^{(1)})\} & = T(\mathbf{y}^{(1)}) = \mathbb{P}[H(\mathbf{y}^{(1)},\tilde{\mathbf{y}}^{(2)}) > \lambda_{2}| \mathbf{y}^{(1)} ] \\
\label{eq:Bayesian_test_statistics_Pred_Prob_Approach}
&= \int \mathbf{1}(H(\mathbf{y}^{(1)},\tilde{\mathbf{y}}^{(2)}) > \lambda_{2}) \cdot f(\tilde{\mathbf{y}}^{(2)}|\mathbf{y}^{(1)}) d\tilde{\mathbf{y}}^{(2)},
\end{align}
where $\mathcal{Q}_{\tilde{\mathbf{y}}^{(2)}|\mathbf{y}^{(1)}}$ represents the collection of posterior predictive distributions of stage 2 patient outcome $\tilde{\mathbf{y}}^{(2)}$ given the interim data $\mathbf{y}^{(1)}$. As seen from the integral (\ref{eq:Bayesian_test_statistics_Pred_Prob_Approach}), the fully Bayesian nature of the predictive probability approach is characterized by its integration of final decision results $\mathbf{1}(H(\mathbf{y}^{(1)},\tilde{\mathbf{y}}^{(2)}) > \lambda_{2})$ over the data space of all possible scenarios of future patients' outcome $\tilde{\mathbf{y}}^{(2)}$, with the weight of the integral respecting the posterior predictive distribution $f(\tilde{\mathbf{y}}^{(2)}|\mathbf{y}^{(1)})$. Note that the posterior predictive distribution is again a mixture distribution of the likelihood function of the future outcome $\tilde{\mathbf{y}}^{(2)}$ and the posterior distribution given the interim data:
\begin{align*}
f(\tilde{\mathbf{y}}^{(2)}|\mathbf{y}^{(1)}) &= \int f(\tilde{\mathbf{y}}^{(2)}|\theta) \cdot \pi(\theta|\mathbf{y}^{(1)})d\theta.
\end{align*}

%The predictive probability (\ref{eq:Bayesian_test_statistics_Pred_Prob_Approach}) is distinguished from the predictive power \citep{} where the target integrand is the conditional power $\mathbb{P}_{\theta}[H(\mathbf{y}^{(1)},\tilde{\mathbf{y}}^{(2)}) > \lambda_{2} | \mathbf{y}^{(1)}]$, with respect to the posterior distribution $\pi(\theta|\mathbf{y}^{(1)})$ as the weight function. 

It is important to note that the predictive probability (\ref{eq:Bayesian_test_statistics_Pred_Prob_Approach}) differs from the predictive power \citep{wang2013evaluating,chuang2006sample}, which represents a weighted average of the conditional power, given by $\int \mathbb{P}[H(\mathbf{y}^{(1)}, \tilde{\mathbf{y}}^{(2)}) > \lambda_{2} | \theta] \cdot \pi(\theta | \mathbf{y}^{(1)}) d\theta$. The calculation of the predictive probability (\ref{eq:Bayesian_test_statistics_Pred_Prob_Approach}) follows the fully Bayesian paradigm, while the predictive power a mix of the both frequentist and Bayesian paradigms in the sense that it is constructed based on the conditional power (frequentist statistics) and posterior distribution (Bayesian statistics).

Finally, to induce a dichotomous decision at the interim analysis, we need to pre-specify the futility threshold $\gamma_{1}\in [0,1]$. By introducing an indicator function $\psi$, the testing result for the futility analysis is determined as follow:
\begin{align*}
\psi(\textbf{y}^{(1)}) = \begin{cases}
  1  & \text{ if } \mathcal{G}\{f(\tilde{\mathbf{y}}^{(2)}|\mathbf{y}^{(1)})\} =\mathbb{P}[H(\mathbf{y}^{(1)},\tilde{\mathbf{y}}^{(2)}) > \lambda_{2}| \mathbf{y}^{(1)}] \geq \gamma_{1}\\
  0   & \text{ if } \mathcal{G}\{f(\tilde{\mathbf{y}}^{(2)}|\mathbf{y}^{(1)})\} =\mathbb{P}[H(\mathbf{y}^{(1)},\tilde{\mathbf{y}}^{(2)}) > \lambda_{2}| \mathbf{y}^{(1)}] < \gamma_{1},
\end{cases}
\end{align*}
where $1$ and $0$ indicate the rejection and acceptance of the null hypothesis, respectively. Figure \ref{fig:Pred_Prob_Approach} displays a pictorial description of the decision procedure.

\begin{figure}[h!]
\centering
\includegraphics[scale=0.35]{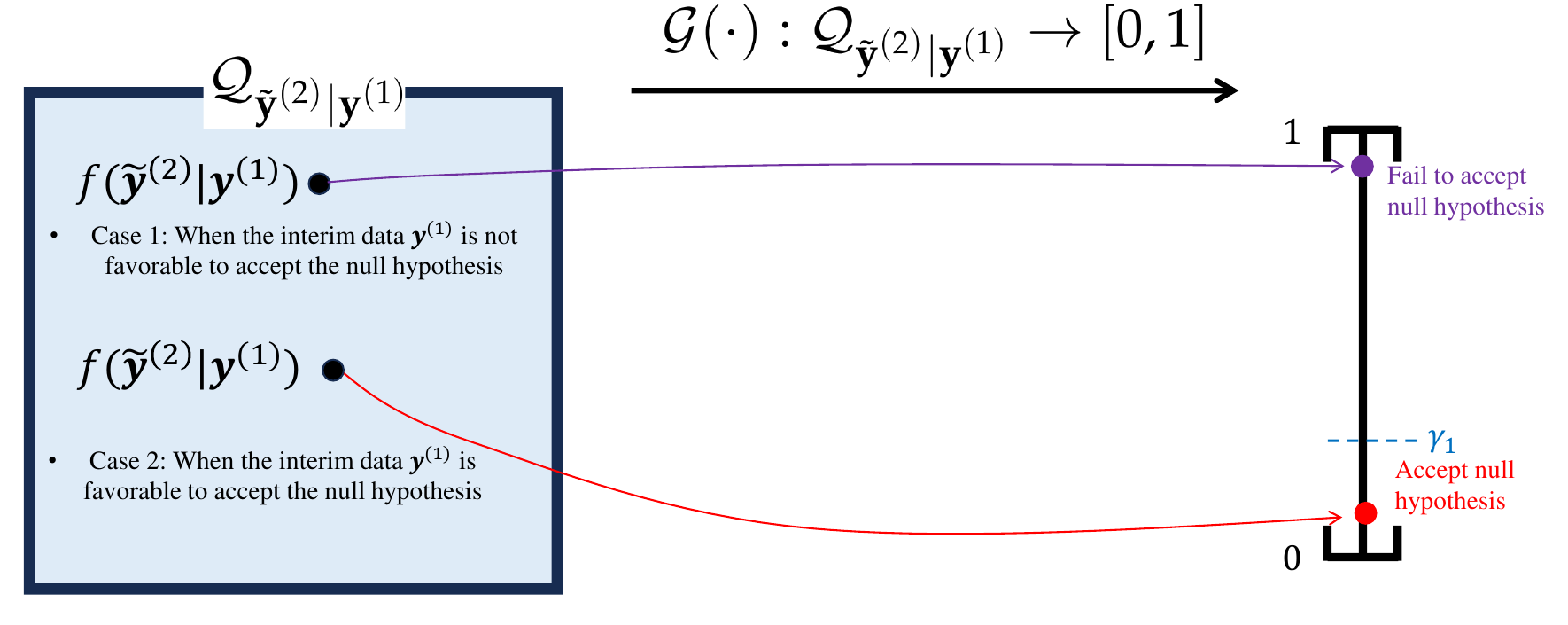}
\caption{\baselineskip=12pt Pictorial illustration of the decision rule based on the predictive probability approach for futility analysis. If the interim data $\mathbf{y}^{(1)}$ favors accepting the null hypothesis, it is also likely that the future remaining patients' outcomes $\tilde{\mathbf{y}}^{(2)}$ would be predicted to be more favorable for accepting the null hypothesis. This prediction results in a lower value of the test statistic $\mathcal{G}\{f(\tilde{\mathbf{y}}^{(2)}|\mathbf{y}^{(1)})\} = \mathbb{P}[H(\tilde{\mathbf{y}}^{(2)},\mathbf{y}^{(1)}) > \lambda_{2}| \mathbf{y}^{(1)}]$ (\ref{eq:Bayesian_test_statistics_Pred_Prob_Approach}). The pre-specified threshold $\gamma_{1}$ is then used to make the dichotomous decision based on the test statistic.}
\label{fig:Pred_Prob_Approach}
\end{figure}

Allowing early termination of a trial for futility tends to reduce the trial’s power as well as the type I error rate \citep{snapinn2006assessment}. To explain this, suppose that one uses the identical final threshold $\lambda_{2}$ in both of the two-stage futility design, as explained above, and the fixed design. Then, the following inequality holds:
\begin{align}
\label{eq:theoretical_fact_of_power_type_I_error_reduction_in_futility_design}
\mathbb{P}_{\theta}[H(\mathbf{y}^{(2)}) > \lambda_{2}]
\geq
\mathbb{P}_{\theta}[T(\mathbf{y}^{(1)}) \geq \gamma_{1} \text{ and } H(\mathbf{y}^{(2)}) > \lambda_{2}], \quad \text{for all }\theta \in \Theta,
\end{align}
which means that the power function of the fixed design is uniformly greater or equal to the power function of the two-stage futility design over the entire parameter space $\Theta$. This implies that equipping a futility rule to a fixed design leads to a reduction of both the type I error rate and power compared to the fixed design.

We briefly discuss the choice of the futility threshold $\gamma_{1}$ and the final threshold $\lambda_{2}$ in the two-stage futility design. Futility threshold $\gamma_{1}$ is typically chosen within the range of 1\% to 20\% in many problems. Having fixed the threshold $\lambda_{2}$, a higher threshold for $\gamma_{1}$ increases the likelihood of discontinuing a trial involving an ineffective treatment, which is desirable because it shortens the trial length when there is a true negative effect. However, it may reduce both the type I error and power compared to a lower threshold for $\gamma_{1}$. On the other hand, the final threshold $\lambda_{2}$ of the futility design is typically chosen to align with the nominal significance level of the corresponding fixed design. This is mainly due to the relevant operational risk of inflating the type I error rate if futility stopping were not executed as planned, even after the final threshold $\lambda_{2}$ has been chosen to make rejection easier to reclaim the lost type I error rate \citep{snapinn2006assessment,lan2003over}. In summary, when constructing a futility design, the sponsor needs to choose the futility threshold that does not substantially affect the operating characteristics of the original fixed-sample size, while also curtailing the trial length when there is a negative effect.

\subsection{Example - Two-stage Futility Design with Greenwood Test}\label{subsec:Example - Two-stage Futility Design with Greenwood Test}
Suppose that a sponsor considers a single-arm design for a phase II trial to assess the efficacy of a new antiarrhythmic drug in treating patients with a mild atrial fibrillation \citep{zimetbaum2012antiarrhythmic}. The primary efficacy endpoint is the freedom from recurrence of the indication at 52 weeks (1 year) after the intervention. The sponsor sets the null and alternative hypotheses by \(\mathcal{H}_{0}: \theta \leq 0.5\) versus \(\mathcal{H}_{a}: \theta > 0.5\), where \(\theta\) denotes the probability of freedom from recurrence at 52 weeks. Let \(S(t)\) represent the survival function; then the main parameter of interest is \(\theta = S(52\text{-week})\). At the planning stage, regulator agreed on the proposal of sponsor that the time to recurrence follows a three-piece exponential model, with a hazard function given as \(h(t) = 0.1 \cdot \xi\) if \(t\in [0, 8\text{-week}]\), \(h(t) = 0.05 \cdot \xi\) if \(t\in (8\text{-week}, 24\text{-week}]\), and \(h(t) = 0.01 \cdot \xi\) if \(t\in (24\text{-week}, 52\text{-week}]\), where \(\xi\) is a positive number. In order to simulate the survival data in the power calculation, the value of \(\xi\) will be derived to set the true data-generating parameter to be \(\theta = S(52\text{-week}) = 0.50, 0.55, 0.60, 0.65,\) and \(0.7\). Note that \(\theta=0.50\) corresponds to the type I error scenario, and the rest of the settings correspond to power scenarios.

We first construct a single-stage design with the final sample size of $N = 100$ patients. The final analysis is conducted by a frequentist hypothesis testing based on the one-sided level-$0.025$ Greenwood test using a confidence interval approach \citep{barber1999symmetric}. More specifically, the testing procedure is that the null hypothesis is rejected if the lower bound of the 95\% two-sided confidence interval evaluated at $t = 52\text{-week}$, that is,
\begin{align}
\label{eq:final_threshold_futility_design}
\text{Study Sucess} = \textbf{1} \left\{LB(\mathbf{y}) = \hat{S}(52\text{-week}) - 1.96 \cdot \sqrt{\hat{Var}[\hat{S}(52\text{-week})]} > 0.5\right\}
\end{align}
Here, the mean estimate $\hat{S}(t)$ is the Kaplan-Meier estimate of $S(t)$ \citep{kaplan1958nonparametric}, and its variance estimate $\hat{Var}[\hat{S}(t)]$ is based on the Greenwood formula \citep{greenwood1926report}, and notation $\mathbf{y}$ represents the final data from $N = 100$ patients. The results of the power analysis obtained by simulation indicate that the probabilities of rejecting the null hypothesis are 0.0185, 0.1344, 0.461, 0.8332, and 0.9793 when the effectiveness success rates ($\theta$) are 0.5, 0.55, 0.60, 0.65, and 0.7, respectively. Note that the type I error rate is 0.0185 less than the 0.025.

%$\lambda_{1}$, $\lambda_{2}$, and $\lambda_{3}$, 

Next, we construct a two-stage futility design by equipping the above single-stage design with a non-binding futility stopping option based on the predictive probability approach. Non-binding means that the investigators can freely decide whether they really want to stop or not. This is more common in practice because a stopping decision is typically influenced not only by interim data but also by new external data or safety information \citep{li2020optimality}. The final sample size of the futility design is again $N = 100$, and we keep the final test the same as the single-stage design (\ref{eq:final_threshold_futility_design}). This means that there are no adjustments to the final threshold to reclaim a loss of type I error. The futility analysis will be performed when $N_{1} = 30$ patients have completed the 52 weeks of follow-up (30\% of participants). A non-informative Gamma prior $\mathcal{G}a(0.1, 0.1)$ will be used for each of the hazard rate parameters of the three-piece exponential model. Futility stopping (i.e., accepting the null hypothesis) is triggered if the predictive probability of trial success at the maximum sample size is less than the pre-specified futility threshold $\gamma_{1} = 0.05$. Technically, the predictive probability is
\begin{align*}
T(\mathbf{y}^{(1)}) = \mathbb{P}[LB(\mathbf{y}^{(1)}, \tilde{\mathbf{y}}^{(2)}) > 0.5| \mathbf{y}^{(1)}] =
 \int \mathbf{1}(LB(\mathbf{y}^{(1)}, \tilde{\mathbf{y}}^{(2)}) > 0.5) \cdot f(\tilde{\mathbf{y}}^{(2)}|\mathbf{y}^{(1)}) d\tilde{\mathbf{y}}^{(2)},
\end{align*}
where $\mathbf{y}^{(1)}$ and $\tilde{\mathbf{y}}^{(2)}$ denote the time-to-event outcomes from $N_{1} = 30$ patients and $\tilde{N}_{2} = N - N_{1} = 70$ patients, respectively, and $f(\tilde{\mathbf{y}}^{(2)}|\mathbf{y}^{(1)})$ denotes the posterior predictive distribution of outcomes of the future remaining patients $\tilde{\mathbf{y}}^{(2)}$.

%To calculate the test statistics $T(\mathbf{y}^{(1)})$, we need to calculate the posterior predictive distribution $f(\tilde{\mathbf{y}}^{(2)}|\mathbf{y}^{(1)})$, which is represented as
%\begin{align*}
%f(\tilde{\mathbf{y}}^{(2)}|\mathbf{y}^{(1)})=
%\int f(\tilde{\mathbf{y}}^{(2)}|\lambda_{1},\lambda_{2},\lambda_{3})\cdot
%\pi(\lambda_{1},\lambda_{2},\lambda_{3}|\mathbf{y}^{(1)})
%d\lambda_{1}d\lambda_{2}d\lambda_{3},
%\end{align*}
%where $f(\tilde{\mathbf{y}}^{(2)}|\lambda_{1},\lambda_{2},\lambda_{3})$ represents the density of the stage 2 patient data $\tilde{\mathbf{y}}^{(2)}$ following the 3-piecewise exponential indexed by $(\lambda_{1},\lambda_{2},\lambda_{3})$) and $\pi(\lambda_{1},\lambda_{2},\lambda_{3}|\mathbf{y}^{(1)})$ is the posterior distribution of $(\lambda_{1},\lambda_{2},\lambda_{3})$. One can prove that the random variables $\lambda_{1},\lambda_{2}$, and $\lambda_{3}$ are independent each other a posteriori, and they are again distributed according to gamma distributions such that the shape parameter is updated to be number of events plus 0.1 corres

\begin{figure}[h!]
\centering
\includegraphics[width=\textwidth]{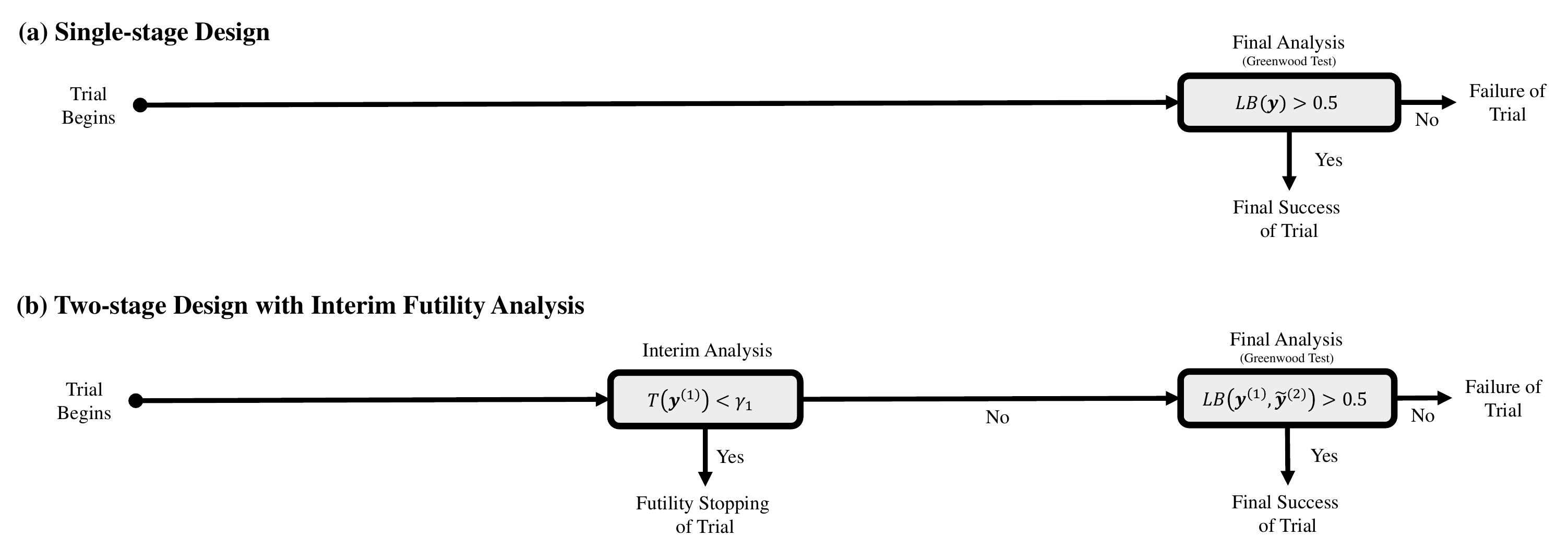}
\caption{\baselineskip=12pt Testing procedures of the single-stage design and the two-stage futility design are as follows: at the final analysis, both designs employ the one-sided level-$0.025$ Greenwood test with a final sample size of $N=100$. Only the futility design has the option to stop the trial due to futility when $N_{1}$ patients had completed 52 weeks of follow-up. In the power analysis, we use $N_{1}=30, 50$, and $70$, along with $\gamma_{1}=0.05$ and $0.1$, to assess the operating characteristics of the design.}
\label{fig:Futility_Design}
\end{figure}

In the power analysis, we vary the number of stage 1 patients, $N_{1}$, to 50 and 70 and set the futility threshold, $\gamma_{1}$, to 0.1 to explore the operating characteristics of the futility design. Figure \ref{fig:Futility_Design} illustrates the testing procedures of the single-stage design and the two-stage futility design. In this setting, the only difference between the futility and single-stage designs is that the former has the option to stop the trial due to futility when $N_{1}$ patients had completed the follow-up of 52 weeks, while the latter does not. Table \ref{tab:Futility Design} shows the power analysis results of the two-stage futility designs.

\begin{table}[h]
\caption{Operating characteristics of two-stage futility designs with the final sample size $N=100$.}
%\begin{adjustwidth}{-\extralength}{0cm}
		\newcolumntype{C}{>{\centering\arraybackslash}X}
\begin{tiny}
\begin{tabular}{ccccccccccc}
\hline
                                                                                &                                                                                           & \multicolumn{3}{c}{\begin{tabular}[c]{@{}c@{}}Stage 1 Sample Size\\ $N_1=30$\end{tabular}}                                                                                                                                                                          & \multicolumn{3}{c}{\begin{tabular}[c]{@{}c@{}}Stage 1 Sample Size\\ $N_1=50$\end{tabular}}                                                                                                                                                                          & \multicolumn{3}{c}{\begin{tabular}[c]{@{}c@{}}Stage 1 Sample Size\\ $N_1=70$\end{tabular}}                                                                                                                                                                          \\ \cline{3-11} 
\multirow{-2}{*}{\begin{tabular}[c]{@{}c@{}}Futility \\ Threshold\end{tabular}} & \multirow{-2}{*}{\begin{tabular}[c]{@{}c@{}}Effectiveness\\ Success \\ Rate\end{tabular}} & \begin{tabular}[c]{@{}c@{}}Probability of\\ Rejecting \\ Null \\ Hypothesis\end{tabular} & \begin{tabular}[c]{@{}c@{}}Expected \\ Sample \\ Size\\ ($E(N)$)\end{tabular} & \begin{tabular}[c]{@{}c@{}}Probability of \\ Early \\ Termination\\ ($PET$)\end{tabular} & \begin{tabular}[c]{@{}c@{}}Probability of\\ Rejecting \\ Null \\ Hypothesis\end{tabular} & \begin{tabular}[c]{@{}c@{}}Expected \\ Sample \\ Size\\ ($E(N)$)\end{tabular} & \begin{tabular}[c]{@{}c@{}}Probability of \\ Early \\ Termination\\ ($PET$)\end{tabular} & \begin{tabular}[c]{@{}c@{}}Probability of\\ Rejecting \\ Null \\ Hypothesis\end{tabular} & \begin{tabular}[c]{@{}c@{}}Expected \\ Sample \\ Size\\ ($E(N)$)\end{tabular} & \begin{tabular}[c]{@{}c@{}}Probability of \\ Early \\ Termination\\ ($PET$)\end{tabular} \\ \hline
                                                                                & {\color[HTML]{222222} 0.5}                                                                & {\color[HTML]{222222} 0.017}                                                             & {\color[HTML]{222222} 70.11}                                                  & {\color[HTML]{222222} 0.427}                                                             & {\color[HTML]{222222} 0.017}                                                             & {\color[HTML]{222222} 66.80}                                                   & {\color[HTML]{222222} 0.664}                                                             & {\color[HTML]{222222} 0.018}                                                             & {\color[HTML]{222222} 76.12}                                                  & {\color[HTML]{222222} 0.796}                                                             \\
                                                                                & {\color[HTML]{222222} 0.55}                                                               & {\color[HTML]{222222} 0.116}                                                             & {\color[HTML]{222222} 84.18}                                                  & {\color[HTML]{222222} 0.226}                                                             & {\color[HTML]{222222} 0.117}                                                             & {\color[HTML]{222222} 80.75}                                                  & {\color[HTML]{222222} 0.385}                                                             & {\color[HTML]{222222} 0.12}                                                              & {\color[HTML]{222222} 85.06}                                                  & {\color[HTML]{222222} 0.498}                                                             \\
                                                                                & {\color[HTML]{222222} 0.6}                                                                & {\color[HTML]{222222} 0.441}                                                             & {\color[HTML]{222222} 93.98}                                                  & {\color[HTML]{222222} 0.086}                                                             & {\color[HTML]{222222} 0.439}                                                             & {\color[HTML]{222222} 92.85}                                                  & {\color[HTML]{222222} 0.143}                                                             & {\color[HTML]{222222} 0.446}                                                             & {\color[HTML]{222222} 93.97}                                                  & {\color[HTML]{222222} 0.201}                                                             \\
                                                                                & {\color[HTML]{222222} 0.65}                                                               & {\color[HTML]{222222} 0.818}                                                             & {\color[HTML]{222222} 98.32}                                                  & {\color[HTML]{222222} 0.024}                                                             & {\color[HTML]{222222} 0.817}                                                             & {\color[HTML]{222222} 97.95}                                                  & {\color[HTML]{222222} 0.041}                                                             & {\color[HTML]{222222} 0.829}                                                             & {\color[HTML]{222222} 98.74}                                                  & {\color[HTML]{222222} 0.042}                                                             \\
\multirow{-5}{*}{0.05}                                                          & {\color[HTML]{222222} 0.7}                                                                & {\color[HTML]{222222} 0.975}                                                             & {\color[HTML]{222222} 99.58}                                                  & {\color[HTML]{222222} 0.006}                                                             & {\color[HTML]{222222} 0.973}                                                             & {\color[HTML]{222222} 99.50}                                                   & {\color[HTML]{222222} 0.01}                                                              & {\color[HTML]{222222} 0.976}                                                             & {\color[HTML]{222222} 99.88}                                                  & {\color[HTML]{222222} 0.004}                                                             \\ \hline
                                                                                & {\color[HTML]{222222} 0.5}                                                                & {\color[HTML]{222222} 0.016}                                                             & {\color[HTML]{222222} 59.68}                                                  & {\color[HTML]{222222} 0.576}                                                             & {\color[HTML]{222222} 0.016}                                                             & {\color[HTML]{222222} 63.05}                                                  & {\color[HTML]{222222} 0.739}                                                             & {\color[HTML]{222222} 0.016}                                                             & {\color[HTML]{222222} 74.11}                                                  & {\color[HTML]{222222} 0.863}                                                             \\
                                                                                & {\color[HTML]{222222} 0.55}                                                               & {\color[HTML]{222222} 0.114}                                                             & {\color[HTML]{222222} 76.48}                                                  & {\color[HTML]{222222} 0.336}                                                             & {\color[HTML]{222222} 0.115}                                                             & {\color[HTML]{222222} 76.45}                                                  & {\color[HTML]{222222} 0.471}                                                             & {\color[HTML]{222222} 0.117}                                                             & {\color[HTML]{222222} 82.00}                                                     & {\color[HTML]{222222} 0.600}                                                               \\
                                                                                & {\color[HTML]{222222} 0.6}                                                                & {\color[HTML]{222222} 0.428}                                                             & {\color[HTML]{222222} 89.01}                                                  & {\color[HTML]{222222} 0.157}                                                             & {\color[HTML]{222222} 0.427}                                                             & {\color[HTML]{222222} 89.65}                                                  & {\color[HTML]{222222} 0.207}                                                             & {\color[HTML]{222222} 0.44}                                                              & {\color[HTML]{222222} 92.08}                                                  & {\color[HTML]{222222} 0.264}                                                             \\
                                                                                & {\color[HTML]{222222} 0.65}                                                               & {\color[HTML]{222222} 0.799}                                                             & {\color[HTML]{222222} 95.31}                                                  & {\color[HTML]{222222} 0.067}                                                             & {\color[HTML]{222222} 0.810}                                                              & {\color[HTML]{222222} 96.95}                                                  & {\color[HTML]{222222} 0.061}                                                             & {\color[HTML]{222222} 0.826}                                                             & {\color[HTML]{222222} 97.93}                                                  & {\color[HTML]{222222} 0.069}                                                             \\
\multirow{-5}{*}{0.10}                                                          & {\color[HTML]{222222} 0.7}                                                                & {\color[HTML]{222222} 0.966}                                                             & {\color[HTML]{222222} 98.74}                                                  & {\color[HTML]{222222} 0.018}                                                             & {\color[HTML]{222222} 0.969}                                                             & {\color[HTML]{222222} 99.15}                                                  & {\color[HTML]{222222} 0.017}                                                             & {\color[HTML]{222222} 0.971}                                                             & {\color[HTML]{222222} 99.55}                                                  & {\color[HTML]{222222} 0.015}                                                             \\ \hline
                                                                                & {\color[HTML]{222222} 0.5}                                                                & {\color[HTML]{222222} 0.016}                                                             & {\color[HTML]{222222} 57.93}                                                  & {\color[HTML]{222222} 0.601}                                                             & {\color[HTML]{222222} 0.016}                                                             & {\color[HTML]{222222} 61.20}                                                   & {\color[HTML]{222222} 0.776}                                                             & {\color[HTML]{222222} 0.015}                                                             & {\color[HTML]{222222} 73.15}                                                  & {\color[HTML]{222222} 0.895}                                                             \\
                                                                                & {\color[HTML]{222222} 0.55}                                                               & {\color[HTML]{222222} 0.110}                                                              & {\color[HTML]{222222} 73.61}                                                  & {\color[HTML]{222222} 0.377}                                                             & {\color[HTML]{222222} 0.114}                                                             & {\color[HTML]{222222} 74.40}                                                   & {\color[HTML]{222222} 0.512}                                                             & {\color[HTML]{222222} 0.115}                                                             & {\color[HTML]{222222} 80.56}                                                  & {\color[HTML]{222222} 0.648}                                                             \\
                                                                                & {\color[HTML]{222222} 0.6}                                                                & {\color[HTML]{222222} 0.419}                                                             & {\color[HTML]{222222} 87.12}                                                  & {\color[HTML]{222222} 0.184}                                                             & {\color[HTML]{222222} 0.423}                                                             & {\color[HTML]{222222} 88.20}                                                   & {\color[HTML]{222222} 0.236}                                                             & {\color[HTML]{222222} 0.429}                                                             & {\color[HTML]{222222} 90.67}                                                  & {\color[HTML]{222222} 0.311}                                                             \\
                                                                                & {\color[HTML]{222222} 0.65}                                                               & {\color[HTML]{222222} 0.795}                                                             & {\color[HTML]{222222} 94.47}                                                  & {\color[HTML]{222222} 0.079}                                                             & {\color[HTML]{222222} 0.806}                                                             & {\color[HTML]{222222} 96.60}                                                   & {\color[HTML]{222222} 0.068}                                                             & {\color[HTML]{222222} 0.817}                                                             & {\color[HTML]{222222} 97.27}                                                  & {\color[HTML]{222222} 0.091}                                                             \\
\multirow{-5}{*}{0.15}                                                          & {\color[HTML]{222222} 0.7}                                                                & {\color[HTML]{222222} 0.964}                                                             & {\color[HTML]{222222} 98.6}                                                   & {\color[HTML]{222222} 0.020}                                                              & {\color[HTML]{222222} 0.968}                                                             & {\color[HTML]{222222} 99.05}                                                  & {\color[HTML]{222222} 0.019}                                                             & {\color[HTML]{222222} 0.969}                                                             & {\color[HTML]{222222} 99.31}                                                  & {\color[HTML]{222222} 0.023}                                                             \\ \hline
\end{tabular}
\\
\label{tab:Futility Design}
\baselineskip=12pt
Note: Probabilities of rejecting null hypothesis of single-stage design with the final sample size of $N=100$ are 0.0185, 0.1344, 0.461, 0.8332, and 0.9793 when the effectiveness success rates are 0.5, 0.55, 0.60, 0.65, and 0.7, respectively. 
%\end{adjustwidth}
\end{tiny}
\end{table}

The results demonstrate that the probability of rejecting the null hypothesis in the futility design is consistently lower than that in the single-stage design across various effectiveness success rates ($\theta=0.5, 0.55, 0.6, 0.65,$ and $0.7$). This finding aligns with the inequality (\ref{eq:theoretical_fact_of_power_type_I_error_reduction_in_futility_design}). Specifically, the type I error rates for the futility design are $0.0173$, $0.0160$, and $0.0156$ when the futility thresholds $\gamma_{1}$ are set at $0.05$, $0.10$, and $0.15$, respectively. These results reflect reductions of 6.4\%, 13.5\%, and 15.6\% in the type I error rate compared to the single-stage design. (Recall that the type I error rate of the single-stage design is 0.0185.) This implies that a higher value for the futility threshold $\gamma_{1}$ leads to a more substantial reduction in the type I error rate compared to the single-stage design. A similar pattern of reduction is observed in the power scenarios when $\theta=0.55, 0.6, 0.65,$ and $0.7$.

Notably, the probability of early termination tends to increase as the stage 1 sample size grows from $N_{1}=30$ to $N_{3}=70$. This increase is particularly significant in the type I error scenario when $\theta=0.5$. Across all the scenarios examined, the expected sample size consistently stays below $N=100$. This indicates that the futility design outperforms the single-stage design in terms of expected sample size as a performance criterion. Furthermore, this reduction in expected sample size is even more pronounced in the type I error scenarios. In conclusion, it is evident that for long-term survival endpoints, like the example discussed here, the futility design can lead to substantial resource savings by allowing the trial to be terminated midway when the lack of clinical benefit becomes clear.

\section{Multiplicity Adjustments}\label{sec:Multiplicity Adjustments}
\subsection{Multiplicity Problem - Primary Endpoint Family}\label{subsec:Multiplicity Problem}
Efficacy endpoints are measures designed to reflect the intended effects of a drug or medical device. Clinical trials are often conducted to evaluate the relative efficacy of two or more modes of treatment. For instance, consider a new drug developed for the treatment of heart failure  \citep{rossignol2019heart}. In this case, it may be unclear whether the heart failure drug primarily promotes a decrease in mortality, a reduction in heart failure hospitalization, or an improvement in quality of life. However, demonstrating any of these effects individually would hold clinical significance; there are multiple chances to `win.' Consequently, all three endpoints—mortality rate, number of heart failure hospitalizations, and an index for quality of life—might be designated as separate primary endpoints. This is an illustrative example of a primary endpoint family, and failure to adjust for multiplicity can lead to a false conclusion that the heart failure drug is effective. Here, multiplicity refers to the presence of numerous comparisons within a clinical trial \citep{o1984procedures, dmitrienko2009multiple, dmitrienko2018multiplicity, dmitrienko2013key}. See Section III of the FDA guidance document for the multiple endpoints for more details on the primary endpoint family \citep{guidance2022FDAmultiple}.

In the following, we formulate the multiplicity problem of the primary endpoint family. We consider a family of $K$ primary endpoints, any one of which could support the conclusion that a new treatment has a beneficial effect. For simplicity, we assume that the outcomes of the patients are binary responses, where a response of 1 (yes) indicates that the patient shows a treatment effect. Using the example of a heart failure drug, the first efficacy endpoint measures mortality: whether a patient has survived (yes/no), the second endpoint measures morbidity: whether a patient experienced heart failure hospitalization (yes/no), and the third endpoint measures the quality of life: whether the Kansas City Cardiomyopathy Questionnaire score \citep{spertus2020interpreting} has improved by more than 15 points (yes/no) during a defined period after the treatment. The logic explained in the following can be applied to various types of outcomes, including continuous outcomes and time-to-event outcomes.

We consider a parallel group trial design where subjects are randomized to one of $K$ single arms, each associated with hypotheses given by:
\begin{align}
\label{eq:i_th_hypothesis}
\mathcal{H}_{0,i}: \theta_{i} \leq \theta_{0,i} \quad \text{versus} \quad
\mathcal{H}_{a,i}: \theta_{i} > \theta_{0,i}, \quad (i=1,\cdots,K),
\end{align}
where $\theta_{i}$ denotes the response rate for the $i$-th endpoint (where a higher rate indicates a better treatment effect), and $\theta_{0,i}$ represents the performance goal associated with the $i$-th endpoint.

In a clinical trial with a single endpoint $(K=1)$ tested at $\alpha = 0.025$, the probability of finding a treatment effect by chance alone is at most 0.025. However, multiple testing ($K>1$) can increase the likelihood of type I error (a false conclusion that a new drug is effective). To explain this, suppose that at the final analysis upon completion of the study, the rejection of any one of the null hypotheses among $K$ null hypotheses will lead to marketing approval for a new drug. If there are $K=2$ independent endpoints, each tested at $\alpha = 0.025$, and success on either endpoint by itself would lead to a conclusion of a drug effect, the type I error rate is approximately $5 \approx 1-(1-0.025)^{2}$ percent . With $K=4$ endpoints, the type I error rate increases to about $10 \approx 1-(1-0.025)^{4}$ percent. When there are $K=10$ endpoints, the type I error rate escalates to about $22 \approx 1-(1-0.025)^{10}$ percent. The problem becomes more severe as the number of endpoints ($K$) increases. 

%To summarize, there is a concern that the presentation of the scientific findings from an experiment may be exaggerated if appropriate adjustment is not provided. 
\subsection{Familywise Type I Error Rate and Power}\label{subsec:Type I Error Rate and Power}
It is important to ensure that the evaluation of multiple hypotheses will not lead to inflation of the study's overall type I error probability relative to the planned significance level. This is the primary regulatory concern, and it is required to minimize the chances of a false positive conclusion for any of the endpoints, regardless of which and how many endpoints in the study have no effect \citep{guidance2022FDAmultiple}. This probability of incorrect conclusions is known as the familywise type I error rate \citep{bretz2016multiple}. Technically, it can be written as
\begin{align}
\label{eq:overall_type_I_error}
\alpha^{family} &= \mathbb{P}[\text{Reject at least one null hypothesis}|\text{All null hypotheses are true}]\\
\nonumber
&=\mathbb{P}[\text{Reject the collection } \{\mathcal{H}_{0,i}\}_{i\in A} \text{ for all } A \in \mathcal{K}|\{\mathcal{H}_{0,i}\}_{i=1}^{K} \text{ are true}]\\
\nonumber
&=\mathbb{P}[V \geq 1 |\{\mathcal{H}_{0,i}\}_{i=1}^{K} \text{ are true}].
\end{align}
Here, $P(A)$ and $\emptyset$ denote the power set of set $A$ and the empty set, respectively. If there are $K=4$ endpoints, one needs to consider $15=2^4 - 1$ false positive scenarios, each of which contributes to an increase in $\alpha^{family}$. When $K=10$ endpoints are examined in a study, the number of false positive scenarios increases to $1023=2^{10} - 1$ scenarios. $V$ denotes the number of hypotheses rejected among the $K$ hypotheses, taking an integer value from 0 to K.

Another regulatory concern for a primary endpoint family is to maximize the chances of a true positive conclusion. The desired power is an
important factor in determining the sample size. Unlike the type I error scenario where $\alpha^{family}$ is standardly used in most cases, the concept of power can be generalized in various ways when multiple hypotheses are considered (see Chapter 2 in \citep{bretz2016multiple} for more details). Two following two types of power are frequently used
\begin{align}
\label{eq:disjunctive power}
\pi^{dis} &= \mathbb{P}[\text{Reject at least one null hypothesis}|\text{All null hypotheses are false}]\\
\nonumber
&=\mathbb{P}[\text{Reject the collection } \{\mathcal{H}_{0,i}\}_{i\in A} \text{ for all } A \in \mathcal{K}|\{\mathcal{H}_{0,i}\}_{i=1}^{K} \text{ are false}]\\
\nonumber
&=\mathbb{P}[V \geq 1 | \{\mathcal{H}_{0,i}\}_{i=1}^{K} \text{ are false}],\\
\label{eq:conjunctive power}
\pi^{con} &= \mathbb{P}[\text{Reject all null hypotheses}|\text{All null hypotheses are false}]\\
\nonumber
&=\mathbb{P}[\text{Reject the collection } \{\mathcal{H}_{0,i}\}_{i=1}^{K}|\{\mathcal{H}_{0,i}\}_{i=1}^{K} \text{ are false}]\\
\nonumber
&=\mathbb{P}[V = K | \{\mathcal{H}_{0,i}\}_{i=1}^{K} \text{ are false}].
\end{align}
The former $\pi^{dis}$ (\ref{eq:disjunctive power}) and latter $\pi^{con}$ (\ref{eq:conjunctive power}) are referred to as disjunctive power and conjunctive power, respectively \citep{vickerstaff2019methods}. By definition, the disjunctive power is greater than the conjunctive power if the number of endpoints is more than one ($K =2,3,\cdots$), and both are equal when $K=1$.

Typically, regulators require the study design to have $\alpha^{family} \leq \alpha$, where $\alpha = 0.025$ for a one-sided test and $\alpha = 0.05$ for a two-sided test for a primary endpoint family. On the other hand, study specific discussion is necessary to determine which power (disjunctive power, conjunctive power, or another type) should be used for a given study. For example, if the study's objective is to detect all existing treatment effects, then one may argue that conjunctive power $\pi^{con}$ should be used. However, if the objective is to detect at least one true effect, then disjunctive power $\pi^{dis}$ is recommended \citep{guidance2022FDAmultiple}.

\subsection{Frequentist Method - p Value-based Procedures}\label{subsec:Frequentist Method to Control The Familywise Type I Error}
Much has been written and published on the mathematical aspects of frequentist adjustment procedures for multiple comparisons, and we refer the reader elsewhere for the details \citep{hochberg1987multiple,senn2007power,proschan2000practical}. Here, we briefly explain three popular p-value-based multiplicity adjustment procedures: the Bonferroni, Holm, and Hochberg methods \citep{hochberg1988sharper,holm1979simple}. These methods utilize the p-values from individual tests and can be applied to a wide range of test situations \citep{hommel2011multiple}. The fundamental difference is that the Bonferroni method uses non-ordered p-values, while the Holm and Hochberg methods use ordered p-values. Refer to Section 18 from \citep{kim2021handbook} for excellent summary of these methods.

\paragraph{$\bullet$ Bonferroni Method}
The Bonferroni method is a single-step procedure that is commonly used, perhaps because of its simplicity and broad applicability. It is known that Bonferroni method provides the most conservative multiplicity adjustment \citep{dmitrienko2018multiplicity}. Here, we use the most common form of the Bonferroni method which divides the overall significance level of $\alpha$ (typically 0.025 for the one-sided test) equally among the $K$ endpoints for testing $K$ hypotheses (\ref{eq:i_th_hypothesis}). The method then concludes that a treatment effect is significant at the $\alpha$
level for each one of the $K$ endpoints for which the endpoint's p-value is less than $\alpha/K$. 

\paragraph{$\bullet$ Holm Method}
The Holm procedure is a multi-step step-down procedure. It is less conservative than the Bonferroni method because a success with the smallest p-value allows other endpoints to be tested at larger endpoint-specific alpha levels than does the Bonferroni method. The endpoint p-values resulting from the final analysis are ordered from the smallest to the largest (or equivalently, the most significant to the least significant), denoted as $p_{(1)}\leq \cdots \leq p_{(K)}$. 

We take the following stepwise procedure: (Step 1) the test begins by comparing the smallest p-value, $p_{(1)}$, to $\alpha/K$, the same threshold used in the equally-weighted Bonferroni correction. If this $p_{(1)}$ is less than $\alpha/K$, the treatment effect for the endpoint associated with this p-value is considered significant; (Step 2) the test then compares the next-smallest p-value, $p_{(2)}$, to an endpoint-specific alpha of the total alpha divided by the number of yet-untested endpoints. If $p_{(2)} < \alpha/(K-1)$, then the treatment effect for the endpoint associated with this $p_{(2)}$ is also considered significant; (Step 3) The test then compares the next ordered p-value, $p_{(3)}$, to $\alpha/(K-2)$, and so on until the last p-value (the largest p-value) is compared to $\alpha$; (Step 4) The procedure stops, however, whenever a step yields a non-significant result. Once an ordered p-value is not significant, the remaining larger p-values are not evaluated and it cannot be concluded that a treatment effect is shown for those remaining endpoints.

\paragraph{$\bullet$ Hochberg Method}
The Hochberg procedure is a multi-step step-up testing procedure. It compares the p-values to the same alpha critical values of $\alpha/K, \alpha/(K-1), \cdots,\alpha/2, \alpha$, as the Holm procedure. However, instead of starting with the smallest p-value as performed in Holm procedure, Hochberg procedure starts with the largest p-value (or equivalently, the least significant p-value), which is compared to the largest endpoint-specific critical value $\alpha$. If the first test of hypothesis does not show statistical significance, testing proceeds to compare the second-largest p-value to the second-largest adjusted alpha value, $\alpha/2$. Sequential testing continues in this manner until a p-value for an endpoint is statistically significant, whereupon the Hochberg procedure provides a conclusion of statistically-significant treatment effects for that endpoint and all endpoints with smaller p-values.

%For example, with two endpoints ($K=2$), the critical alpha for each endpoint is 0.0125 = 0.025/2, and with four endpoints ($K=4$) it is 0.00625=0.025/4, and so on. 
%, as evidences to reject the null hypotheses, $\mathcal{H}_{0,1}: \theta_{1} \leq \theta_{0,1}$, $\mathcal{H}_{0,2}: \theta_{2} \leq \theta_{0,2}$, $\mathcal{H}_{0,3}: \theta_{3} \leq \theta_{0,3}$, and $\mathcal{H}_{0,4}: \theta_{4} \leq \theta_{0,4}$, respectively

\paragraph{$\bullet$ Examples}
For illustration, suppose that a trial with four endpoints $(K=4)$ yielded one-sided p-values of $p_{1}=0.006$ (1-st endpoint), $p_{2}=0.013$ (2-nd endpoint), $p_{3}=0.008$ (3-rd endpoint), and $p_{4}=0.0255$ (4-th endpoint) at the final analysis.

The Bonferroni method compares each of these p-values to $0.00625=0.025/4$, resulting in a significant treatment effect at the 0.025 level for only the 1-st endpoint because only the 1st endpoint has a p-value less than 0.00625.

The Holm method considers the successive endpoint-specific alphas, $0.00625 = 0.025/4$, $0.00833 = 0.025/(4-1)$, $0.0125 = 0.025/(4-2)$, and $0.025= 0.025/(4-3)$. We start by comparing the smallest p-value $p_{1}=0.006$ with 0.00625. The treatment effect for the 1-st endpoint is thus successfully demonstrated, and the test continues to the second step. In the second step, the second smallest p-value is $p_{3} = 0.008$, which is compared to 0.00833. The 3-rd endpoint has, therefore, also successfully demonstrated a treatment effect, as 0.008 is less than 0.00833. Testing can now proceed to the third step, in which the next ordered p-value of $p_{2} = 0.013$ is compared to 0.0125. In this comparison, as 0.013 is greater than 0.0125, the test is not statistically significant. This non significant result stops further tests. Therefore, in this example, the Holm procedure concludes that treatment effects have been shown for the 1st and 3rd endpoints.

The Hochberg method considers the same successive endpoint-specific alphas as the Holm method. In the first step, the largest p-value of $p_{4} = 0.0255$ is compared to its alpha critical value of $\alpha = 0.025$. Because this p-value of 0.0255 is greater than 0.025, the treatment effect for the 4-th endpoint is considered not significant. The procedure continues to the second step. In the second step, the second largest p-value, $p_{2} =0.013$, is compared to $\alpha/2 = 0.0125$. Because $p_{2}$ is greater than the allocated alpha, and the 2-nd endpoint is also not statistically significant, the test continues to the third step. In the third step, the next largest p-value, $p_{3} =0.008$, is compared to its alpha critical value of $\alpha/3 = 0.00833$, and the 3-rd endpoint shows a significant treatment effect. This result automatically causes the treatment effect for all remaining untested endpoints, which have smaller p-values than $0.008$, to be significant as well. Therefore, the 1-st endpoint also shows a significant treatment effect.

%Although, for this specific example, the endpoints that are statistically significant are the same for both the Holm and Hochberg procedures, the latter is considered more powerful than the former in the sense that it is more likely to detect true effects when they exist, even though it may increase the risk of making some type I errors \citep{huque2016validity}.

%The Hochberg procedure may conclude that there are significant treatment effects for more endpoints than would the Holm procedure, depending on the specific p-values obtained in the study. This is because the Hochberg procedure allows testing of endpoints from the largest p-value to the smallest and concludes that all remaining endpoints are successful as soon as one test is successful, even if the remaining p-values would not have succeeded on testing with their appropriate sequential alpha level. In contrast, the Holm procedure tests from smallest p-value to largest and determines that all untested endpoints are unsuccessful as soon as one test is unsuccessful, even if those remaining endpoints would have been successful if tested with their appropriate sequential alpha level.

\subsection{Bayesian Multiplicity Adjustment Methods}\label{subsec:Bayesian Multiplicity Adjustment Methods}
Bayesian adjustments for multiplicity \citep{lewis2009bayesian,gelman2012we,berry1999bayesian,gopalan1998bayesian} can be acceptable for regulatory submissions, provided the analysis plan is pre-specified and the operating characteristics of the analysis are adequate \citep{Bayesian2010FDAGuidance}. It is advisable to consult regulators early on with regard to a Statistical Analysis Plan that includes Bayesian adjustment for multiplicity.

Generally, the development of Bayesian multiplicity adjustment involves three steps:
\begin{itemize}
\item Step 1: Statistical modeling for the outcomes of endpoints,
\item Step 2: Performing the test for individual hypotheses  (\ref{eq:i_th_hypothesis}) with pre-specified thresholds,
\item Step 3: Interpreting the results of Step 2 in terms of the familywise error rate (\ref{eq:overall_type_I_error}).
\end{itemize}

One of the unique advantages of Bayesian multiplicity adjustment is the flexibility of statistical modeling in the planning phase of Step 1, tailored to the study's objectives, the characteristics of the sub-population, and other relevant factors. For example, if a certain hierarchical or multilevel structure exists among sub-populations (such as, center - doctor - patients as discussed in \citep{zucker1997combining}), then one would use a Bayesian hierarchical model to account for the heterogeneity between sub-populations and patient-to-patient variability simultaneously \citep{stunnenberg2018effect}. Furthermore, adaptive feature can be also incorporated to the Bayesian multiplicity adjustment \citep{berry2013bayesian}. This stands in contrast to traditional frequentist approaches, which evaluate the outcomes from each sub-population independently or simply combine data from all sub-populations through a pooled analysis \citep{mcglothlin2018bayesian}.

In Step 2, sponsors need to provide detailed descriptions of the decision rules that will be used to reject the $i$-th null hypothesis $\mathcal{H}_{0i}$ ($i=1,\cdots,K$) in the Statistical Analysis Plan. The sponsor can choose either the posterior probability approach (Section \ref{sec:Bayesian Decision Rule - Posterior Probability Approach}) or the predictive probability approach (Section \ref{sec:Bayesian Decision Rule - Predictive Probability Approach}) as the decision rules. Most importantly, the threshold value for rejecting each null hypothesis should be pre-specified in the Statistical Analysis Plan, which often requires extensive simulations across all plausible scenarios.

Finally, in Step 3, the results of the $K$ individual tests are interpreted to ensure that the frequentist familywise type I error rate $\alpha^{family}$ (\ref{eq:overall_type_I_error}) is lower than the overall significance level $\alpha$. Additionally, power specific to the study objective (disjunctive power, conjunctive power, or another type) may be measured to estimate the sample size of the study.

\subsection{Bayesian Multiplicity Adjustment using Bayesian Hierarchical Modeling}\label{subsec:Bayesian Hierarchical Modeling}
Here, we illustrate the simplest form of the Bayesian multiplicity adjustment method using Bayesian hierarchical modeling. \citep{thall2003hierarchical, berry2013bayesian, lee2022use, lee2022bayesian}. Bayesian hierarchical modeling is a specific Bayesian methodology that combines results from multiple arms or studies to obtain estimates of safety and effectiveness parameters \citep{neuenschwander2016robust}. This approach is particularly appealing in the regulatory setting when there is an association between the outcomes of $K$ endpoints so that exchangeability of patients' outcomes across $K$ endpoints can be assumed \citep{berry1999bayesian}. Figure \ref{fig:BHM} outlines the three steps of the multiplicity control procedure using a Bayesian hierarchical model.

\begin{figure}[h!]
\centering
\includegraphics[width=\textwidth]{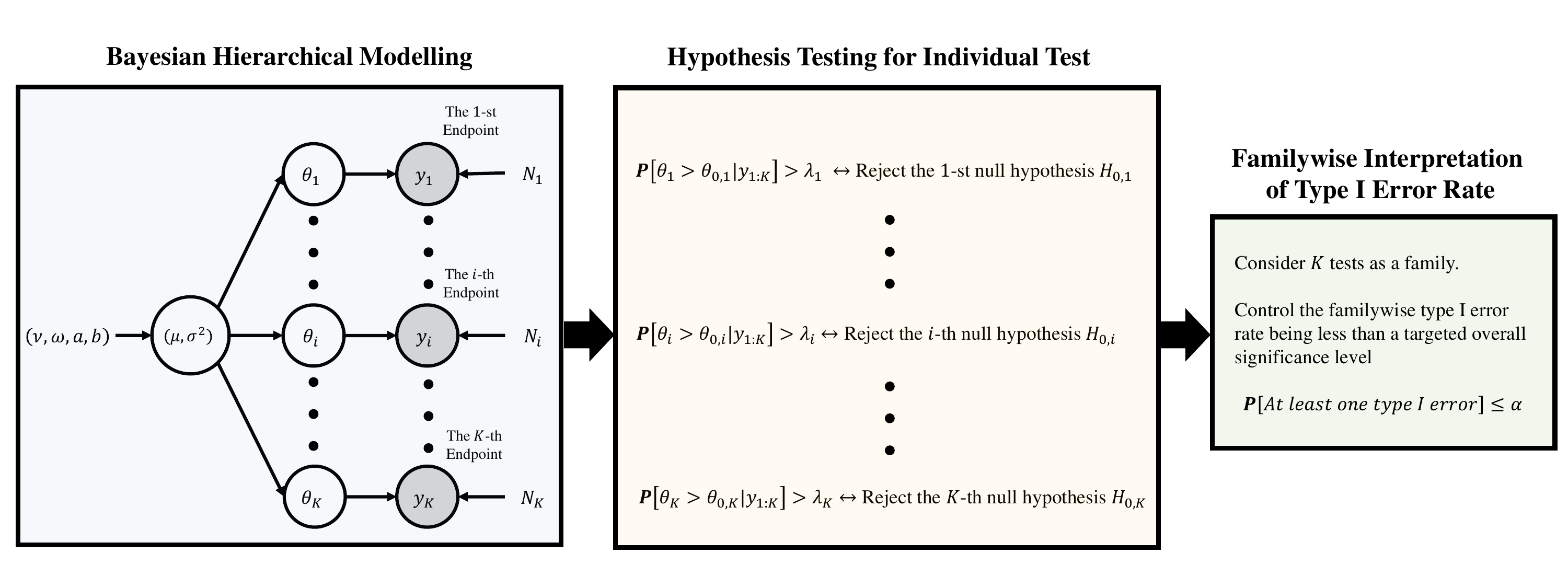}
\caption{\baselineskip=12pt Three steps to control the familywise type I error rate through Bayesian hierarchical modeling. The first step involves specifying a Bayesian hierarchical model, which depends on the context of the problem. In the second step, the decision rule for each individual test is specified. The third step involves interpreting the combination of individual type I error rates in terms of the familywise type I error rate, which is restricted by the overall significance level.}
\label{fig:BHM}
\end{figure}

Let $N_{i}$ be the number of patients to be enrolled in the $i$-th arm associated with the $i$-th endpoint for testing the null and alternative hypotheses, $\mathcal{H}_{0,i}: \theta_{i} \leq \theta_{0,i}$ versus $
\mathcal{H}_{a,i}: \theta_{i} > \theta_{0,i}, \quad (i=1,\cdots,K)$
(\ref{eq:i_th_hypothesis}). The total sample size of the study is therefore $N = \sum_{i=1}^{K} N_{i}$. Let $y_{i}$ denote the number of responders to a treatment, where a higher number indicates better efficacy. Then, the number of responders associated with the $i$-endpoint is distributed according to a binomial distribution:
\begin{align}
\label{eq:BHM_level_1}
y_{i}|\theta_{i}\sim \mathcal{BN}(N_{i},\theta_{i}), \quad (i=1,\cdots,K).
\end{align}
Note that the parameters of main interest are $(\theta_{1},\cdots,\theta_{K})$. Suppose that there is an association between the outcomes of the $K$ endpoints, and $K$ sub-populations are exchangeable, a priori. We assume the most basic formulation of hierarchical prior on the $(\theta_{1},\cdots,\theta_{K})$ given by:
\begin{align}
\label{eq:BHM_level_2}
\phi_{i}|\mu,\sigma^{2} &\sim \mathcal{N}(\mu,\sigma^{2}), \quad
(i=1,\cdots,K),\\
\label{eq:BHM_level_3}
(\mu, \sigma^{2}) &\sim \mathcal{NIG}(\nu,\omega,a,b),
\end{align}
where the parameter $\theta_{i}$ is logit-transformed to $\phi_{i}$ (i.e., $\theta_{i}=\exp(\phi_{i})/\{1 + \exp(\phi_{i})\}$, or equivalently, $\phi_{i} = \log(\theta_{i}/(1-\theta_{i}))$. The normal-inverse-gamma prior, denoted as $(\mu,\sigma^{2}) \sim \mathcal{NIG}(\nu, \omega, a, b)$, is equivalent to a mixture of normal and inverse gamma priors: $\mu|\sigma^{2} \sim \mathcal{N}(\nu,\sigma^{2}/\omega)$ and $\sigma^{2} \sim \mathcal{IG}(a, b)$. $(\nu, \omega, \alpha, \beta)$ represent the hyperparameters, which we set as $(0,$ $1/100,$ $0.001,$ $0.001)$. This choice ensures that the normal-inverse-gamma prior is diffused over the parameter space, and the prior information is almost vague (essentially, nearly non-informative), similar to the choice made by \citep{berry2013bayesian}.

The hierarchical formulation (\ref{eq:BHM_level_1})--(\ref{eq:BHM_level_3}) is designed to induce a shrinkage effect \citep{efron2010future, jones2011bayesian}. Under this formulation, the Bayesian estimators of the parameters $\phi_i, (i=1,\cdots,K)$ (or equivalently, $\theta_i, (i=1,\cdots,K)$) will be pulled toward the global mean $\mu$ (or equivalently, $\exp(\mu)/{1 + \exp(\mu)}$), leading to a reduction in the width of the interval estimates of the parameters, a posteriori, similar to the James-Stein shrinkage estimator \citep{james1992estimation}. This shrinkage effect is also referred to as "borrowing strength", recognized in numerous regulatory guidance documents related to clinical trials for medical
devices and small populations \citep{Small2006EMAGuidance,Bayesian2010FDAGuidance}.

To test the null and alternative hypotheses associated with the $i$-th endpoint (\ref{eq:i_th_hypothesis}), we use the posterior probability approach for decision-making as follow. Upon completion of the study, for each $i$ ($i=1,\cdots,K$), we reject the $i$-th null hypothesis, $\mathcal{H}_{0,i}: \theta_{i} \leq \theta_{0,i}$, if the posterior probability of the $i$-th alternative hypothesis, $\mathcal{H}_{a,i}: \theta_{i} > \theta_{0,i}$, being true is greater than a pre-specified threshold $\lambda_{i}\in [0,1]$. That is, the decision criterion for the $i$-th endpoint is as follow:
%\begin{align}
%\label{eq:decision_criterion} 
%\text{if } \mathbb{P}[\theta_{i} > \theta_{0,i}| y_{1:K}] > \lambda_{i},\, \text{then reject the }i\text{-th null hypothesis } \mathcal{H}_{0,i},\quad (i=1,\cdots,K),
%\end{align}
\begin{align}
\label{eq:decision_criterion}
\text{Sucess for the } i\text{-th endpoint} = \textbf{1}\{\mathbb{P}[\theta_{i} > \theta_{0,i}| y_{1:K}] > \lambda_{i} \},\quad (i=1,\cdots,K),
\end{align}
where $y_{1:K}$ denotes the numbers of responses from the $K$ endpoints. A higher value of $\lambda_{i}$ leads to a more conservative testing for the $i$-th endpoint, resulting in a lower type I error rate and a lower power, given a fixed sample size $N_{i}$. Posterior probability in (\ref{eq:decision_criterion}) is typically stochastically approximated by using a MCMC method because the posterior distribution, $\pi(\theta_{1:K},\mu,\sigma^{2} | y_{1:K})$, is not represented as a closed-form distribution.

Suppose that the $i$-th null hypothesis has been rejected at the final analysis. In this case, the drug is considered to have demonstrated effects for the $i$-th endpoint. The $K$ threshold values $(\lambda_{1}, \ldots, \lambda_{K})$ in the decision criteria (\ref{eq:decision_criterion}) should be pre-specified during the design stage and chosen through simulation to ensure that the frequentist familywise type I error $\alpha^{family}$ (\ref{eq:overall_type_I_error}) is less than the overall significance level $\alpha$.

%Lastly, the familywise type I error rate $\alpha^{family}$ (\ref{eq:overall_type_I_error}) of the Bayesian design is as follow:
%\begin{align}
%\label{eq:overall_type_I_error_BHM}
%\mathbb{P}[ \mathbb{P}[\theta_{i} > \theta_{0,i}| y_{1:K}] > \lambda_{i} \text{ for all } i\in A \text{ and for all } A \in \mathcal{K} |y_{i}\sim \mathcal{BN}(N_{i},\theta_{0,i}), \,
% (1,\cdots,K)],
%\end{align}
%where $\mathcal{K} = P(\{1,2,\cdots,K\}) - \emptyset$. 

\subsection{Simulation Experiment}\label{subse:Simulation Experiment}
We evaluate the performance of Bayesian hierarchical modeling and frequentist methods (specifically, Bonferroni, Holm, and Hochberg procedures) as described in Subsection \ref{subsec:Frequentist Method to Control The Familywise Type I Error} under varying assumptions of the number of endpoints ($K$) from 1 to 10. Regarding the threshold for the decision rule (\ref{eq:decision_criterion}) of Bayesian hierarchical modeling, we use the same value, $\lambda_{i} = 0.985$, for all endpoints $i=1,\cdots,K$, irrespective of the number of endpoints $K$. In other words, there is no specific threshold adjustment concerning the number of endpoints ($K$). 

The thresholds (adjusted alphas) for the Bonferroni, Holm, and Hochberg procedures are described in Subsection \ref{subsec:Frequentist Method to Control The Familywise Type I Error}. Note that the thresholds for the three procedures are set to be increasingly stringent as the number of endpoints $(K)$ increases, aiming to keep the familywise type I error $\alpha^{family}$ less than $\alpha$.

The sample size for each sub-population, $N_{i}$ ($i=1,\cdots,K$), is set to 85 or 100. For a single endpoint $(K=1)$, these sample sizes lead to a power of approximately 80\% ($N_{i}=85$) and 86\% ($N_{i}=100$) based on the Z-test for one proportion at the one-sided significance level $\alpha=0.025$.

The followings are summary of the simulation setting:

\begin{itemize}
\item Number of endpoints: $K = 1,2,\cdots,10$,
\item One-sided significance level: $\alpha = 0.025$,
%\item Outcomes for endpoint family: $y_{i}\sim \mathcal{BN}(N_{i},\theta_{i}),\, (i=1,\cdots,K)$,
\item Number of patients: $N_{i} = 85$ or $100,\, (i=1,\cdots,K)$,
\item Performance goals: $\theta_{0,i} = 0.35,\, (i=1,\cdots,K)$,
\item Anticipated rates: $\theta_{a,i} = 0.5,\, (i=1,\cdots,K)$,
\item Multiplicity adjustment methods: 
\begin{enumerate}
\item Bayesian hierarchical modeling (Bayesian),
\item Bonferroni, Holm, and Hochberg procedures (Frequentist),
\end{enumerate}
\item Decision rule:
\begin{enumerate}
\item Bayesian hierarchical modeling: Posterior probability approach (\ref{eq:decision_criterion}) with the threshold $\lambda_{i}=0.985,\, (i=1,\cdots,K)$ across all settings,
\item Bonferroni, Holm, and Hochberg procedures: Use the adjusted p-value as described in Subsection \ref{subsec:Frequentist Method to Control The Familywise Type I Error} such that the unadjusted p-value are obtained by the exact binomial test \citep{clopper1934use}.
\end{enumerate}
\end{itemize}

\begin{figure}[h!]
\centering
\includegraphics[width=\textwidth]{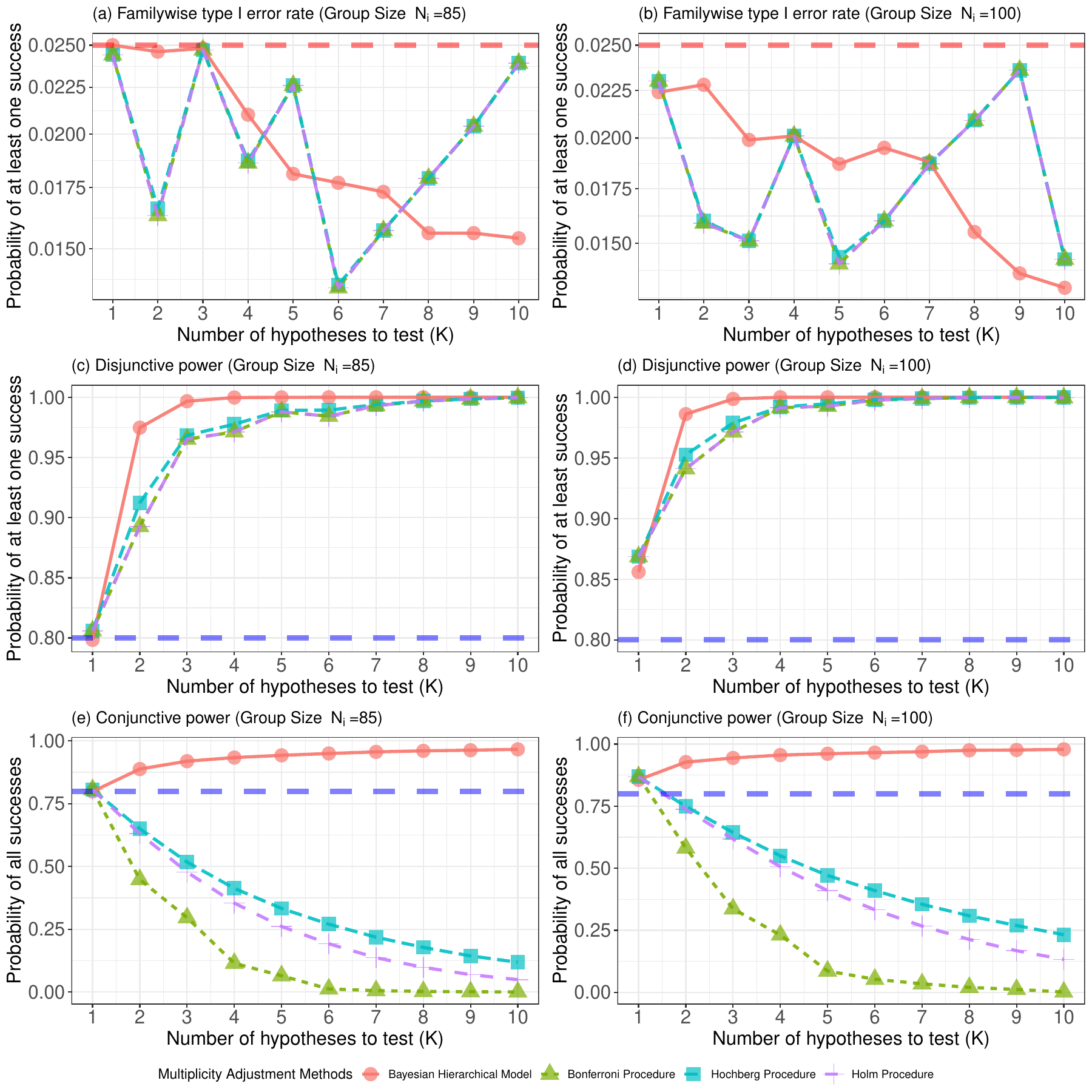}
\caption{\baselineskip=12pt Results of simulation experiment with different number of endpoints ($K=1,\cdots,10$) and group size ($N_{i} = 85, 100$).}
\label{fig:Multiplicity_Problem_Simulation}
\end{figure}

Figure \ref{fig:Multiplicity_Problem_Simulation} displays the results of simulation experiments. Panels (a) and (b) demonstrate that all the considered adjustment methods successfully control the familywise type I error rate, $\alpha^{family}$, at the one-sided significance level of $\alpha=0.025$ across the number of endpoints $K$. Notably, these two panels show that the familywise type I error rate, $\alpha^{family}$, based on Bayesian method decreases as $K$ increases, even when the same thresholds $\lambda_{i}=0.985$ are universally used across all settings. This result implies that there is no need for adjustments of the Bayesian threshold \citep{berry1999bayesian}. Essentially, this nice property is due to the shrinkage effect: borrowing strength across sub-populations automatically adjusts the familywise type I error rate $\alpha^{family}$ to be less than $\alpha=0.025$.

Panels (c) and (d) demonstrate that the disjunctive powers $\pi^{dis}$ (\ref{eq:disjunctive power}) of all the considered adjustment methods increase as $K$ increases. The Bayesian method is the most powerful, while the Bonferroni method is the least powerful among the four methods. The Hochberg method is marginally more powerful than the Holm method. Panels (e) and (f) show that only the Bayesian method leads to an increase in the conjunctive power $\pi^{conj}$ (\ref{eq:conjunctive power}) as $K$ increases. These results indicate that the shrinkage effect of Bayesian hierarchical modeling is beneficial under the two power scenarios. In contrast, p-value-based multiplicity adjustment procedures are only appropriate to use under the disjunctive power scenario. This implies that the total sample size $N=\sum_{i=1}^{K} N_{i}$ required for the study can be significantly reduced if the Bayesian hierarchical model is used, compared to the frequentist methods. Particularly for the conjunctive power scenario, only the Bayesian hierarchical model possesses this unique advantage.

To summarize, the simulation experiment implies that the mechanism of multiplicity adjustment (shrinkage effect or borrowing strength) is automatically embedded in Bayesian hierarchical modeling. This controls the familywise type I error rate to be less than the significance level and improves both disjunctive and conjunctive powers as the number of hypotheses increases. This contrasts with the p-value-based procedures, which are criticized by their overconservatism, which becomes acute when the number of hypotheses is large \citep{hommel2011multiple, simes1986improved, guo2010multiplicity, hochberg1988sharper}.
\section{External Data Borrowing}\label{sec:Historical Data Information Borrowing}
\subsection{Bayesian Information Borrowing for Regulatory Submission}\label{subsec:Bayesian Information Borrowing}
There is a growing interest in Bayesian clinical trial designs with informative prior distributions, allowing the borrowing of information from an external source. Borrowing information from previously completed trials is used extensively in medical device trials \citep{bohm2020efficacy,thompson2021dynamic,Bayesian2010FDAGuidance} and is increasingly seen in drug trials for extrapolation of adult data to pediatrics \citep{gamalo2017statistical} or leveraging historical datasets for rare diseases \citep{gokbuget2016international,gokbuget2016blinatumomab,goring2019characteristics}. In general, sponsors benefit in multiple ways by using Bayesian borrowing designs, including reductions in sample size, time, expense, and increased statistical power. 

In practice, the key difficulty facing stakeholders hoping to design a trial using Bayesian borrowing methods is understanding the similarity of previous studies to the current study, including factors such as enrollment and treatment criteria, and achieving exchangeability between the studies in discussions with regulators. For example, outcomes of medical device trials for a device can vary substantially due to the device evolvement from the previous to the next generation, or by site influenced by differences such as physician training, technique, experience with the device, patient management, and patient population, among many other factors. Regulatory agencies recognize that two studies are never exactly alike; nonetheless, it is recommended that the studies used to construct the informative prior be similar to the current study in terms of the protocol (endpoints, target population, etc.) and the time frame of the data collection to ensure that the practice of medicine and the study populations are comparable \citep{Bayesian2010FDAGuidance}. It is crucial that companies and regulators reach an agreement regarding the prior information and the Bayesian design before enrolling any patients in the new study \citep{campbell2011bayesian}.

One perceptible trend in the Bayesian regulatory environment is that the strict control of the type I error rate in the frequentist framework should be relaxed to a less stringent level for Bayesian submissions using information borrowed from external evidence, due to the unavoidable inflation of the type I error rate \citep{psioda2018bayesian, kopp2020power, psioda2019bayesian, best2023beyond}. In other words, if one wishes to control the type I error rate in the frequentist sense, all prior information must be disregarded in the analysis. Hence, a reasonable control of the type I error is desirable in the setting of information borrowing. Regulators are also increasingly aware of the substantial limitations that stringent control of the frequentist type I error entails. For example, an FDA guidance \citep{Bayesian2010FDAGuidance} states that, `\textit{If the FDA considers the type I error rate of a Bayesian experimental design to be too large, we recommend you modify the design or the model to reduce that rate. Determination of ``too large" is specific to a submission because some sources of type I error inflation (e.g., large amounts of valid prior information) may be more acceptable than others (e.g., inappropriate choice of studies for construction of the prior, inappropriate statistical model, or inappropriate criteria for study success). The seriousness (cost) of a Type I error is also a consideration.}' Several approvals were granted both in the US and in Europe based on non-randomized studies using external controls \citep{goring2019characteristics}. Even though these approvals were typically for rare diseases, they signal the increasing willingness of regulators to review applications for the design and analysis of pivotal clinical studies where the evidence generated outside of a new pivotal study is augmented.

In order to control the type I error rate at a reasonable level with which stakeholders agree, one of the key aspects of Bayesian borrowing designs is to appropriately discount historical/prior information if the prior distribution is too informative relative to the current study \citep{Bayesian2010FDAGuidance}. Although such discounting can be achieved by directly changing the hyper-parameters of the prior, as exemplified by a Beta-Binomial model seen in Table \ref{tab:Beta-binomial_single_arm_design}, or by putting restrictions on the amount of borrowing allowed from previous studies, one of the standard ways is to control the weight parameter on the external study data, which is typically a fractional real number \citep{duan2006evaluating, pawel2023normalized, schmidli2014robust, neuenschwander2009note, ibrahim2015power}. In the next section, we illustrate the use of a power prior model to leverage historical data from a pilot study and explore the influence of the weight parameter on the frequentist operating characteristics of the Bayesian design.

\subsection{Example - Bayesian Borrowing Design based on Power Prior}\label{subsec:Bayesian Borrowing Design based on Power Prior}
We illustrate a Bayesian borrowing design based on a power prior \citep{ibrahim2000power, ibrahim2015power} by taking the primary safety endpoint discussed in Subsection \ref{subsec:Example - Standard Single-stage Design Based on Beta-Binomial Model} as an example. Suppose that a single-arm pilot trial with the number of patients $N_{0}=100$ is done under similar enrollment and treatment criteria as a new pivotal trial. The pilot study provides binary outcome data $\textbf{y}_{N{0}}=(y_{10},\cdots,y_{i0},\cdots,y_{N_{0}0})^{\top}$ for the informative prior in the Bayesian power prior method. The power prior raises the likelihood of the pilot data to the power parameter $a_{0}$, which quantifies the discounting of the pilot data due to heterogeneity between pilot and pivotal trials:
\begin{align}
\label{eq:power_prior}
\pi(\theta|\textbf{y}_{N_{0}},a_{0})&\propto f(\textbf{y}_{N_{0}}|\theta)^{a_{0}} \cdot \pi_{0}(\theta)\propto 
\left\{
\prod_{i=1}^{N_{0}}
%\mathcal{B}\text{ernoullli}(y_{i0}|\theta)
\theta^{y_{i0}}
(1 - \theta)^{1 - y_{i0}}
\right\}^{a_{0}} \cdot 
\mathcal{B}\text{eta}(\theta|0.01,0.01)\\
\nonumber
&\propto \mathcal{B}\text{eta}(\theta|a_{0}x_{0} + 0.01,a_{0}(N_{0} - x_{0}) + 0.01),
\end{align}
where $x_{0} = \sum_{i=1}^{N_{0}}y_{i0}$ represents the number of patients who experienced a primary adverse event within 30 days after a surgical procedure involving the device in the pilot trial.

In the power prior formulation (\ref{eq:power_prior}), $\pi_{0}(\theta)$ denotes the prior distribution for $\theta$ before observing the pilot study data $\textbf{y}_{0}$; this is referred to as the initial prior. The initial prior is often chosen to be noninformative, and in this example, we use $\pi_{0}(\theta)=\mathcal{B}\text{eta}(\theta|0.01,0.01).$

The power parameter $a_{0}\in [0,1]$ weighs the pilot data relative to the likelihood of the pivotal trial. The special cases of using the pilot data fully or not at all are covered by $a_{0}=1$ and $a_{0}=0$, respectively, while values of $a_{0}$ between $0$ and $1$ allow for differential weighting of the pilot data. The value $a_{0} N_{0}$ can be interpreted as the prior effective sample size, the number of patients to be borrowed from the pilot study. The parameter $a_{0}$ can be estimated by using the normalized power prior formulation \citep{duan2006evaluating, ye2022normalized}. However, in this paper, we fix $a_{0}$ since our purpose is to explore the influence of the power parameter $a_{0}$ on the frequentist operating characteristics of the Bayesian design.

Finally, the posterior distribution, given the outcomes from patients in pivotal and pilot trials, is once again the beta distribution due to the conjugation relationship:
\begin{align}
\label{eq:Posterior_power_prior}
\pi(\theta|\textbf{y}_{N}, \textbf{y}_{N_{0}},a_{0}) &\propto f(\textbf{y}_{N}|\theta)\cdot \pi(\theta|\textbf{y}_{N_{0}},a_{0})\\
\nonumber
&\propto
\mathcal{B}\text{eta}(\theta|x + a_{0}x_{0} + 0.01,N - x + a_{0}(N_{0} - x_{0}) + 0.01).
\end{align}

Building upon the scenario presented in Subsection \ref{subsec:Example - Standard Single-stage Design Based on Beta-Binomial Model}, the sponsor, during the planning stage of the pivotal trial, anticipated a safety rate of $\theta_{a} = 0.05$ with a performance goal set at $\theta_{0} = 0.12$. At this stage, $\textbf{y}_{N}$ is a random quantity, while $\textbf{y}_{N_{0}}$ is observed, and $a_{0}$ is fixed at a specific value to control the influence of $\textbf{y}_{N{0}}$ in the decision-making process. The decision rule states that if $T(\textbf{y}_{N},\textbf{y}_{N_{0}},a_{0}) = \mathbb{P}[\theta < 0.12 | \textbf{y}_{N},\textbf{y}_{N_{0}},a_{0}] > 0.975$, then the null hypothesis $\mathcal{H}_{0}: \theta \geq 0.12$ is rejected, implying the success of the study in ensuring the safety of the device.

Frequentist operating characteristics of this Bayesian borrowing design can be summarized by two following quantities:
\begin{align}
\label{eq:freq_type_I_error_rate_Bayesian_Borrowing_Design}
&\text{Type I error}: \beta_{\theta_{0}}^{(N)}(\textbf{y}_{N_{0}},a_{0}) = \mathbb{P}[T(\textbf{y}_{N},\textbf{y}_{N_{0}},a_{0})>0.975 |\textbf{y}_{N} \sim f(\textbf{y}_{N}|\theta_{0}), \textbf{y}_{N_{0}},a_{0} ],\\
\label{eq:freq_power_Bayesian_Borrowing_Design}
&\text{Power}: \beta_{\theta_{a}}^{(N)}(\textbf{y}_{N_{0}},a_{0}) = \mathbb{P}[T(\textbf{y}_{N},\textbf{y}_{N_{0}},a_{0})>0.975 | \textbf{y}_{N} \sim f(\textbf{y}_{N}|\theta_{a}), \textbf{y}_{N_{0}},a_{0}].
\end{align}
It is important to note that the type I error rate and power of Bayesian borrowing designs depend on the pilot study data $\textbf{y}_{N_0}$ and the power parameter $a_0$. In the case of no borrowing ($a_0=0$), the values of $\beta_{\theta_{0}}^{(N)}(\textbf{y}_{N_{0}},a_{0})$ (\ref{eq:freq_type_I_error_rate_Bayesian_Borrowing_Design}) and $\beta_{\theta_{a}}^{(N)}(\textbf{y}_{N_{0}},a_{0})$ (\ref{eq:freq_power_Bayesian_Borrowing_Design}) reduce to the values of $\beta_{\theta_{0}}^{(N)}$ (\ref{eq:freq_type_I_error_rate}) and $\beta_{\theta_{a}}^{(N)}$ (\ref{eq:freq_power}), respectively. Otherwise ($0< a_0 \leq 1$), the former values could be significantly different from the latter values.

In the following, we explore the operating characteristics of this Bayesian borrowing design under the two different scenarios regarding the direction of the pilot study data,  whether it is favorable or unfavorable to reject the null hypothesis. In the optimistic external scenario, $x_{0}=5$ out of $N_{0}=100$ patients experienced the adverse event, resulting in a historical event rate of $0.05$, which is lower than the performance goal of $\theta_{0} = 0.12$. In contrast, the pessimistic external scenario is where $x_{0}=15$ out of $N_{0}=100$ patients experienced the adverse event, leading to a historical event rate of $0.15$, which is higher than the performance goal.

Figure \ref{fig:Prior_Prior_Design_Historical_Data_Borrowing} displays the probability of rejecting the null hypothesis versus the power parameter $a_{0}$ for the two scenarios, provided that the sample size for the pivotal trial is $N=150$. The true safety rate $\theta$ is set to be either $\theta_{a} = 0.05$ or $\theta_{0} = 0.12$, corresponding to the power and type I error scenarios, respectively. In the case of no borrowing (that is, $a_{0}=0$), the type I error rate is $0.0225$, and power is $0.8681$, which is almost identical to those obtained from the Bayesian design with a non-informative beta prior and the frequentist design based on z-test statistics seen in Table \ref{tab:Beta-binomial_single_arm_design}.

Panel (a) in Figure \ref{fig:Prior_Prior_Design_Historical_Data_Borrowing} demonstrates that, in the optimistic external scenario, the type I error rate (\ref{eq:freq_type_I_error_rate_Bayesian_Borrowing_Design}) and power (\ref{eq:freq_power_Bayesian_Borrowing_Design}) simultaneously increase as the power parameter $a_{0}$ increases. Conversely, in the pessimistic external scenario (Panel (b)), the type I error rate (\ref{eq:freq_type_I_error_rate_Bayesian_Borrowing_Design}) and power (\ref{eq:freq_power_Bayesian_Borrowing_Design}) simultaneously decrease as the power parameter $a_{0}$ increases. It is important to note that the inflation of the type I error in panel (a) and the deflation of the power in panel (b) are expected (see Subsection \ref{subsec:Example - Standard Single-stage Design Based on Beta-Binomial Model} for relevant discussion).

The central question at this point is, `Is the inflation of the type I error rate (\ref{eq:freq_type_I_error_rate_Bayesian_Borrowing_Design}) under the optimistic scenario scientifically sound for the regulatory submission?' To answer this question, let us assume that the pilot and pivotal studies are very similar and that the pilot study data provide high quality so that the two studies are essentially exchangeable (refer to Subsection 3.7 in \citep{Bayesian2010FDAGuidance} for the concept of exchangeability). Under this idealistic assumption, this inflation is a mathematical result due to the opposite direction of pilot study data $\textbf{y}_{N_{0}}$ (favoring the alternative hypothesis) and pivotal study data $\textbf{y}_{N}$ (generated under the null hypothesis), not due to the incorrect use of the Bayesian borrowing design. Therefore, the inflation of the type I error rate under the optimistic scenario is scientifically sound for the regulatory submission only when the two studies are exchangeable.

\begin{figure}[h!]
\centering
\includegraphics[width=\textwidth]{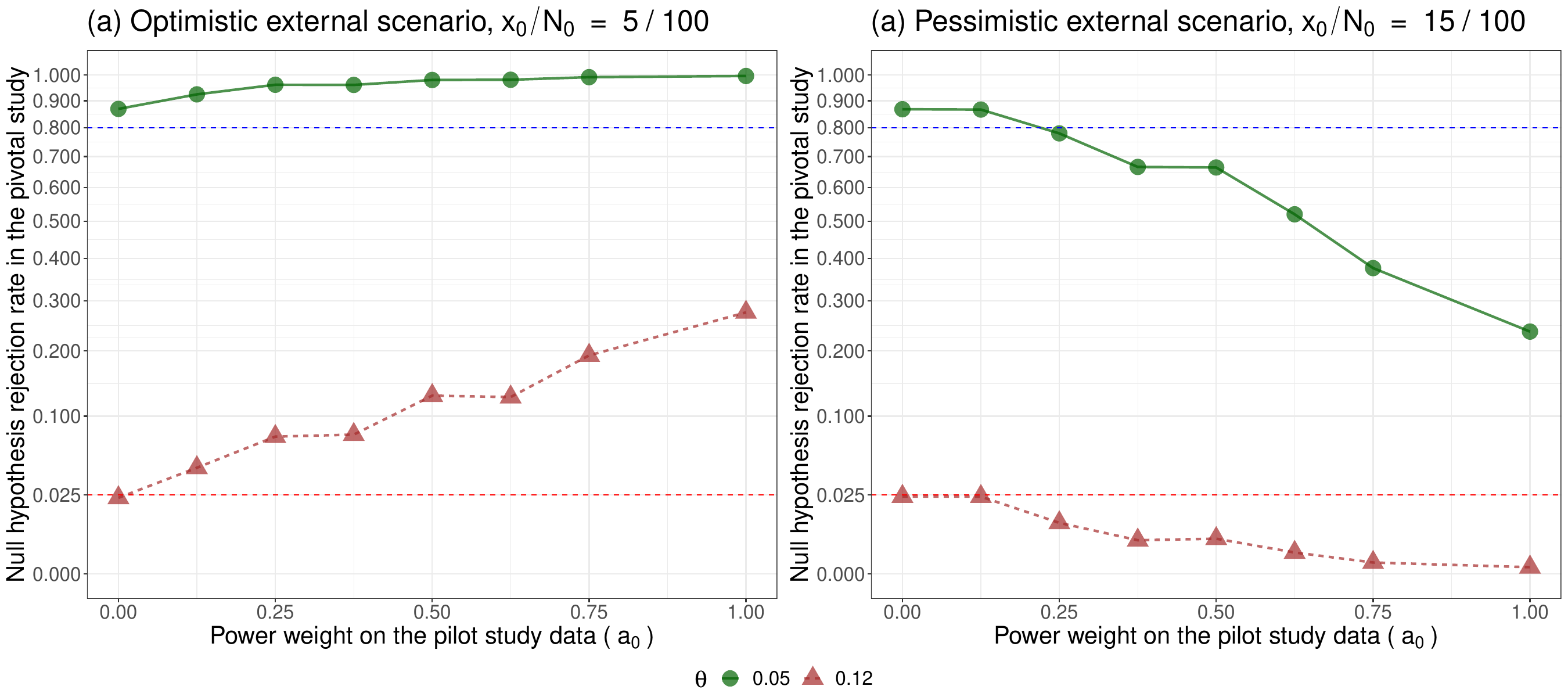}
\caption{\baselineskip=12pt Null hypothesis rejection rate $\beta_{\theta}^{(N)} = \mathbb{P}[T(\textbf{y}_{N},\textbf{y}_{N_{0}},a_{0})>0.975 |\textbf{y}_{N} \sim f(\textbf{y}_{N}|\theta)]$ versus power parameter $a_{0}$ under the optimistic external scenario (Panel (a), $x_{0}/N_{0}=5/100$) and pessimistic external scenario (Panel (b), $x_{0}/N_{0}=15/100$). Sample size of the pivotal trial is $N=150$.}
\label{fig:Prior_Prior_Design_Historical_Data_Borrowing}
\end{figure}

\begin{figure}[h!]
\centering
\includegraphics[width=\textwidth]{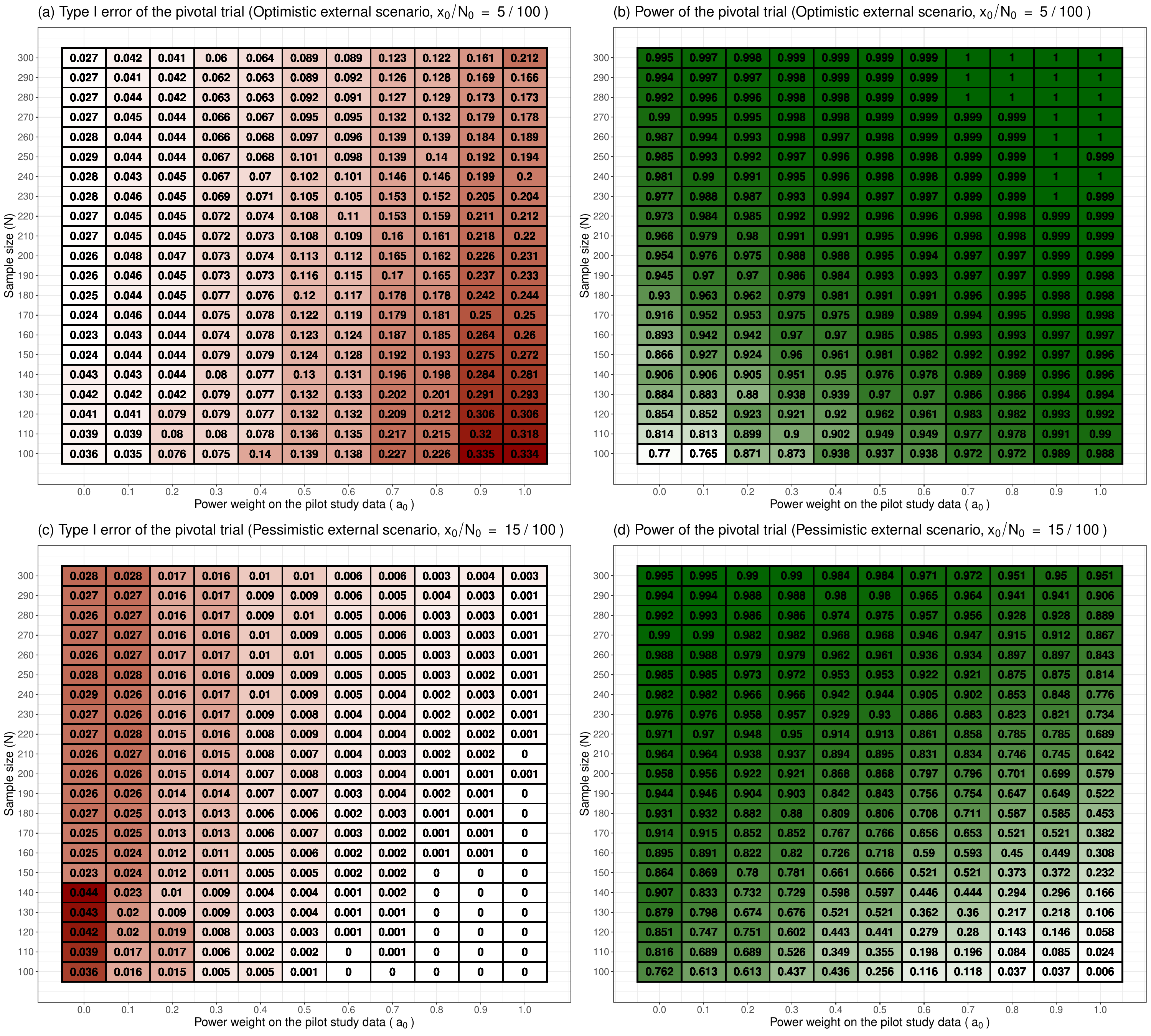}
\caption{\baselineskip=12pt Heatmaps to illustrate the frequentist operating characteristics of the Bayesian borrowing design. The y-axis and x-axis represent the sample size of the pivotal trial $(N)$ and the power parameter $(a_{0})$, respectively. The contents in the heatmaps are the null hypothesis rejection rates $\beta_{\theta}^{(N)} = \mathbb{P}[T(\textbf{y}_{N},\textbf{y}_{N_{0}},a_{0})>0.975 |\textbf{y}_{N} \sim f(\textbf{y}_{N}|\theta)]$, where the type I error rate and power are obtained by setting $\theta = \theta_{0} = 0.12$ and $\theta = \theta_{a} = 0.05$, respectively.}
\label{fig:Power_Prior_Simulation_Results}
\end{figure}

In practice, achieving exchangeability between the two studies is a somewhat elaborate task, and regulators also recognize that no two studies are never exactly alike \citep{Bayesian2010FDAGuidance}. Therefore, the key to the successful submission of a Bayesian borrowing design is to mitigate any potential systematic biases (and consequently the risk of incorrect conclusions) when the pivotal study data $\textbf{y}_{N}$ appears to be inconsistent with the pilot study data $\textbf{y}_{N_{0}}$. This ultimately involves finding an appropriate degree of down-weighting for the pilot study data when such a conflict is present \citep{galwey2017supplementation}. However, this is again a challenging task because, from an operational viewpoint, the pivotal study data $\textbf{y}_{N}$ will be observed upon completion of the study, while the pilot study data $\textbf{y}_{N_{0}}$ has already been observed during the planning phase. The key difficulty here is that the power parameter $a_{0}$ should be determined in the planning phase specified in the protocol or Statistical Analysis Plan before seeing any pivotal study data $\textbf{y}_{N}$. One can estimate the power parameter $a_{0}$ through dynamic borrowing techniques \citep{nikolakopoulos2018dynamic}, but such methods may have their own tuning parameters to control the power parameter $a_{0}$ so the central issue does not completely disappear. 

For this reason, thorough pre-planning is essential when employing Bayesian borrowing designs. This necessitates subject matter expertise, interactions, and a consensus among all stakeholders. It is crucial to establish an agreement on analysis and design priors, with the latter being utilized to assess the operating characteristics of the trial design under all conceivable scenarios. In this regard, a graphical approach can be used to help select design parameters, including the degree of discounting for the pilot study data \citep{edwards2023using}. 

Figure \ref{fig:Power_Prior_Simulation_Results} presents heatmaps for the type I error rate (left heatmaps) and power (right heatmaps) to explore how changing the power parameter ($a_{0}$) and sample size in the pivotal study $(N)$ impacts the type I error and power. As seen from panels (a) and (d), the inflation of the type I error under the optimistic external scenario and the deflation of power under the pessimistic external scenario are evident across the sample size of the pivotal trial ($N$). Another interesting phenomenon is that, as $N$ increases, the tendencies of inflation/deflation diminish across the parameter ($a_{0}$), showcasing the Bernstein-Von Mises phenomenon \citep{johnstone2010high,walker1969asymptotic} as discussed in Subsection \ref{subsec:Asymptotic Property of Posterior Probability Approach}. This suggests that sponsors can benefit from Bayesian borrowing designs in reducing the sample size $N$ only when the pilot study data favorably support rejecting the null hypothesis and $N$ is not excessively large. The acceptable amount of pilot study data to be borrowed should be agreed upon in discussions with regulators because inflation of the type I error rate is expected in this scenario.

\section{Discussion}\label{sec:Discussion}
There have been many Bayesian clinical studies conducted and published in top-tier journals \citep{wilber2010comparison,bohm2020efficacy,richeldi2022trial,baeten2013anti,polack2020safety}. Nevertheless, the adoption of Bayesian statistics for the registration of new drugs and medical devices requires a significant advancement in regulatory science, presenting a range of potential benefits and challenges. In this section, we discuss key aspects of this transformation.

\paragraph{Complex innovative trial designs:} The Bayesian framework provides a promising method to address a variety of modern design complexities as part of complex innovative trial designs. For example, it enables real-time adjustments to trial design, sample size, and patient allocation based on accumulating data from subjects in the trial. These adaptive features can expedite the development of medical products, reduce costs, and enhance patient safety: as exemplified in Subsection \ref{subsec:Example - Two-stage Group Sequential Design based on Beta-Binomial Model} and \ref{subsec:Example - Two-stage Futility Design with Greenwood Test}. Bayesian approach is also frequently useful in the absence of prior information for adaptive clinical trials. However, it is important to note that such modifications are scientifically valid only when prospectively planned and specified in the protocol or Statistical Analysis Plan, and when conducted according to the pre-specified decision rules \citep{bretz2009adaptive,brannath2007multiplicity}. Therefore, it is advisable for sponsors to seek early interaction with regulators regarding the details of their plans for using Bayesian methods \citep{CID2020FDAGuidance}.

\paragraph{Incorporating prior information:} 
One defining feature of Bayesian statistics is the ability to incorporate prior information into the analysis. This contrasts with classical frequentist statistics, which may use information from previous studies only at the design stage. This feature is invaluable when designing clinical trials, especially in situations where historical or more generally study-external data are available. The utilization of informative priors can improve statistical efficiency and enhance the precision of treatment effect estimates. However, it is essential to carefully consider the source and relevance of prior information to ensure the validity and integrity of the trial. Furthermore, as discussed in Section \ref{sec:Historical Data Information Borrowing}, type I error inflation is expected to occur in certain situations. More theoretical work needs to be done in this area to clarify that the stringent control of the type I error probability when there is prior information is not an appropriate way to think about this problem. See Subsection 2.4.3 from \citep{lesaffre2020bayesian} for relevant discussion.

\paragraph{Rare diseases and small sample sizes:} In the context of rare diseases, where limited patient populations hinder traditional frequentist approaches, Bayesian methods are useful. They allow for the integration of diverse data sources, such as historical data or data from similar diseases, to provide robust evidence with a possibly smaller sample size than traditional frequentist approaches. Obtaining ethical and institutional approval is easier in small studies compared with large multicentre studies \citep{hackshaw2008small}. However, as discussed in Subsection \ref{subsec:Asymptotic Property of Posterior Probability Approach}, the operating characteristics of clinical trial designs with a small sample size are more sensitive to the choice of the prior than those with a moderate or large sample size. This implies that smaller clinical trials are more vulnerable to the conflict between the trial data and prior evidence than larger clinical trials. More research is needed in both regulatory science and methodology in this area to mitigate such a conflict and ensure a safe path to regulatory submission, minimizing potential systemic bias.

\paragraph{Regulatory considerations:} The integration of Bayesian statistics into the regulatory setting requires adherence to established guidelines and frameworks. In the past decade, the FDA has recognized the potential of Bayesian approaches and has provided guidance on their use \citep{DesignConsiderations2013FDAGuidance,ADAPTIVEDRUG2019FDAGuidance,Bayesian2010FDAGuidance,ADAPTIVEMD2016FDAGuidance}. However, the adoption of Bayesian statistics is not without challenges and debates. Some statisticians and stakeholders remain cautious about the subjective nature of prior elicitation, potential biases, and the interpretation of Bayesian results. The ongoing debate surrounding the calibration of Bayesian methods, particularly in the context of decision-making, underscores the need for further research and consensus in the field.

In conclusion, the use of Bayesian statistics in clinical trials within the regulatory setting is a promising evolution that can enhance the efficiency and effectiveness of the development process for new drugs or medical devices. However, successful implementation requires rigorous prior specification, careful consideration of decision rules to achieve the study objective, and adherence to regulatory guidelines. The Bayesian paradigm has demonstrated its potential in addressing the complexities of modern clinical trials, offering a versatile tool for researchers and regulators alike. As researchers, clinicians, and regulatory agencies continue to explore the benefits of Bayesian statistics, it is essential to foster collaboration, transparency, and ongoing dialogue to refine and harmonize the use of Bayesian approaches in clinical trials.

%\newpage
\paragraph{Funding details} This research received no external funding.
\paragraph{conflicts of interest} The authors declare no conflict of interest.
\paragraph{Data availability statements} The paper used simulated data.
\paragraph{informed consent} Not applicable.
%\paragraph{Code availability} 
%\institutionalreview{Not applicable.}

%\dataavailability{Not applicable.} 

%\newpage
\bibliographystyle{plain}
\baselineskip=15pt 
\bibliography{MDD_ref}

\begin{thebibliography}{100}

\bibitem{i2022clinical}
Clinical trial design during and beyond the pandemic: the i-spy covid trial.
\newblock {\em Nature medicine}, 28(1):9--11, 2022.

\bibitem{Small2006EMAGuidance}
European~Medicines Agency.
\newblock Guideline on clinical trials in small population, 2006.
\newblock \url
  {https://www.ema.europa.eu/en/clinical-trials-small-populations-scientific-guideline}
  (visited: 2023-11-23).

\bibitem{alexander2018adaptive}
Brian~M Alexander, Sujuan Ba, Mitchel~S Berger, Donald~A Berry, Webster~K
  Cavenee, Susan~M Chang, Timothy~F Cloughesy, Tao Jiang, Mustafa Khasraw,
  Wenbin Li, et~al.
\newblock Adaptive global innovative learning environment for glioblastoma: Gbm
  agile.
\newblock {\em Clinical Cancer Research}, 24(4):737--743, 2018.

\bibitem{andrieu2003introduction}
Christophe Andrieu, Nando De~Freitas, Arnaud Doucet, and Michael~I Jordan.
\newblock An introduction to mcmc for machine learning.
\newblock {\em Machine learning}, 50:5--43, 2003.

\bibitem{armitage1969repeated}
Peter Armitage, CK~McPherson, and BC~Rowe.
\newblock Repeated significance tests on accumulating data.
\newblock {\em Journal of the Royal Statistical Society: Series A (General)},
  132(2):235--244, 1969.

\bibitem{baeten2013anti}
Dominique Baeten, Xenofon Baraliakos, J{\"u}rgen Braun, Joachim Sieper, Paul
  Emery, Desiree Van~der Heijde, Iain McInnes, Jacob~M van Laar, Robert
  Landew{\'e}, Paul Wordsworth, et~al.
\newblock Anti-interleukin-17a monoclonal antibody secukinumab in treatment of
  ankylosing spondylitis: a randomised, double-blind, placebo-controlled trial.
\newblock {\em The Lancet}, 382(9906):1705--1713, 2013.

\bibitem{barber1999symmetric}
Stuart Barber and Christopher Jennison.
\newblock Symmetric tests and confidence intervals for survival probabilities
  and quantiles of censored survival data.
\newblock {\em Biometrics}, 55(2):430--436, 1999.

\bibitem{berry2006bayesian}
Donald~A Berry.
\newblock Bayesian clinical trials.
\newblock {\em Nature reviews Drug discovery}, 5(1):27--36, 2006.

\bibitem{berry1999bayesian}
Donald~A Berry and Yosef Hochberg.
\newblock Bayesian perspectives on multiple comparisons.
\newblock {\em Journal of Statistical Planning and Inference},
  82(1-2):215--227, 1999.

\bibitem{berry2013bayesian}
Scott~M Berry, Kristine~R Broglio, Susan Groshen, and Donald~A Berry.
\newblock Bayesian hierarchical modeling of patient subpopulations: efficient
  designs of phase ii oncology clinical trials.
\newblock {\em Clinical Trials}, 10(5):720--734, 2013.

\bibitem{berry2016response}
Scott~M Berry, Elizabeth~A Petzold, Peter Dull, Nathan~M Thielman, Coleen~K
  Cunningham, G~Ralph Corey, Micah~T McClain, David~L Hoover, James Russell,
  J~McLeod Griffiss, et~al.
\newblock A response adaptive randomization platform trial for efficient
  evaluation of ebola virus treatments: a model for pandemic response.
\newblock {\em Clinical Trials}, 13(1):22--30, 2016.

\bibitem{best2023beyond}
Nicky Best, Maxine Ajimi, Beat Neuenschwander, Gaelle Saint-Hilary, and Simon
  Wandel.
\newblock Beyond the classical type i error: Bayesian metrics for bayesian
  designs using informative priors.
\newblock {\em arXiv preprint arXiv:2309.02141}, 2023.

\bibitem{bhatt2016adaptive}
Deepak~L Bhatt and Cyrus Mehta.
\newblock Adaptive designs for clinical trials.
\newblock {\em New England Journal of Medicine}, 375(1):65--74, 2016.

\bibitem{bohm2021re}
Michael B{\"o}hm, Martin Fahy, Graeme~L Hickey, Stuart Pocock, Sandeep Brar,
  Vanessa DeBruin, Michael~A Weber, Felix Mahfoud, and David~E Kandzari.
\newblock A re-examination of the spyral htn-off med pivotal trial with respect
  to the underlying model assumptions.
\newblock {\em Contemporary Clinical Trials Communications}, 23:100818, 2021.

\bibitem{bohm2020efficacy}
Michael B{\"o}hm, Kazuomi Kario, David~E Kandzari, Felix Mahfoud, Michael~A
  Weber, Roland~E Schmieder, Konstantinos Tsioufis, Stuart Pocock, Dimitris
  Konstantinidis, James~W Choi, et~al.
\newblock Efficacy of catheter-based renal denervation in the absence of
  antihypertensive medications (spyral htn-off med pivotal): a multicentre,
  randomised, sham-controlled trial.
\newblock {\em The Lancet}, 395(10234):1444--1451, 2020.

\bibitem{brannath2007multiplicity}
Werner Brannath, Franz Koenig, and Peter Bauer.
\newblock Multiplicity and flexibility in clinical trials.
\newblock {\em Pharmaceutical Statistics: The Journal of Applied Statistics in
  the Pharmaceutical Industry}, 6(3):205--216, 2007.

\bibitem{bretz2016multiple}
Frank Bretz, Torsten Hothorn, and Peter Westfall.
\newblock {\em Multiple comparisons using R}.
\newblock CRC press, 2016.

\bibitem{bretz2009adaptive}
Frank Bretz, Franz Koenig, Werner Brannath, Ekkehard Glimm, and Martin Posch.
\newblock Adaptive designs for confirmatory clinical trials.
\newblock {\em Statistics in medicine}, 28(8):1181--1217, 2009.

\bibitem{broglio2014not}
Kristine~R Broglio, Jason~T Connor, and Scott~M Berry.
\newblock Not too big, not too small: a goldilocks approach to sample size
  selection.
\newblock {\em Journal of biopharmaceutical statistics}, 24(3):685--705, 2014.

\bibitem{burger2021use}
Hans~Ulrich Burger, Christoph Gerlinger, Chris Harbron, Armin Koch, Martin
  Posch, Justine Rochon, and Anja Schiel.
\newblock The use of external controls: To what extent can it currently be
  recommended?
\newblock {\em Pharmaceutical Statistics}, 20(6):1002--1016, 2021.

\bibitem{campbell2011bayesian}
Gregory Campbell.
\newblock Bayesian statistics in medical devices: innovation sparked by the
  fda.
\newblock {\em Journal of biopharmaceutical statistics}, 21(5):871--887, 2011.

\bibitem{carlin2022bayesian}
Bradley~P Carlin and Fabrice Nollevaux.
\newblock Bayesian complex innovative trial designs (cids) and their use in
  drug development for rare disease.
\newblock {\em The Journal of Clinical Pharmacology}, 62:S56--S71, 2022.

\bibitem{chen2011bayesian}
Ming-Hui Chen, Joseph~G Ibrahim, Peter Lam, Alan Yu, and Yuanye Zhang.
\newblock Bayesian design of noninferiority trials for medical devices using
  historical data.
\newblock {\em Biometrics}, 67(3):1163--1170, 2011.

\bibitem{chow2017sample}
Shein-Chung Chow, Jun Shao, Hansheng Wang, and Yuliya Lokhnygina.
\newblock {\em Sample size calculations in clinical research}.
\newblock CRC press, 2017.

\bibitem{chu2018bayesian}
Yiyi Chu and Ying Yuan.
\newblock A bayesian basket trial design using a calibrated bayesian
  hierarchical model.
\newblock {\em Clinical Trials}, 15(2):149--158, 2018.

\bibitem{chuang2006sample}
Christy Chuang-Stein.
\newblock Sample size and the probability of a successful trial.
\newblock {\em Pharmaceutical Statistics: The Journal of Applied Statistics in
  the Pharmaceutical Industry}, 5(4):305--309, 2006.

\bibitem{clopper1934use}
Charles~J Clopper and Egon~S Pearson.
\newblock The use of confidence or fiducial limits illustrated in the case of
  the binomial.
\newblock {\em Biometrika}, 26(4):404--413, 1934.

\bibitem{cox2020statistical}
David~R Cox.
\newblock Statistical significance.
\newblock {\em Annual Review of Statistics and Its Application}, 7:1--10, 2020.

\bibitem{demets2016data}
David~L DeMets and Susan~S Ellenberg.
\newblock Data monitoring committees—expect the unexpected.
\newblock {\em New England Journal of Medicine}, 375(14):1365--1371, 2016.

\bibitem{demets1994interim}
David~L Demets and KK~Gordon Lan.
\newblock Interim analysis: the alpha spending function approach.
\newblock {\em Statistics in medicine}, 13(13-14):1341--1352, 1994.

\bibitem{dmitrienko2009multiple}
Alex Dmitrienko, Frank Bretz, Peter~H Westfall, James Troendle, Brian~L Wiens,
  Ajit~C Tamhane, and Jason~C Hsu.
\newblock Multiple testing methodology.
\newblock In {\em Multiple testing problems in pharmaceutical statistics},
  pages 53--116. Chapman and Hall/CRC, 2009.

\bibitem{dmitrienko2013key}
Alex Dmitrienko, Ralph~B D'Agostino~Sr, and Mohammad~F Huque.
\newblock Key multiplicity issues in clinical drug development.
\newblock {\em Statistics in medicine}, 32(7):1079--1111, 2013.

\bibitem{dmitrienko2018multiplicity}
Alex Dmitrienko and Ralph~B D’Agostino~Sr.
\newblock Multiplicity considerations in clinical trials.
\newblock {\em New England Journal of Medicine}, 378(22):2115--2122, 2018.

\bibitem{dodd2016design}
Lori~E Dodd, Michael~A Proschan, Jacqueline Neuhaus, Joseph~S Koopmeiners,
  James Neaton, John~D Beigel, Kevin Barrett, Henry~Clifford Lane, and
  Richard~T Davey~Jr.
\newblock Design of a randomized controlled trial for ebola virus disease
  medical countermeasures: Prevail ii, the ebola mcm study.
\newblock {\em The Journal of infectious diseases}, 213(12):1906--1913, 2016.

\bibitem{duan2006evaluating}
Yuyan Duan, Keying Ye, and Eric~P Smith.
\newblock Evaluating water quality using power priors to incorporate historical
  information.
\newblock {\em Environmetrics: The Official Journal of the International
  Environmetrics Society}, 17(1):95--106, 2006.

\bibitem{edwards2023using}
Dawn Edwards, N~Best, J~Crawford, L~Zi, C~Shelton, and A~Fowler.
\newblock Using bayesian dynamic borrowing to maximize the use of existing
  data: A case-study.
\newblock {\em Therapeutic Innovation \& Regulatory Science}, pages 1--10,
  2023.

\bibitem{efron2010future}
Bradley Efron.
\newblock The future of indirect evidence.
\newblock {\em Statistical science: a review journal of the Institute of
  Mathematical Statistics}, 25(2):145, 2010.

\bibitem{faris2017fda}
Owen Faris and Jeffrey Shuren.
\newblock An fda viewpoint on unique considerations for medical-device clinical
  trials.
\newblock {\em New England Journal of Medicine}, 376(14):1350--1357, 2017.

\bibitem{fisher1936design}
Ronald~Aylmer Fisher.
\newblock Design of experiments.
\newblock {\em British Medical Journal}, 1(3923):554, 1936.

\bibitem{fleming1984designs}
Thomas~R Fleming, David~P Harrington, and Peter~C O'Brien.
\newblock Designs for group sequential tests.
\newblock {\em Controlled clinical trials}, 5(4):348--361, 1984.

\bibitem{Bayesian2010FDAGuidance}
U.S. Food and Drug Administration.
\newblock Guidance for the use of bayesian statistics in medical device
  clinical trials, 2010.
\newblock \url
  {www.fda.gov/regulatory-information/search-fda-guidance-documents/guidance-use-bayesian-statistics-medical-device-clinical-trials}
  (visited: 2023-11-23).

\bibitem{DesignConsiderations2013FDAGuidance}
U.S. Food and Drug Administration.
\newblock Design considerations for pivotal clinical investigations for medical
  devices, 2013.
\newblock \url
  {www.fda.gov/regulatory-information/search-fda-guidance-documents/design-considerations-pivotal-clinical-investigations-medical-devices
  } (visited: 2023-11-23).

\bibitem{ADAPTIVEMD2016FDAGuidance}
U.S. Food and Drug Administration.
\newblock Adaptive designs for medical device clinical studies, 2016.
\newblock \url
  {www.fda.gov/regulatory-information/search-fda-guidance-documents/adaptive-designs-medical-device-clinical-studies
  } (visited: 2023-11-23).

\bibitem{Pediatric2016FDAGuidance}
U.S. Food and Drug Administration.
\newblock Leveraging existing clinical data for extrapolation to pediatric uses
  of medical devices, 2016.
\newblock \url
  {www.fda.gov/regulatory-information/search-fda-guidance-documents/leveraging-existing-clinical-data-extrapolation-pediatric-uses-medical-devices
  } (visited: 2023-11-23).

\bibitem{ADAPTIVEDRUG2019FDAGuidance}
U.S. Food and Drug Administration.
\newblock Adaptive design clinical trials for drugs and biologics guidance for
  industry, 2019.
\newblock \url
  {www.fda.gov/regulatory-information/search-fda-guidance-documents/adaptive-design-clinical-trials-drugs-and-biologics-guidance-industry}
  (visited: 2023-11-23).

\bibitem{CID2020FDAGuidance}
U.S. Food and Drug Administration.
\newblock Interacting with the fda on complex innovative trial designs for
  drugs and biological products, 2020.
\newblock \url
  {https://www.fda.gov/regulatory-information/search-fda-guidance-documents/interacting-fda-complex-innovative-trial-designs-drugs-and-biological-products}
  (visited: 2023-11-23).

\bibitem{Master2022FDAGuidance}
U.S. Food and Drug Administration.
\newblock Master protocols: Efficient clinical trial design strategies to
  expedite development of oncology drugs and biologics guidance for industry,
  2022.
\newblock \url
  {www.fda.gov/regulatory-information/search-fda-guidance-documents/master-protocols-efficient-clinical-trial-design-strategies-expedite-development-oncology-drugs-and}
  (visited: 2023-11-23).

\bibitem{guidance2022FDAmultiple}
U.S. Food and Drug Administration.
\newblock Multiple endpoints in clinical trials guidance for industry, 2022.
\newblock
  https://www.fda.gov/regulatory-information/search-fda-guidance-documents/multiple-endpoints-clinical-trials-guidance-industry.

\bibitem{fountzilas2022clinical}
Elena Fountzilas, Apostolia~M Tsimberidou, Henry~Hiep Vo, and Razelle Kurzrock.
\newblock Clinical trial design in the era of precision medicine.
\newblock {\em Genome medicine}, 14(1):1--27, 2022.

\bibitem{freidlin2002comment}
Boris Freidlin and Edward~L Korn.
\newblock A comment on futility monitoring.
\newblock {\em Controlled clinical trials}, 23(4):355--366, 2002.

\bibitem{friedman2015fundamentals}
Lawrence~M Friedman, Curt~D Furberg, David~L DeMets, David~M Reboussin, and
  Christopher~B Granger.
\newblock {\em Fundamentals of clinical trials}.
\newblock Springer, 2015.

\bibitem{galwey2017supplementation}
NW~Galwey.
\newblock Supplementation of a clinical trial by historical control data: is
  the prospect of dynamic borrowing an illusion?
\newblock {\em Statistics in Medicine}, 36(6):899--916, 2017.

\bibitem{gamalo2017statistical}
Margaret Gamalo-Siebers, Jasmina Savic, Cynthia Basu, Xin Zhao, Mathangi
  Gopalakrishnan, Aijun Gao, Guochen Song, Simin Baygani, Laura Thompson, H~Amy
  Xia, et~al.
\newblock Statistical modeling for bayesian extrapolation of adult clinical
  trial information in pediatric drug evaluation.
\newblock {\em Pharmaceutical Statistics}, 16(4):232--249, 2017.

\bibitem{gamerman2006markov}
Dani Gamerman and Hedibert~F Lopes.
\newblock {\em Markov chain Monte Carlo: stochastic simulation for Bayesian
  inference}.
\newblock CRC press, 2006.

\bibitem{gelman1995bayesian}
Andrew Gelman, John~B Carlin, Hal~S Stern, and Donald~B Rubin.
\newblock {\em Bayesian data analysis}.
\newblock Chapman and Hall/CRC, 1995.

\bibitem{gelman2012we}
Andrew Gelman, Jennifer Hill, and Masanao Yajima.
\newblock Why we (usually) don't have to worry about multiple comparisons.
\newblock {\em Journal of research on educational effectiveness},
  5(2):189--211, 2012.

\bibitem{ghosh2006introduction}
Jayanta~K Ghosh, Mohan Delampady, and Tapas Samanta.
\newblock {\em An introduction to Bayesian analysis: theory and methods},
  volume 725.
\newblock Springer, 2006.

\bibitem{gokbuget2016international}
Nicola G{\"o}kbuget, Herv{\`e} Dombret, Jose-Maria Ribera, Adele~K Fielding,
  Anjali Advani, Renato Bassan, Victoria Chia, Michael Doubek, Sebastian
  Giebel, Dieter Hoelzer, et~al.
\newblock International reference analysis of outcomes in adults with
  b-precursor ph-negative relapsed/refractory acute lymphoblastic leukemia.
\newblock {\em haematologica}, 101(12):1524, 2016.

\bibitem{gokbuget2016blinatumomab}
Nicola G{\"o}kbuget, Michael Kelsh, Victoria Chia, Anjali Advani, Renato
  Bassan, Herve Dombret, Michael Doubek, Adele~K Fielding, Sebastian Giebel,
  Vincent Haddad, et~al.
\newblock Blinatumomab vs historical standard therapy of adult
  relapsed/refractory acute lymphoblastic leukemia.
\newblock {\em Blood cancer journal}, 6(9):e473--e473, 2016.

\bibitem{gopalan1998bayesian}
Ramanan Gopalan and Donald~A Berry.
\newblock Bayesian multiple comparisons using dirichlet process priors.
\newblock {\em Journal of the American Statistical Association},
  93(443):1130--1139, 1998.

\bibitem{gordon1982stochastically}
KK~Gordon~Lan, Richard Simon, and Max Halperin.
\newblock Stochastically curtailed tests in long--term clinical trials.
\newblock {\em Sequential Analysis}, 1(3):207--219, 1982.

\bibitem{goring2019characteristics}
Sarah Goring, Aliki Taylor, Kerstin M{\"u}ller, Tina Jun~Jian Li, Ellen~E
  Korol, Adrian~R Levy, and Nick Freemantle.
\newblock Characteristics of non-randomised studies using comparisons with
  external controls submitted for regulatory approval in the usa and europe: a
  systematic review.
\newblock {\em BMJ open}, 9(2):e024895, 2019.

\bibitem{greenwood1926report}
Major Greenwood et~al.
\newblock A report on the natural duration of cancer.
\newblock {\em A Report on the Natural Duration of Cancer.}, (33), 1926.

\bibitem{grieve2016idle}
Andrew~P Grieve.
\newblock Idle thoughts of a ‘well-calibrated’bayesian in clinical drug
  development.
\newblock {\em Pharmaceutical statistics}, 15(2):96--108, 2016.

\bibitem{griffiths1973maximum}
DA~Griffiths.
\newblock Maximum likelihood estimation for the beta-binomial distribution and
  an application to the household distribution of the total number of cases of
  a disease.
\newblock {\em Biometrics}, pages 637--648, 1973.

\bibitem{gsponer2014practical}
Thomas Gsponer, Florian Gerber, Bj{\"o}rn Bornkamp, David Ohlssen, Marc
  Vandemeulebroecke, and Heinz Schmidli.
\newblock A practical guide to bayesian group sequential designs.
\newblock {\em Pharmaceutical statistics}, 13(1):71--80, 2014.

\bibitem{guo2010multiplicity}
Mengye Guo and Daniel~F Heitjan.
\newblock Multiplicity-calibrated bayesian hypothesis tests.
\newblock {\em Biostatistics}, 11(3):473--483, 2010.

\bibitem{hackshaw2008small}
Allan Hackshaw.
\newblock Small studies: strengths and limitations, 2008.

\bibitem{hall1990large}
WJ~Hall and David~J Mathiason.
\newblock On large-sample estimation and testing in parametric models.
\newblock {\em International Statistical Review/Revue Internationale de
  Statistique}, pages 77--97, 1990.

\bibitem{herson1979predictive}
Jay Herson.
\newblock Predictive probability early termination plans for phase ii clinical
  trials.
\newblock {\em Biometrics}, pages 775--783, 1979.

\bibitem{hirakawa2018master}
Akihiro Hirakawa, Junichi Asano, Hiroyuki Sato, and Satoshi Teramukai.
\newblock Master protocol trials in oncology: review and new trial designs.
\newblock {\em Contemporary clinical trials communications}, 12:1--8, 2018.

\bibitem{hobbs2018bayesian}
Brian~P Hobbs and Rick Landin.
\newblock Bayesian basket trial design with exchangeability monitoring.
\newblock {\em Statistics in medicine}, 37(25):3557--3572, 2018.

\bibitem{hochberg1987multiple}
Yosef Hochberg.
\newblock Multiple comparison procedures.
\newblock {\em Wiley Series in Probability and Statistics}, 1987.

\bibitem{hochberg1988sharper}
Yosef Hochberg.
\newblock A sharper bonferroni procedure for multiple tests of significance.
\newblock {\em Biometrika}, 75(4):800--802, 1988.

\bibitem{hoffman2014no}
Matthew~D Hoffman, Andrew Gelman, et~al.
\newblock The no-u-turn sampler: adaptively setting path lengths in hamiltonian
  monte carlo.
\newblock {\em J. Mach. Learn. Res.}, 15(1):1593--1623, 2014.

\bibitem{holm1979simple}
Sture Holm.
\newblock A simple sequentially rejective multiple test procedure.
\newblock {\em Scandinavian journal of statistics}, pages 65--70, 1979.

\bibitem{hommel2011multiple}
Gerhard Hommel, Frank Bretz, and Willi Maurer.
\newblock Multiple hypotheses testing based on ordered p values—a historical
  survey with applications to medical research.
\newblock {\em Journal of Biopharmaceutical Statistics}, 21(4):595--609, 2011.

\bibitem{ibrahim2000power}
Joseph~G Ibrahim and Ming-Hui Chen.
\newblock Power prior distributions for regression models.
\newblock {\em Statistical Science}, pages 46--60, 2000.

\bibitem{ibrahim2015power}
Joseph~G Ibrahim, Ming-Hui Chen, Yeongjin Gwon, and Fang Chen.
\newblock The power prior: theory and applications.
\newblock {\em Statistics in medicine}, 34(28):3724--3749, 2015.

\bibitem{ibrahim2003optimality}
Joseph~G Ibrahim, Ming-Hui Chen, and Debajyoti Sinha.
\newblock On optimality properties of the power prior.
\newblock {\em Journal of the American Statistical Association},
  98(461):204--213, 2003.

\bibitem{inoue2005relationship}
Lurdes~YT Inoue, Donald~A Berry, and Giovanni Parmigiani.
\newblock Relationship between bayesian and frequentist sample size
  determination.
\newblock {\em The American Statistician}, 59(1):79--87, 2005.

\bibitem{irony2001choosing}
Telba~Z Irony and Gene~A Pennello.
\newblock Choosing an appropriate prior for bayesian medical device trials in
  the regulatory setting.
\newblock {\em American Statistical Association 2001 Proceedings of the
  Biopharmaceutical Section}, 1000:85, 2001.

\bibitem{jack2012bayesian}
J~Jack~Lee and Caleb~T Chu.
\newblock Bayesian clinical trials in action.
\newblock {\em Statistics in medicine}, 31(25):2955--2972, 2012.

\bibitem{james1992estimation}
William James and Charles Stein.
\newblock Estimation with quadratic loss.
\newblock In {\em Breakthroughs in statistics: Foundations and basic theory},
  pages 443--460. Springer, 1992.

\bibitem{jennison1999group}
Christopher Jennison and Bruce~W Turnbull.
\newblock {\em Group sequential methods with applications to clinical trials}.
\newblock CRC Press, 1999.

\bibitem{johnstone2010high}
Iain~M Johnstone.
\newblock High dimensional bernstein-von mises: simple examples.
\newblock {\em Institute of Mathematical Statistics collections}, 6:87, 2010.

\bibitem{jones2011bayesian}
Hayley~E Jones, David~I Ohlssen, Beat Neuenschwander, Amy Racine, and Michael
  Branson.
\newblock Bayesian models for subgroup analysis in clinical trials.
\newblock {\em Clinical Trials}, 8(2):129--143, 2011.

\bibitem{joseph1995sample}
Lawrence Joseph, David~B Wolfson, and Roxane~Du Berger.
\newblock Sample size calculations for binomial proportions via highest
  posterior density intervals.
\newblock {\em Journal of the Royal Statistical Society Series D: The
  Statistician}, 44(2):143--154, 1995.

\bibitem{joseph1995some}
Lawrence Joseph, David~B Wolfson, and Roxane~Du Berger.
\newblock Some comments on bayesian sample size determination.
\newblock {\em Journal of the Royal Statistical Society: Series D (The
  Statistician)}, 44(2):167--171, 1995.

\bibitem{kaplan1958nonparametric}
Edward~L Kaplan and Paul Meier.
\newblock Nonparametric estimation from incomplete observations.
\newblock {\em Journal of the American statistical association},
  53(282):457--481, 1958.

\bibitem{kass1996selection}
Robert~E Kass and Larry Wasserman.
\newblock The selection of prior distributions by formal rules.
\newblock {\em Journal of the American statistical Association},
  91(435):1343--1370, 1996.

\bibitem{katsis1999bayesian}
Athanassios Katsis and Blaza Toman.
\newblock Bayesian sample size calculations for binomial experiments.
\newblock {\em Journal of statistical planning and inference}, 81(2):349--362,
  1999.

\bibitem{kim2021handbook}
KyungMann Kim, Frank Bretz, Ying Kuen~K Cheung, and Lisa~V Hampson.
\newblock {\em Handbook of statistical methods for randomized controlled
  trials}.
\newblock CRC Press, 2021.

\bibitem{kopp2020power}
Annette Kopp-Schneider, Silvia Calderazzo, and Manuel Wiesenfarth.
\newblock Power gains by using external information in clinical trials are
  typically not possible when requiring strict type i error control.
\newblock {\em Biometrical Journal}, 62(2):361--374, 2020.

\bibitem{lachin2005review}
John~M Lachin.
\newblock A review of methods for futility stopping based on conditional power.
\newblock {\em Statistics in medicine}, 24(18):2747--2764, 2005.

\bibitem{lan2003over}
KK~Gordon Lan, John~M Lachin, and Oliver Bautista.
\newblock Over-ruling a group sequential boundary—a stopping rule versus a
  guideline.
\newblock {\em Statistics in medicine}, 22(21):3347--3355, 2003.

\bibitem{landau2013sample}
Sabine Landau and Daniel Stahl.
\newblock Sample size and power calculations for medical studies by simulation
  when closed form expressions are not available.
\newblock {\em Statistical methods in medical research}, 22(3):324--345, 2013.

\bibitem{lee2008predictive}
J~Jack Lee and Diane~D Liu.
\newblock A predictive probability design for phase ii cancer clinical trials.
\newblock {\em Clinical trials}, 5(2):93--106, 2008.

\bibitem{lee2022bayesian}
Se~Yoon Lee.
\newblock Bayesian nonlinear models for repeated measurement data: An overview,
  implementation, and applications.
\newblock {\em Mathematics}, 10(6):898, 2022.

\bibitem{lee2022gibbs}
Se~Yoon Lee.
\newblock Gibbs sampler and coordinate ascent variational inference: A
  set-theoretical review.
\newblock {\em Communications in Statistics-Theory and Methods},
  51(6):1549--1568, 2022.

\bibitem{lee2022use}
Se~Yoon Lee.
\newblock The use of a log-normal prior for the student t-distribution.
\newblock {\em Axioms}, 11(9):462, 2022.

\bibitem{lehmann1986testing}
Erich~Leo Lehmann, Joseph~P Romano, and George Casella.
\newblock {\em Testing statistical hypotheses}, volume~3.
\newblock Springer, 1986.

\bibitem{lesaffre2020bayesian}
Emmanuel Lesaffre, Gianluca Baio, and Bruno Boulanger.
\newblock {\em Bayesian methods in pharmaceutical research}.
\newblock CRC Press, 2020.

\bibitem{lewis2009bayesian}
Charles Lewis and Dorothy~T Thayer.
\newblock Bayesian decision theory for multiple comparisons.
\newblock {\em Lecture Notes-Monograph Series}, pages 326--332, 2009.

\bibitem{li2020optimality}
Xieran Li, Carolin Herrmann, and Geraldine Rauch.
\newblock Optimality criteria for futility stopping boundaries for group
  sequential designs with a continuous endpoint.
\newblock {\em BMC medical research methodology}, 20:1--8, 2020.

\bibitem{lindley1997choice}
Dennis~V Lindley.
\newblock The choice of sample size.
\newblock {\em Journal of the Royal Statistical Society: Series D (The
  Statistician)}, 46(2):129--138, 1997.

\bibitem{ma2019sampling}
Yi-An Ma, Yuansi Chen, Chi Jin, Nicolas Flammarion, and Michael~I Jordan.
\newblock Sampling can be faster than optimization.
\newblock {\em Proceedings of the National Academy of Sciences},
  116(42):20881--20885, 2019.

\bibitem{mayer2019simulation}
Cristiana Mayer, Inna Perevozskaya, Sergei Leonov, Vladimir Dragalin, Yili
  Pritchett, Alun Bedding, Alan Hartford, Parvin Fardipour, and Greg
  Cicconetti.
\newblock Simulation practices for adaptive trial designs in drug and device
  development.
\newblock {\em Statistics in Biopharmaceutical Research}, 11(4):325--335, 2019.

\bibitem{mcglothlin2018bayesian}
Anna~E McGlothlin and Kert Viele.
\newblock Bayesian hierarchical models.
\newblock {\em Jama}, 320(22):2365--2366, 2018.

\bibitem{moscicki2017drug}
Richard~A Moscicki and PK~Tandon.
\newblock Drug-development challenges for small biopharmaceutical companies.
\newblock {\em New England Journal of Medicine}, 376(5):469--474, 2017.

\bibitem{muller2023bayesian}
P~M{\"u}ller, NK~Chandra, and A~Sarkar.
\newblock Bayesian approaches to include real-world data in clinical studies.
\newblock {\em Philosophical Transactions of the Royal Society A},
  381(2247):20220158, 2023.

\bibitem{murray2010elliptical}
Iain Murray, Ryan Adams, and David MacKay.
\newblock Elliptical slice sampling.
\newblock In {\em Proceedings of the thirteenth international conference on
  artificial intelligence and statistics}, pages 541--548. JMLR Workshop and
  Conference Proceedings, 2010.

\bibitem{neuenschwander2009note}
Beat Neuenschwander, Michael Branson, and David~J Spiegelhalter.
\newblock A note on the power prior.
\newblock {\em Statistics in medicine}, 28(28):3562--3566, 2009.

\bibitem{neuenschwander2016robust}
Beat Neuenschwander, Simon Wandel, Satrajit Roychoudhury, and Stuart Bailey.
\newblock Robust exchangeability designs for early phase clinical trials with
  multiple strata.
\newblock {\em Pharmaceutical statistics}, 15(2):123--134, 2016.

\bibitem{neyman1933ix}
Jerzy Neyman and Egon~Sharpe Pearson.
\newblock Ix. on the problem of the most efficient tests of statistical
  hypotheses.
\newblock {\em Philosophical Transactions of the Royal Society of London.
  Series A, Containing Papers of a Mathematical or Physical Character},
  231(694-706):289--337, 1933.

\bibitem{nikolakopoulos2018dynamic}
Stavros Nikolakopoulos, Ingeborg van~der Tweel, and Kit~CB Roes.
\newblock Dynamic borrowing through empirical power priors that control type i
  error.
\newblock {\em Biometrics}, 74(3):874--880, 2018.

\bibitem{o1984procedures}
Peter~C O'Brien.
\newblock Procedures for comparing samples with multiple endpoints.
\newblock {\em Biometrics}, pages 1079--1087, 1984.

\bibitem{o1979multiple}
Peter~C O'Brien and Thomas~R Fleming.
\newblock A multiple testing procedure for clinical trials.
\newblock {\em Biometrics}, pages 549--556, 1979.

\bibitem{pawel2023normalized}
Samuel Pawel, Frederik Aust, Leonhard Held, and Eric-Jan Wagenmakers.
\newblock Normalized power priors always discount historical data.
\newblock {\em Stat}, 12(1):e591, 2023.

\bibitem{pocock1977group}
Stuart~J Pocock.
\newblock Group sequential methods in the design and analysis of clinical
  trials.
\newblock {\em Biometrika}, 64(2):191--199, 1977.

\bibitem{pocock1982interim}
Stuart~J Pocock.
\newblock Interim analyses for randomized clinical trials: the group sequential
  approach.
\newblock {\em Biometrics}, pages 153--162, 1982.

\bibitem{pocock2013clinical}
Stuart~J Pocock.
\newblock {\em Clinical trials: a practical approach}.
\newblock John Wiley \& Sons, 2013.

\bibitem{polack2020safety}
Fernando~P Polack, Stephen~J Thomas, Nicholas Kitchin, Judith Absalon,
  Alejandra Gurtman, Stephen Lockhart, John~L Perez, Gonzalo P{\'e}rez~Marc,
  Edson~D Moreira, Cristiano Zerbini, et~al.
\newblock Safety and efficacy of the bnt162b2 mrna covid-19 vaccine.
\newblock {\em New England journal of medicine}, 383(27):2603--2615, 2020.

\bibitem{proschan2000practical}
Michael~A Proschan and Myron~A Waclawiw.
\newblock Practical guidelines for multiplicity adjustment in clinical trials.
\newblock {\em Controlled clinical trials}, 21(6):527--539, 2000.

\bibitem{psioda2018bayesian}
Matthew~A Psioda and Joseph~G Ibrahim.
\newblock Bayesian design of a survival trial with a cured fraction using
  historical data.
\newblock {\em Statistics in Medicine}, 37(26):3814--3831, 2018.

\bibitem{psioda2019bayesian}
Matthew~A Psioda and Joseph~G Ibrahim.
\newblock Bayesian clinical trial design using historical data that inform the
  treatment effect.
\newblock {\em Biostatistics}, 20(3):400--415, 2019.

\bibitem{psioda2020bayesian}
Matthew~A Psioda and Xiaoqiang Xue.
\newblock A bayesian adaptive two-stage design for pediatric clinical trials.
\newblock {\em Journal of Biopharmaceutical Statistics}, 30(6):1091--1108,
  2020.

\bibitem{quintana2023design}
Melanie Quintana, Benjamin~R Saville, Matteo Vestrucci, Michelle~A Detry, Lori
  Chibnik, Jeremy Shefner, James~D Berry, Marianne Chase, Jinsy Andrews,
  Alexander~V Sherman, et~al.
\newblock Design and statistical innovations in a platform trial for als.
\newblock {\em Annals of Neurology}, 2023.

\bibitem{richeldi2022trial}
Luca Richeldi, Arata Azuma, Vincent Cottin, Christian Hesslinger, Susanne
  Stowasser, Claudia Valenzuela, Marlies~S Wijsenbeek, Donald~F Zoz, Florian
  Voss, and Toby~M Maher.
\newblock Trial of a preferential phosphodiesterase 4b inhibitor for idiopathic
  pulmonary fibrosis.
\newblock {\em New England Journal of Medicine}, 386(23):2178--2187, 2022.

\bibitem{rossignol2019heart}
Patrick Rossignol, Adrian~F Hernandez, Scott~D Solomon, and Faiez Zannad.
\newblock Heart failure drug treatment.
\newblock {\em The Lancet}, 393(10175):1034--1044, 2019.

\bibitem{rubin1998sample}
Donald~B Rubin and Hal~S Stern.
\newblock Sample size determination using posterior predictive distributions.
\newblock {\em Sankhy{\=a}: The Indian Journal of Statistics, Series B}, pages
  161--175, 1998.

\bibitem{saville2014utility}
Benjamin~R Saville, Jason~T Connor, Gregory~D Ayers, and JoAnn Alvarez.
\newblock The utility of bayesian predictive probabilities for interim
  monitoring of clinical trials.
\newblock {\em Clinical Trials}, 11(4):485--493, 2014.

\bibitem{schmidli2007bayesian}
Heinz Schmidli, Frank Bretz, and Amy Racine-Poon.
\newblock Bayesian predictive power for interim adaptation in seamless phase
  ii/iii trials where the endpoint is survival up to some specified timepoint.
\newblock {\em Statistics in medicine}, 26(27):4925--4938, 2007.

\bibitem{schmidli2014robust}
Heinz Schmidli, Sandro Gsteiger, Satrajit Roychoudhury, Anthony O'Hagan, David
  Spiegelhalter, and Beat Neuenschwander.
\newblock Robust meta-analytic-predictive priors in clinical trials with
  historical control information.
\newblock {\em Biometrics}, 70(4):1023--1032, 2014.

\bibitem{schmidli2020beyond}
Heinz Schmidli, Dieter~A H{\"a}ring, Marius Thomas, Adrian Cassidy, Sebastian
  Weber, and Frank Bretz.
\newblock Beyond randomized clinical trials: use of external controls.
\newblock {\em Clinical Pharmacology \& Therapeutics}, 107(4):806--816, 2020.

\bibitem{senn2007power}
Stephen Senn and Frank Bretz.
\newblock Power and sample size when multiple endpoints are considered.
\newblock {\em Pharmaceutical Statistics: The Journal of Applied Statistics in
  the Pharmaceutical Industry}, 6(3):161--170, 2007.

\bibitem{sherman2016real}
Rachel~E Sherman, Steven~A Anderson, Gerald~J Dal~Pan, Gerry~W Gray, Thomas
  Gross, Nina~L Hunter, Lisa LaVange, Danica Marinac-Dabic, Peter~W Marks,
  Melissa~A Robb, et~al.
\newblock Real-world evidence—what is it and what can it tell us.
\newblock {\em N Engl J Med}, 375(23):2293--2297, 2016.

\bibitem{simes1986improved}
R~John Simes.
\newblock An improved bonferroni procedure for multiple tests of significance.
\newblock {\em Biometrika}, 73(3):751--754, 1986.

\bibitem{snapinn2006assessment}
Steven Snapinn, Mon-Gy Chen, Qi~Jiang, and Tony Koutsoukos.
\newblock Assessment of futility in clinical trials.
\newblock {\em Pharmaceutical Statistics: The Journal of Applied Statistics in
  the Pharmaceutical Industry}, 5(4):273--281, 2006.

\bibitem{spertus2020interpreting}
John~A Spertus, Philip~G Jones, Alexander~T Sandhu, and Suzanne~V Arnold.
\newblock Interpreting the kansas city cardiomyopathy questionnaire in clinical
  trials and clinical care: Jacc state-of-the-art review.
\newblock {\em Journal of the American College of Cardiology},
  76(20):2379--2390, 2020.

\bibitem{spiegelhalter2004bayesian}
David~J Spiegelhalter, Keith~R Abrams, and Jonathan~P Myles.
\newblock {\em Bayesian approaches to clinical trials and health-care
  evaluation}, volume~13.
\newblock John Wiley \& Sons, 2004.

\bibitem{stallard2020comparison}
Nigel Stallard, Susan Todd, Elizabeth~G Ryan, and Simon Gates.
\newblock Comparison of bayesian and frequentist group-sequential clinical
  trial designs.
\newblock {\em BMC medical research methodology}, 20:1--14, 2020.

\bibitem{stunnenberg2018effect}
Bas~C Stunnenberg, Joost Raaphorst, Hans~M Groenewoud, Jeffrey~M Statland,
  Robert~C Griggs, Willem Woertman, Dick~F Stegeman, Janneke Timmermans, Jaya
  Trivedi, Emma Matthews, et~al.
\newblock Effect of mexiletine on muscle stiffness in patients with
  nondystrophic myotonia evaluated using aggregated n-of-1 trials.
\newblock {\em Jama}, 320(22):2344--2353, 2018.

\bibitem{thall2003hierarchical}
Peter~F Thall, J~Kyle Wathen, B~Nebiyou Bekele, Richard~E Champlin, Laurence~H
  Baker, and Robert~S Benjamin.
\newblock Hierarchical bayesian approaches to phase ii trials in diseases with
  multiple subtypes.
\newblock {\em Statistics in medicine}, 22(5):763--780, 2003.

\bibitem{thompson2021dynamic}
Laura Thompson, Jianxiong Chu, Jianjin Xu, Xuefeng Li, Rajesh Nair, and Ram
  Tiwari.
\newblock Dynamic borrowing from a single prior data source using the
  conditional power prior.
\newblock {\em Journal of Biopharmaceutical Statistics}, 31(4):403--424, 2021.

\bibitem{vickerstaff2019methods}
Victoria Vickerstaff, Rumana~Z Omar, and Gareth Ambler.
\newblock Methods to adjust for multiple comparisons in the analysis and sample
  size calculation of randomised controlled trials with multiple primary
  outcomes.
\newblock {\em BMC medical research methodology}, 19(1):1--13, 2019.

\bibitem{villa2018objective}
Cristiano Villa and Francisco~J Rubio.
\newblock Objective priors for the number of degrees of freedom of a
  multivariate t distribution and the t-copula.
\newblock {\em Computational Statistics \& Data Analysis}, 124:197--219, 2018.

\bibitem{walker1969asymptotic}
Andrew~M Walker.
\newblock On the asymptotic behaviour of posterior distributions.
\newblock {\em Journal of the Royal Statistical Society Series B: Statistical
  Methodology}, 31(1):80--88, 1969.

\bibitem{wang2019propensity}
Chenguang Wang, Heng Li, Wei-Chen Chen, Nelson Lu, Ram Tiwari, Yunling Xu, and
  Lilly~Q Yue.
\newblock Propensity score-integrated power prior approach for incorporating
  real-world evidence in single-arm clinical studies.
\newblock {\em Journal of biopharmaceutical statistics}, 29(5):731--748, 2019.

\bibitem{wang2002simulation}
Fei Wang and Alan~E Gelfand.
\newblock A simulation-based approach to bayesian sample size determination for
  performance under a given model and for separating models.
\newblock {\em Statistical Science}, pages 193--208, 2002.

\bibitem{wang2013evaluating}
Yanping Wang, Haoda Fu, Pandurang Kulkarni, and Christopher Kaiser.
\newblock Evaluating and utilizing probability of study success in clinical
  development.
\newblock {\em Clinical Trials}, 10(3):407--413, 2013.

\bibitem{wang2022bayesian}
Yu~Wang, James Travis, and Byron Gajewski.
\newblock Bayesian adaptive design for pediatric clinical trials incorporating
  a community of prior beliefs.
\newblock {\em BMC Medical Research Methodology}, 22(1):118, 2022.

\bibitem{wassmer2016group}
Gernot Wassmer and Werner Brannath.
\newblock {\em Group sequential and confirmatory adaptive designs in clinical
  trials}, volume 301.
\newblock Springer, 2016.

\bibitem{wilber2010comparison}
David~J Wilber, Carlo Pappone, Petr Neuzil, Angelo De~Paola, Frank Marchlinski,
  Andrea Natale, Laurent Macle, Emile~G Daoud, Hugh Calkins, Burr Hall, et~al.
\newblock Comparison of antiarrhythmic drug therapy and radiofrequency catheter
  ablation in patients with paroxysmal atrial fibrillation: a randomized
  controlled trial.
\newblock {\em Jama}, 303(4):333--340, 2010.

\bibitem{wilson2021bayesian}
Duncan~T Wilson, James~MS Wason, Julia Brown, Amanda~J Farrin, and Rebecca~EA
  Walwyn.
\newblock Bayesian design and analysis of external pilot trials for complex
  interventions.
\newblock {\em Statistics in medicine}, 40(12):2877--2892, 2021.

\bibitem{woodcock2017master}
Janet Woodcock and Lisa~M LaVange.
\newblock Master protocols to study multiple therapies, multiple diseases, or
  both.
\newblock {\em New England Journal of Medicine}, 377(1):62--70, 2017.

\bibitem{ye2022normalized}
Keying Ye, Zifei Han, Yuyan Duan, and Tianyu Bai.
\newblock Normalized power prior bayesian analysis.
\newblock {\em Journal of Statistical Planning and Inference}, 216:29--50,
  2022.

\bibitem{yue2018leveraging}
Lilly~Q Yue.
\newblock Leveraging real-world evidence derived from patient registries for
  premarket medical device regulatory decision-making.
\newblock {\em Statistics in Biopharmaceutical Research}, 10(2):98--103, 2018.

\bibitem{zhou2008bayesian}
Xian Zhou, Suyu Liu, Edward~S Kim, Roy~S Herbst, and J~Jack Lee.
\newblock Bayesian adaptive design for targeted therapy development in lung
  cancer—a step toward personalized medicine.
\newblock {\em Clinical Trials}, 5(3):181--193, 2008.

\bibitem{zimetbaum2012antiarrhythmic}
Peter Zimetbaum.
\newblock Antiarrhythmic drug therapy for atrial fibrillation.
\newblock {\em Circulation}, 125(2):381--389, 2012.

\bibitem{zucker1997combining}
DR~Zucker, CH~Schmid, MW~McIntosh, RB~D'agostino, HP~Selker, and J~Lau.
\newblock Combining single patient (n-of-1) trials to estimate population
  treatment effects and to evaluate individual patient responses to treatment.
\newblock {\em Journal of clinical epidemiology}, 50(4):401--410, 1997.

\end{thebibliography}
\end{document}